%% file: mainArxiv.tex
\documentclass[
a4paper
]{article}
\usepackage{authblk}
\usepackage{fullpage}
\usepackage[utf8]{inputenc}
\usepackage{amsmath}
\usepackage{amssymb}
\usepackage{amsthm}
\usepackage{mathtools}
\usepackage{fullpage}
\usepackage[labelfont=bf]{caption}
\renewcommand{\epsilon}{\varepsilon}
\input{macro}
\theoremstyle{plain}
\newtheorem{theorem}{Theorem}[section]
\newtheorem*{theorem*}{Theorem}

\newtheorem*{proposition*}{Proposition}
\newtheorem{lemma}[theorem]{Lemma}
\newtheorem*{lemma*}{Lemma}

\newtheorem*{corollary*}{Corollary}

\newtheorem*{claim*}{Claim}

\newtheorem*{conjecture*}{Conjecture}
\theoremstyle{definition}
\newtheorem{definition}[theorem]{Definition}
\newtheorem*{definition*}{Definition}
\theoremstyle{remark}

\newtheorem*{observation*}{Observation}
\newtheorem{example}[theorem]{Example}
\newtheorem*{example*}{Example}

\newif\ifArxiv
\Arxivtrue
\begin{document}
\title{{Convolution Sum-Product Queries}}
\author[1]{Kyle Deeds}
\author[2,3]{Timo Camillo Merkl}
\author[4]{Dan Suciu}
\affil[1]{Boston University, United States, \texttt{kdeeds@bu.edu}}
\affil[2]{Max Planck Institute for Software Systems, Germany}
\affil[3]{TU Wien, Austria, \texttt{timo.merkl@tuwien.ac.at}}
\affil[4]{University of Washington, United States, \texttt{suciu@cs.washington.edu}}
\date{} 
\maketitle
\begin{abstract}
\input{0-abstract}
\end{abstract}
\input{1-introduction}
\input{2-background}
\input{3-cspq}
\input{4-basic-algorithm}
\input{5-tree-decompositions}
\input{6-query-evaluation}
\input{7-other-fields}

\input{8-conclusion}
	\bibliographystyle{abbrv}
	\bibliography{biblio}
	\appendix
	\onecolumn
\input{AA-3}
\input{AA-4}
\input{AA-5}
\input{AA-6}
\input{AA-7}
\input{AA-linearEquations}
\end{document}

%% file: macro.tex
\usepackage{graphicx}
\usepackage{amsmath}
\usepackage{amsfonts}
\usepackage{subcaption}
\usepackage{footmisc}
\usepackage{algorithm}
\usepackage[noend]{algpseudocode}
\algrenewcommand\algorithmicdo{}
\algrenewcommand\algorithmicthen{}
\usepackage{booktabs}
\usepackage{tikz}
\usepackage{bm}
\usepackage{xspace}
\usepackage{xcolor}
\usepackage[normalem]{ulem}
\usepackage{stmaryrd}
\usepackage{thm-restate}
\usepackage{pifont}
\usepackage{wrapfig}
\usepackage{oubraces}
\usepackage{nicefrac}
\usepackage{stmaryrd}
\allowdisplaybreaks
\renewcommand{\phi}{\varphi}
\usepackage[normalem]{ulem}
\usepackage{microtype}

\makeatletter
\newcommand\RedeclareMathOperator{
  \@ifstar{\def\rmo@s{m}\rmo@redeclare}{\def\rmo@s{o}\rmo@redeclare}
}
\newcommand\rmo@redeclare[2]{
  \begingroup \escapechar\m@ne\xdef\@gtempa{{\string#1}}\endgroup
  \expandafter\@ifundefined\@gtempa
     {\@latex@error{\noexpand#1undefined}\@ehc}
     \relax
  \expandafter\rmo@declmathop\rmo@s{#1}{#2}}
\newcommand\rmo@declmathop[3]{
  \DeclareRobustCommand{#2}{\qopname\newmcodes@#1{#3}}
}
\@onlypreamble\RedeclareMathOperator
\makeatother
\newcommand{\set}[1]{\{#1\}}                    
\newcommand{\setof}[2]{\{{#1}\mid{#2}\}}        
\newcommand{\bigsetof}[2]{\left\{{#1}\ \biggr|\ {#2}\right\}}        

\DeclareMathOperator{\Span}{{span}}

\RedeclareMathOperator{\dim}{{dim}}

\newcommand{\B}{\mathbb B} 
\newcommand{\Q}{\mathbb Q} 
\newcommand{\N}{\mathbb N} 
\newcommand{\Z}{\mathbb Z} 
 
\renewcommand{\S}{\mathbb S}

\newcommand{\calI}{\mathcal I}

\newcommand{\calR}{\mathcal R}

\newcommand{\eat}[1]{}
\renewcommand{\epsilon}{\varepsilon}

\newcommand{\var}{\mathit{var}}

\newcommand{\supp}{\mathit{supp}} 
\newcommand{\arity}{\mathit{ar}}

\newcommand{\tw}{\textit{tw}\xspace}
\newcommand{\ctw}{\textit{ctw}\xspace}
\newcommand{\atw}{\textit{atw}\xspace}
\newcommand{\ftw}{\textit{fhw}\xspace}
\newcommand{\fctw}{\textit{fcw}\xspace}

\newcommand{\0}{\mathbf{0}}
\newcommand{\1}{\mathbf{1}}

\newcommand{\adom}{\text{adom}}
\newcommand{\field}{\mathbb{F}}
\newcommand{\sring}{\mathbb{S}}
\newcommand{\inner}[2]{\langle #1 \mid #2 \rangle}
\newcommand{\sem}[1]{\llbracket{}{#1}\rrbracket{}}

\newcommand{\ceil}[1]{\lceil#1\rceil}
\usetikzlibrary{fit}
\usetikzlibrary{calc}
\usetikzlibrary{trees}

\newcommand{\activespace}{V^{\text{act}}}

%% file: 0-abstract.tex
We study query evaluation for an extension of sum-product queries (SPQ) that allows atoms with linear combinations of variables (e.g. $A(2X+3Y-Z, Y+Z)$), which we call convolution sum-product queries (CSPQs).
These queries arise in both practical settings
(e.g. image processing workloads) and theoretical ones (e.g. the
$(\min,+)$-convolution and $k$-SUM conjectures), and 
capture SPQs with added linear equality constraints. 
While prior work has
considered the impact of linear \textit{in-} and \textit{dis-}equality constraints on
query evaluation, the techniques developed in that setting are
asymptotically sub-optimal for CSPQs. 
To address this, we describe
several evaluation algorithms for CSPQs that leverage linear algebra techniques like rank analysis,
quotient spaces, and variable substitution. First, we adapt Worst-Case
Optimal Joins to CSPQs, achieving a runtime similar in spirit to that for
conjunctive queries. Then, we extend the definition of tree
decompositions (TDs) to CSPQs, and describe a factorized execution.  
In this extension, linear combinations are first-class citizens and play the same role as variables in traditional TDs.
We
define three width measures that bound the complexity of this
execution with respect to the size of the domain values, of the active
domain, and of the relation's support. Lastly, we show how these
methods can be applied to arbitrary fields beyond the rationals.

%% file: 1-introduction.tex
\section{Introduction}
\label{sec:intro}
The literature on query evaluation has extended conjunctive queries (CQs) in a variety of ways to capture new problems of interest. Sum-product queries (SPQs) generalize CQs from the Boolean semiring to an arbitrary semiring~\cite{DBLP:conf/pods/GreenKT07}, allowing them to express operations such as matrix multiplication, while CQs with disequalities ($\not =$) enabled the analysis of constraint satisfaction and subgraph-isomorphism problems within the CQ paradigm.
In this work, we investigate an extension of conjunctive queries with linear expressions over variables in the query.  For example, the convolution of two arrays $A, B$ can be expressed as $C(X) = \sum_Y A(Y)\cdot B(X-Y)$.  We call this class of queries \emph{Convolution Sum-Product Queries} (CSPQs).  We are motivated by a large class of applications, in a variety of domains, which already use linear expressions.  For example, SQL  allows arbitrary expressions over attributes in the WHERE clause.  Query optimizers based on dynamic programming already incorporate joins involving linear expressions, like $R_1.A+R_2.B = R_4.D+R_5.E-R_3.C$, see~\cite{DBLP:conf/sigmod/MoerkotteN08}.  Linear expressions on the variables also occur in a large class of important workloads in tensor, image, and signal processing. For example, matrix convolution is a key operation in image processing pipelines and is not expressible via standard sum-product queries. In this operation, a smaller pattern matrix is iteratively applied to patches of the larger image matrix $A$ in the following manner:
\begin{align*}
    B(i,j)= &\sum_{l,k}A(i+l,j+k)\cdot \text{Pattern}(l,k)
\end{align*}
By selecting an appropriate pattern matrix, this operation is the core computational primitive in image processing tasks such as edge detection and blurring, scientific simulations that use stencil computations, and in the convolution layers of neural networks~\cite{DBLP:conf/sc/DattaMVWCOPSY08,getreuer2009contour, DBLP:conf/ics/HolewinskiPS12, DBLP:conf/nips/KrizhevskySH12, DBLP:conf/pldi/Ragan-KelleyBAPDA13, DBLP:journals/toms/LuporiniLLKWHYK20}.
Several languages or formalisms already consider linear expressions, but typically as second-class citizens, represented as predicates separate from the main body of the query. 
Some fine-grained complexity conjectures are expressable in this way~\cite{DBLP:journals/talg/CyganMWW19,DBLP:journals/corr/abs-2404-04369}.  For example, the 3SUM conjecture states that, given three sets $A,B,C$, of size $n$, checking if there exists $i \in A,j \in B,k \in C$ so that $i+j=k$ cannot be done in time $O(n^{2-\varepsilon})$ and can be expressed as:
\begin{align*}
  Q() = & \bigvee_{i,j,k} A(i)\wedge B(j) \wedge C(k) \; \text{ such that } i+j=k
\end{align*}
while the MinConv conjecture states that the following cannot be evaluated in time $O(n^{2-\varepsilon})$:
\begin{align*}
  Q(k) = &  \min_{i,j} (A(i)+ B(j)) \; \text{ such that } i+j=k
\end{align*}
In this paper, we introduce Convolution Sum-Product Queries, and describe new algorithms for evaluating them.  CSPQs extend Sum-Product Queries with linear expressions, while keeping them first-class citizens. The linear expressions occur directly in the query's atoms, for example, the 3SUM and the MinCov queries above are written as follows in our formalism:
\begin{align}
  Q() \leftarrow {} & A(X) \wedge B(Y) \wedge C(X+Y) & Q(K) \leftarrow{} & A(I)+ B(K-I) \label{eq:3sum}
\end{align}
We define the semantics of CSPQs, showing that they have always a well-defined answer.  This is not obvious a priori.  For example, the query $Q() = R(X+Y)$ would normally be interpreted as $\sem{Q}= \sum_{x,y} R(x+y)$, which results in an infinite sum.  We use the concept of \emph{quotient space} from linear algebra to ensure that every CSPQs has a well-defined semantics.  In Sec.~\ref{sec:cspq} we introduce CSPQs formally, define their semantics, and discuss some of their properties.
Next, we describe several algorithms for evaluating a CSPQ in Sec.~\ref{sec:basicAlgorithm}.  We first describe a simple algorithm that runs in time $O(|\adom(D)|^k)$, where $\adom(D)$ is the active domain of the (support of the) input database, and $k$ is the dimension of the query's \emph{span}, (defined in Sec.~\ref{sec:cspq}). This algorithm is best suited for \emph{dense} databases, as found, for example, in processing dense tensors.  Second, we describe an algorithm, which is best suited for sparse databases, which runs in time $O(|D|^{\rho^*(Q)})$.  Here $\rho^*(Q)$ is the fractional edge covering number of a CSPQ, and is defined as a significant generalization of the traditional fractional edge covering number of a conjunctive query, or a hypergraph.  This algorithm is quite non-obvious.  For example, consider the following query, which is a slight variation of the 3SUM query above (Eq.~\eqref{eq:3sum}):
\begin{align*}
  Q() \leftarrow {}& A(X,Y) \land B(Y,Z) \land C(X+Y, Y+Z)
\end{align*}
We show that $\rho^*(Q)=3/2$, and therefore $Q$ can be evaluated in time $O(|D|^{3/2})$, were $|D|$ is the size of the database (number of tuples in $A,B,C$).  Although our definition of $\rho^*$ (in Def.~\ref{def:fractional:edge:cover}) looks very different from the traditional one, we show that it coincides with the traditional one when the CSPQ is a traditional conjunctive query.
Finally, we adapt tree decompositions to CSPQs.  Similarly to $\rho^*$, the definition of a tree decomposition requires a careful extension to linear spaces: we present the definition of a \emph{Convolution Tree Decomposition} (CTD) in Sec.~\ref{sec:decomposition}, and describe factorized evaluation algorithms based on CTDs in Sec.~\ref{sec:algorithm}. We also extend our techniques to arbitrary fields in Sec.~\ref{sec:otherFields}, then conclude.
{\bf Related Work.} Other extensions of conjunctive queries with numerical operations or predicates can be found in the literature.  In all cases the arithmetic operations are considered second-class citizens, designed to be included in predicates, separate from the main body of the conjunctive query.  For example, Arithmetic Constrains are simple comparisons $X<Y$ over a dense domain; Klug~\cite{DBLP:journals/jacm/Klug88}, Afrati et al.~\cite{DBLP:conf/edbt/AfratiLM04,DBLP:journals/tcs/AfratiLM06}, and Brisaboa et al.~\cite{DBLP:conf/adbis/BrisaboaHPP98} study the containment and the query rewriting using views problems under this extension.  Kiani and Shiri~\cite{DBLP:conf/otm/KianiS05,DBLP:conf/issads/KianiS05,DBLP:conf/ihis/KianiS05} study the query answering using views problem for conjunctive queries extended with linear equality constraints (which are equivalent to our CSPQs).  Srivastava~\cite{DBLP:journals/amai/Srivastava93}, and Ibarra and Su~\cite{DBLP:journals/jcss/IbarraS99} study the containment (subsumption) problem for queries with linear constraints, like $2X+3Y < 7$.
The query evaluation problem has been considered in several papers.  Extensions of conjunctive queries with general constraints has been discussed in~\cite{DBLP:conf/pods/KhamisCM0NOS19,DBLP:journals/tods/KhamisCMNNOS20}, and in~\cite{DBLP:journals/sigmod/0001023,DBLP:journals/tods/WangY26}.  These references apply a technique due to Willard to handle complex predicates, such as $X^2+XY > 5Z$, but since they target more general queries than CSPQs, their runtimes are worse than what we propose in this paper: see the discussion in Sec.~\ref{sec:basicAlgorithm}.  Koutris et al.~\cite{DBLP:journals/mst/KoutrisMRS17} and Abo Khamis et al.~\cite{DBLP:conf/icdt/KhamisNOS19} introduce specialized algorithms for evaluating conjunctive queries with disequality predicates, $X \neq Y$.  In contrast to these lines of work, the algorithms we present in this paper treat the linear expressions as first class citizens, and show how to apply worst-case optimal joins~\cite{DBLP:journals/sigmod/NgoRR13,DBLP:conf/icdt/Veldhuizen14,DBLP:conf/pods/NgoPRR12} and tree decompositions~\cite{DBLP:journals/jcss/GottlobLS02}, by combining them with core ideas in linear algebra.
Constraint databases, which have been heavily studied in the literature~\cite{DBLP:books/sp/Kuper00}, are unrelated to our paper.  They consider input relations that are infinite, defined implicitly using linear or algebraic constraints.  In our paper we always assume the support of the input relations is finite.
Beyond databases, expressions equivalent to our CSPQs over dense tensors have been intensively study by the compilers community. This work has resulted in the polyhedral model, which can represent various transformations of loops over arrays with linear index expressions, and has been leveraged in several widely used high-performance compilers for tensor algebra~\cite{DBLP:journals/taco/TavarageriHAKGU21,DBLP:conf/cgo/BaghdadiRRSAZSK19,DBLP:conf/osdi/ChenMJZYSCWHCGK18,DBLP:journals/taco/VerdoolaegeJCGTC13}. Our work applies similar linear transformations but goes further by analyzing them also in the sparse setting and extending them to leverage tree decompositions style factorizations.
\textbf{Summary of Results.}
\vspace{-.2em}
\begin{itemize}
    \item We extend sum-product queries to allow linear expressions on the variables, keeping them as first-class citizens (see Sec.~\ref{sec:cspq}).
    \item We extend the notion of tree decompositions to CSPQs and prove that these induce a factorization of the queries.
    (see Sec.~\ref{sec:decomposition})
    \item We analyze the complexity of CSPQs.
    We provide basic algorithms (essentially for full queries) and modify worst-case optimal join algorithms to achieve AGM-style bounds (see Sec.~\ref{sec:basicAlgorithm}). 
    Then, we apply these algorithms to the factorizations induced by our novel extension to tree decomposition.
    The complexity of the evaluation algorithm is characterized by three new width parameters that capture different input density regimes (see Sec.~\ref{sec:algorithm}).
\end{itemize}
\ifArxiv
Most proof are deferred to the appendix and in  Appendix~\ref{app:le} we explain precisely how CSPQs capture SPQs with added linear equality constraints.
\else
Due to lack of space, some proof are deferred to the appendix or to the full version of this paper~\cite{arxiv}.
\fi

%% file: 2-background.tex
\section{Background}
\label{sec:background}
In this section we introduce the notation used throughout the paper. 
If $\bm a=(a_1,\dots,a_n)$ is a tuple, then we write $|\bm a|:=n$, and denote by $f(\bm a)$ the tuple $(f(a_1), \ldots, f(a_n))$, where $f$ is any function.  
With some abuse, we will blur the distinction between a tuple $\bm a$ and the set of its values.  For example, if $\bm a=(a_1,\ldots,a_n)$ and $\bm b=(b_1,\ldots,b_m)$, then $\bm a \subseteq \bm b$ means $\set{a_1,\ldots,a_n} \subseteq \set{b_1,\ldots,b_m}$.
{\bf Databases and Sum-Product Queries.} We follow the definition of K-relations from~\cite{DBLP:conf/pods/GreenKT07}, where a relation is a function from tuples to a commutative semiring.  Throughout this paper, the domain of the tuples are the rational numbers $\Q$ (unless otherwise stated), and the commutative semiring is denoted by $\S=(\S,\oplus, \otimes, \bm 0, \bm 1)$.  Thus, if $R$ is a relation symbol of arity $r:=\arity(R)$, then an \emph{$\S$-relation instance}, or simply $\S$-relation, is a function $R^D\colon \Q^{r}\rightarrow\S$ with finite support, where the support is $\supp(R^D):=\{\bm x\in \Q^{r}\mid R^D(\bm x)\neq \0\}$.  We say that tuples $\bm x\in \supp(R^D)$ are \textit{annotated} by $R(\bm x)$.  We assume that $R^D$ is represented as a list of tuple-annotation pairs, of size $|R^D| := |\supp(R^D)|$.  
The active domain $\adom(R)\subseteq \Q$ of $R$ is the set of all domain elements that appear in the support.
A \emph{database instance} $D$ over a vocabulary $R_1, \ldots, R_m$ is tuple of $\S$-relations, $D = (R_1^D, \dots, R_m^D)$.  Its active domain is $\adom(D)=\bigcup_i \adom(R_i^D)$ and its size is $|D|:=\sum_i|R_i^D|$.
If $R$ is a relation symbol of arity $r$, and $\bm X$ is an $r$-tuple of variables, then the expression $R(\bm X)$ is called an \emph{atom}.  A \emph{sum-product query (SPQ)} is an expression of the form:
    \begin{align}
      Q(\bm X_0) \leftarrow R_1(\bm X_1)\otimes\ldots\otimes R_m(\bm X_m) \label{eq:spq}
    \end{align}
    where $R_1(\bm X_1),\ldots,R_m(\bm X_m)$ are atoms, and $\bm X_0 \subseteq \bigcup_{i=1,m} \bm X_i$ (called the \emph{safety condition}).  We write $\var(Q):=\bigcup_{i=1,m} \bm X_i$ for set of all variables in $Q$, and call $\bm X_0$ the \emph{head variables} of $Q$.  The \emph{arity of $Q$} is $|\bm X_0|$.  A \emph{conjunctive query} (CQ) is an SPQ interpreted in the Boolean semiring $\B$: we write $\wedge$ for $\otimes$.
    Fix a database instance $D$ and query $Q$.  A \emph{valuation} is a function $v : \var(Q) \rightarrow \Q$ that maps each variable to a value in the domain; $v$ maps each atom $R_i(\bm X_i)$ to a semiring value, namely to $R^D_i(v(\bm X_i))$.  The answer of $Q$ over $D$ is the $\S$-relation $\sem{Q}^D$ obtained by summing over all valuations:
\begin{align}
\forall \bm x\in &\Q^{|\bm X_0|}:&    \sem{Q}^D(\bm x) := & \bigoplus_{v\colon \var(Q)\rightarrow \Q , \text{ s.t. } v(\bm X_0)=\bm x}\left(\bigotimes_{i=1,m} R^D_i(v(\bm X_i))\right).
    \label{eq:spqsem}
\end{align}
We sometimes omit the superscripts $D$ and write simply $\sem{Q}$.  Note that $\sem{Q}$ has finite support.
{\bf Linear Algebra.} We fix a $d$-dimensional vector space $V$ over the rationals $\Q$, and write $\dim(V):=d$ for its dimension.  For any set $A\subseteq V$ we write $\Span(A)$ for the smallest vector subspace containing $A$. Given two subspaces $V_1,V_2\subseteq V$, $V_1+V_2$ denotes $\Span (V_1 \cup V_2)$.  Further, for a subspace $P\subseteq V$, $V/P$ is the quotient space of $V$ modulo $P$, which consists of equivalence classes where two vectors $v,v'\in V$ are in the same class when $v-v'\in P$.  Its dimension is $\dim(V/P)=\dim(V)-\dim(P)$.  We write $[v]_P$ for the equivalence class of $v$, or just $[v]$ when $P$ is clear from the context.
When $V=\Q^d$, then we denote by  $\inner{u}{v}$ the dot product of vectors  $u,v \in V$.  Given two sets of vectors $A, B \subseteq V$, we write $A \perp B$ when $\inner{x}{y}=0$ for all $x \in A, y \in B$, and write $A^\bot := \setof{v\in V}{\forall x\in A: x \perp v}$ for the subspace orthogonal to $A$.  Its dimension is $\dim(A^\bot) = d - \dim(\Span A)$. 
If $v \in V$ is fixed, then $\inner{-}{v}$ is a linear function $V \rightarrow \Q$, defined by $u \mapsto \inner{u}{v}$.  We extend this function to a tuple of vectors, $\inner{-}{v}: V^n \rightarrow \Q^n$: if $\bm A =(u_1,\dots,u_n) \in V^n$ then $\inner{\bm A}{v} := (\inner{u_1}{v}, \dots, \inner{u_n}{v}) \in \Q^n$.  If $B \subseteq V$ then we define $\inner{-}{[v]_{B^\bot}}: \Span B \rightarrow \Q$ as follows: for all $u \in \Span B$, $\inner{u}{[v]_{B^\bot}}:=\inner{u}{v}$.
{This is well defined in the following sense: if $[v]_{B^\bot}=[v']_{B^\bot}$ then $v-v' \in B^\bot$; since $u \in \Span B$, it follows that $\inner{u}{v-v'}=0$, or equivalently $\inner{u}{v}=\inner{u}{v'}$, proving that the definition of $\inner{u}{[v]_{B^\bot}}$ does not depend on the representative from $[v]_{B^\bot}$.}
{\bf Computational Model.}
We assume the RAM model where each cell holds a single domain or semiring element. 
Every operation in classical or semiring arithmetic is assumed to take constant time. 
We omit the single log factor associated with looking up tuples from relations or sorting them and write upper bounds as $O$ rather than $\tilde O$. 
We focus solely on the data complexity of queries in this work, so we assume queries to be fixed.

%% file: 3-cspq.tex
\section{Convolution Sum-Product Queries}
\label{sec:cspq}
In this section, we define \emph{convolution sum-product queries} (CSPQs).
The goal is to extend SPQs to allow atoms with linear expressions, like in Sec.~\ref{sec:intro}.  
For example, consider the atom $R(X+2Y,2X+Z)$.  
Using dot products we can write $X+2Y=\inner{(1,2,0)}{(X,Y,Z)}$ and $2X+Z=\inner{(2,0,1)}{(X,Y,Z)}$, and the atom becomes $R(\inner{\bm A}{(X,Y,Z)})$, where $\bm A$ is the tuple of vectors $(1,2,0)$ and $(2,0,1)$. 
For conciseness, we omit the variables and abbreviate the atom by $R(\bm A)$. 
In general, if $R$ is any relation symbol of arity $r = \arity(R)$, then we define a \emph{convolution atom} to be an expression $R(\bm A)$, where $\bm A$ is a tuple of $r$ vectors in $\Q^d$ (or, equivalently, a $d \times r$ matrix).
{Here, $d$ is the total number of variables.}
{
A convolution sum-product query is defined as follows (we first give the full definition and then delve into the intuition):
}
\begin{definition} \label{def:cspq}
  A \emph{convolution sum-product query} (CSPQ) over $\mathbb{Q}^d$ is an expression of the form:
    \begin{align}
    Q(\bm A_0)\leftarrow {} &  R_1(\bm A_1) \otimes \dots \otimes R_m(\bm A_m) \label{eq:cspq}
    \end{align}
    where each $R_i(\bm A_i)$, $i=1,m$ is a convolution atom, and the following \emph{safety condition} holds:\footnote{Recall that we identify a tuple $\bm A_0 = (v_1, \ldots, v_k)$ with a set $\set{v_1,\ldots, v_k}$: the safety condition states $v_j \in \Span(Q)$, $\forall j$.} $\bm A_0 \subseteq \Span(Q)$, where the \emph{span of $Q$} is $\Span(Q) := \Span(\bigcup_{i=1,m} \bm A_i)$.  The \emph{arity} of $Q$ is $\arity(Q) := |\bm A_0|$.
\end{definition}
{For a brief example, the query $Q() \leftarrow A((1,0)) \wedge B((0,1)) \wedge C((1,1))$ is the same as the 3SUM query in Eq.~\eqref{eq:3sum}; we will return to it shortly.}
When defining a CSPQ we may omit the head attributes $\bm A_0$ when they are implicit and the focus is on the body structure. 
Given a CSPQ over $V := \Q^d$, we associate the following subspaces: the \emph{null space} is $(\Span(Q))^\perp$, and the \emph{active space} is $\activespace := \Q^d/(\Span(Q))^\perp$.  Notice that $\dim(\Span(Q))=\dim(\activespace)$.  For a convolution atom $R(\bm A)$, we will write, with some abuse, $\Span(R)$ for $\Span(\bm A)$, and write $\Span(\calR)=\bigcup_{R \in \calR}(\Span(R))$ for a set of atoms $\calR$.
Note the similarity of an SPQ given in Eq.~\eqref{eq:spq} and a CSPQ given in Eq.~\eqref{eq:cspq}:
Vectors in CSPQs play the role of variables in SPQs.
This will be the running theme of this paper.
A CSPQ is interpreted in a semiring $\S$ as follows:
\begin{definition}\label{def:semantics}
  Let $Q$ be a CSPQ given by Eq.~\eqref{eq:cspq}, and let $\activespace$ be its active space. The answer of $Q$ over an $\S$-database instance $D=(R_1^D, \ldots, R_m^D)$ is the $\S$-relation $\sem{Q}^D$:
    \begin{align}
      \forall \bm x \in & \Q^{|\bm A_0|}:&     \sem{Q}^D(\bm x):=\bigoplus_{v\in \activespace\colon \inner{\bm A_0}{v} = \bm x}\bigotimes_{i=1,m}R^D_i(\inner{\bm A_i}{v}). \label{eq:semantic}
    \end{align}
    As usual, we omit the superscript $D$ when clear from the context, and write simply $\sem{Q}$.
\end{definition}
The notation $\inner{\bm A_i}{v}$ in Eq.~\eqref{eq:semantic} is well-defined for $v \in \activespace = V/(\Span(Q))^\perp$, because $\bm A_i \perp (\Span(Q))^\perp$. We prove in Appendix~\ref{app:sec3} that $\sem{Q}^D$ has finite support and that there are only finitely many non-$\0$ terms in the summation~\eqref{eq:semantic}:
\begin{restatable}{lemma}{lemsound}
\label{lem:sound}
$\sem{Q}^D$ is an $\S$-relation (meaning: it is well-defined and has finite support).
\end{restatable}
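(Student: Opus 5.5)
The proof has two parts. First we check that each summand in Eq.~\eqref{eq:semantic} is well defined on $\activespace$. Then we show that only finitely many $v\in\activespace$ give a non-$\0$ summand. Finiteness of the support of $\sem{Q}^D$ follows immediately, because every $\bm x$ with $\sem{Q}^D(\bm x)\neq\0$ equals $\inner{\bm A_0}{v}$ for some $v$ with a non-$\0$ summand.

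\textbf{Well-definedness.} Let $P:=\Span(Q)$. For $u\in P$ and $v,v'$ with $v-v'\in P^\perp$ we have $\inner{u}{v}=\inner{u}{v'}$, which is exactly the remark after the definition of $\inner{-}{[v]_{B^\bot}}$. Every vector of $\bm A_i$ lies in $P$ by definition, and every vector of $\bm A_0$ lies in $P$ by the safety condition. Hence $\inner{\bm A_i}{[v]_{P^\perp}}$ and $\inner{\bm A_0}{[v]_{P^\perp}}$ do not depend on the representative, so both the summation index condition and the summands are well defined.

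\textbf{Finiteness.} Choose a basis $b_1,\dots,b_k$ of $P$ from among the vectors in $\bigcup_i \bm A_i$; this is possible because those vectors span $P$. Consider the map $\phi:\activespace\to\Q^k$ given by $[v]\mapsto(\inner{b_1}{v},\dots,\inner{b_k}{v})$. We claim $\phi$ is injective. Suppose $\phi([v])=\phi([v'])$. Then $v-v'$ is orthogonal to every $b_j$, hence to $P$, so $[v]=[v']$. (In fact $\phi$ is a linear bijection, since $\dim\activespace=k$, but injectivity is all we need.)

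Now suppose the summand for $[v]$ is non-$\0$. Then $R_i^D(\inner{\bm A_i}{v})\neq\0$ for every $i$, because $\0$ is absorbing for $\otimes$. So $\inner{\bm A_i}{v}\in\supp(R_i^D)$, and every coordinate of every $\inner{\bm A_i}{v}$ lies in $\adom(D)$. Each $b_j$ is a component of some $\bm A_i$, so each $\inner{b_j}{v}\in\adom(D)$. Therefore $\phi([v])\in\adom(D)^k$, which is a finite set because each $\supp(R_i^D)$ is finite. By injectivity of $\phi$, at most $|\adom(D)|^k$ classes $[v]$ give non-$\0$ summands. This makes the $\oplus$-sum in Eq.~\eqref{eq:semantic} a finite sum, whereas it is a priori infinite over $\activespace$; that is the main point of the lemma.

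\textbf{Main obstacle.} The only subtle step is the one just handled: the sum ranges over the infinite space $\activespace$, so we must control it. The quotient by $P^\perp$ is exactly what makes $\phi$ injective. Without the quotient, $v$ could vary freely in the null space and the sum would be infinite, as in the $R(X+Y)$ example. With the quotient in place, the argument reduces to bounding the images of the $b_j$ by the active domain.
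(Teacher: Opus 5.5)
Your proof is correct and uses the same idea as the paper: a non-$\0$ summand forces every atom's argument into the finite support of its relation, and the quotient by $\Span(Q)^\perp$ makes the evaluation map on $\activespace$ injective. The paper argues by contradiction, applying the pigeonhole principle to the full tuple $(\inner{\bm A_1}{u},\dots,\inner{\bm A_m}{u})$ and handling the fixed-$\bm x$ sum and the support separately. You instead restrict to a basis taken from the atoms and inject directly into $\adom(D)^k$, which settles both parts at once and gives the explicit bound $|\adom(D)|^k$.
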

{
\begin{proof}[Proof Sketch.]
    Every non-$\0$ term on the right hand side of Equation~\eqref{eq:semantic} for some
    $\sem{Q}(\bm x)$ has all its factors non-$\0$, since $\0$ is absorbing.
    Hence each combination
    $(\inner{\bm A_1}{v},\ldots,\inner{\bm A_m}{v})$ occurring in a
    non-$\0$ term belongs to the finite product of the relation supports.
    Only finitely many such combinations are possible. If two corresponding
    elements $v,v'\in\activespace$ yield the same combination, then
    $v-v'$ is in $\Span(Q)^\perp=\Span(\bm A_1\cup\dots\cup\bm A_m)^\perp$, so $v=v'$ in $\activespace$.
    Thus the sum defining $\sem{Q}(\bm x)$ has finitely many non-$\0$ terms.
    Moreover, $\sem{Q}(\bm x)$ is non-$\0$ only if there is at least some non-$\0$ term.
    However, there are only finitely many $v\in \activespace$ that lead to a non-$\0$ term and $v$ determines $\bm x$ through $\inner{\bm A_0}{v}=\bm x$.
\end{proof}
}
As explained above, the vectors $\bm A_i$ in an atom $R_i(\bm A_i)$ indicate which variables are used in the atom, and with what multiplicities.  To make the analogy even clearer, let $X_1,\dots,X_d$ be $d$ variables, then an atom $R_i(\bm A_i)$ represents the atom $R_i(\inner{\bm A_i}{(X_1,\ldots,X_d)})$.  The expression $R_i(\inner{\bm A_i}{v})$ in Eq.~\eqref{eq:semantic} with $v=[(v_1,\dots,v_d)]$ should be interpreted as applying the valuation $\{X_j\mapsto v_j\mid j=1,d\}$ to the variables, as in the semantics of traditional SPQs given in Eq.~\eqref{eq:spqsem}.
When $\Span(Q)=\Q^d$, then $(\Span(Q))^\perp=\set{0}$ and the active space $\activespace = \Q^d/\set{0}$ is isomorphic to $\Q^d$.  Then, the summation in Eq.~\eqref{eq:semantic} can be replaced with a summation over vectors $v \in \Q^d$.  We illustrate this with a simple example:
\begin{example} \label{ex:3sum} The 3SUM query from Sec.~\ref{sec:intro} Eq.~\eqref{eq:3sum} is a query interpreted in the Boolean semiring, and is written as a CSPQ over $\Q^2$ as follows
\begin{align*}
  Q() \leftarrow & A((1,0)) \wedge B((0,1)) \wedge C((1,1)).
\end{align*}
As discussed above, we can write $Q$ more naturally as:
  $Q() \leftarrow  A(X) \wedge B(Y) \wedge C(X+Y).$
We will often use this notation for readability.  Here the {span of $Q$ is $\Q^2=\Span\{(1,0),(0,1),(1,1)\}$}, and the semantics in Eq.~\eqref{eq:semantic} can be written  as $\sem{Q}=\bigvee_{x,y\in\Q}\left(A(x)\wedge B(y) \wedge C(x+y)\right)$.
\end{example}
In this paper, we study the query evaluation problem for CSPQs.  The ``intuitive'' way to compute the answer $\sem{Q}$ in Eq.~\eqref{eq:semantic} is as follows.  Maintain a hash table $T\colon \Q^{|\bm A_0|}\rightarrow \sring$ (initialized with $\supp(T)=\emptyset$), iterate over the active space $v\in \activespace$, compute $s_v :=\bigotimes_i R_i(\inner{\bm A_i}{v})$, then increment $T(\inner{\bm A_0}{v}) := T(\inner{\bm A_0}{v}) \oplus s_v$.  Once all vectors $v$ have been processed, return $\sem{Q}:=T$.  Of course, this not a real algorithm, because the summation over the active space $\activespace$ is an infinite sum: we describe practical algorithms in Sections~\ref{sec:basicAlgorithm} and~\ref{sec:algorithm}.  Here, we illustrate the basic idea with an example.
\begin{figure}
    \centering
    \begin{minipage}[t]{.11\textwidth}
    \centering
    \caption*{$A$}  
    \vspace{.3em}
    \begin{tabular}{ |c|c}
        \cline{1-1}
         $1$ & 1  \\
         $2$ & 3  \\
         $3$ & 4  \\
         $4$ & 5  \\
         \cline{1-1}
    \end{tabular}
    \end{minipage}
    \begin{minipage}[t]{.18\textwidth}
    \centering
    \caption*{$B$}  
    \vspace{.3em}
    \begin{tabular}{|cc|c}
        \cline{1-2}
         -1 & 0 & 1  \\
         0 & -1 & 5  \\
         0 & 1 & 3  \\
         \cline{1-2}
    \end{tabular}
    \end{minipage}
    \begin{minipage}[t]{.11\textwidth}
    \centering
    \caption*{$C$}  
    \vspace{.3em}
    \begin{tabular}{|c|c}
        \cline{1-1}
         $1$ & 1  \\
         $2$ & 3  \\
         $3$ & 4  \\
         $4$ & 5  \\
         \cline{1-1}
    \end{tabular}
    \end{minipage} 
    \begin{minipage}[t]{.1\textwidth}
    \end{minipage}
    \begin{minipage}[t]{.49\textwidth}
    \centering
    \caption*{$Q(2X+2Z)\leftarrow A(X)\cdot B(Y-X,Z-Y)\cdot C(Z)$}  
    \vspace{.3em}
    \begin{tabular}{|c|c}
        \cline{1-1}
        $6$ & 27\\
        $10$ & 108\\
        $14$ & 160\\
        \cline{1-1}
    \end{tabular}
    \end{minipage}    
    \caption{Example Evaluation of a Convolution Sum-Product Query.}
    \label{fig:ex:spqle}
\end{figure}
\begin{example} 
\label{ex:output}
Consider the query\footnote{If we used the Boolean semiring, then the query could be written equivalently using arithmetic predicates as follows: $Q(W)\leftarrow A(X)\wedge B(Y_1,Y_2) \wedge C(Z) \wedge (X+Y_1+Y_2=Z)\wedge (W=2X+2Z)$.} $Q$,  given in Fig.~\ref{fig:ex:spqle}, over $\Q^3$ and interpreted in the semiring of natural numbers $(\N, +, \cdot, 0, 1)$.
The $\N$-relation instances are bags of rational numbers, $A,C\colon \mathbb{Q}\rightarrow \N,$ and $ B\colon \mathbb{Q}^2\rightarrow \N$
and shown 
in Fig.~\ref{fig:ex:spqle}.  To evaluate $Q$, we iterate over triples $x,y,z\in \mathbb{Q}$ such that $A(x)\neq 0, B(y-x,z-y)\neq 0, $ and $C(z)\neq 0$.  
To determine these triples, we can iterate over all values $a \in \supp(A)$, $(b_1,b_2) \in \supp(B)$, $c \in \supp(C)$, and solve the overdetermined system of linear equations $\left(\begin{smallmatrix}1 & 0 & 0\\-1 & 1 & 0\\0 & -1 & 1 \\ 0 & 0 & 1\end{smallmatrix}\right) \left(\begin{smallmatrix}x\\y\\z\end{smallmatrix}\right)=\left(\begin{smallmatrix}a \\ b_1 \\ b_2 \\ c\end{smallmatrix}\right)$.  The $4\cdot 3\cdot 4$ choices for the RHS 
lead to 9 solutions:
$$(x,y,z)= \quad (1,1,2),(2,1,1), (2,2,1), \quad (2,2,3), (3,2,2), (3,3,2),  \quad (3,3,4), (4,3,3), (4,4,3)$$
We grouped the solutions by the value of $2x+2z$; the output $\sem{Q}(2x+2z)$ is the sum of their annotations.  For example, in the first group $2x+2z=6$, and $\sem{Q}(6)=A(1)B(0,1)C(2)+A(2)B(-1,0)C(1)+A(2)B(0,-1)C(1)=9+3+15=27$.  The complete answer is in Fig.~\ref{fig:ex:spqle}.
\end{example}
{\bf The Active Space.} Fix a CSPQ query $Q$ over $\Q^d$, and denote by $S := \Span(Q)$.  In our examples so far, $S = \Q^d$. In this case $S^\perp=\set{0}$ and the active space is $\activespace = \Q^d/S^\perp \simeq \Q^d$.  We could have defined the query's semantics in Eq.~\eqref{eq:semantic} as a sum over the entire space $\Q^d$ instead of a sum over $\activespace$.  But when $S \subsetneq \Q^d$, then summing over the entire space leads to an infinite result.  Any two vectors $v,v'\in \Q^d$ that satisfy $[v]_{S^\perp} = [v']_{S^\perp}$ produce the same result, because $\inner{\bm A_i}{v}=\inner{\bm A_i}{v'}$ for all $i=1,m$.  Since each equivalence class $[v]_{S^\perp}$ is infinite, the summation in Eq.~\eqref{eq:semantic} would add the same term infinitely many times.  This is why we introduced the active space. Summing over the active space guarantees that each combination of tuples from $R_1, \ldots, R_m$ appears only once in the sum.  We illustrate this with a simple example.
\begin{example} \label{ex:single:atom} Consider the query $Q() \leftarrow R(X+Y)$ over $\Q^2$, which just sums up all annotations of the tuples in $R$.  Written formally, the atom is $R((1,1))$, and $S := \Span(Q) =\Span(\set{(1,1)}) = \setof{(z,z)}{z \in \Q}$ has dimension 1.  Obviously, we cannot define $\sem{Q}=\sum_{x,y \in \Q}R(x+y)$, since this will add every tuple in $R$ infinitely many times.  The insight here is that two vectors $v=(x,y)$ and $v'=(x',y')$ represent the same tuple in $R$ iff $x+y = x'+y'$ or, equivalently, iff $\inner{(1,1)}{v}=\inner{(1,1)}{v'}$, which we can also write as $(1,1) \perp (v-v')$, or $[v]_{S^\perp}=[v']_{S^\perp}$. By summing over equivalence classes $[v]_{S^\perp}$ instead of over vectors $v$, Eq.~\eqref{eq:semantic} includes every tuple of $R$ only once.  Since $[(x,y)]_{S^\perp}=[(x+y,0)]_{S^\perp}$, every equivalence class has a canonical representative $(z,0)$, and we can also express $\sem{Q}$ as $\bigoplus_{[(z,0)] \in \activespace} R(z+0)=\bigoplus_{z \in \Q}R(z)$.  This captures our intuition that{, because $\Span(Q)$ is 1-dimensional,} the expression $X+Y$ in the query should be substituted by a single variable $Z$ and written as $Q()\leftarrow R(Z)$.
\end{example}
We allow $\Span(Q) \subsetneq \Q^d$ in order to evaluate a query via a sequence of subqueries.  For example, consider the standard 2-path Boolean query $Q()\leftarrow R(X,Y)\wedge S(Y,Z)$, which, viewed a CSPQ is over the 3-dimensional space $\Q^3$.  We can compute $Q$ in linear time using the following factorization:
$Q_1()\leftarrow {} R(X,Y)\wedge Q_2(Y),\; Q_2(Y)\leftarrow {} S(Y,Z).$
While $Q$ is defined over $\Q^3$, the two subqueries have a span of dimension 2 and $\sem{Q}=\sem{Q_1}$.  We expand on this idea in Section~\ref{sec:decomposition}.
{
{\bf The Safety Condition.}
Intuitively, the requirement $\bm A_0 \subseteq \Span(Q)$ in the definition of a CSQP (Def.~\ref{def:cspq}) corresponds to the traditional safety condition for conjunctive queries: every variable in the head must occur in the body.  For example, consider the unsafe query: $Q(X,Y,Z) \leftarrow R(X,Y) \wedge S(Y)$.  
In vector notation the query is $Q(e_1,e_2,e_3) \leftarrow R(e_1,e_2)\wedge S(e_2)$, where $e_1=(1,0,0), e_2=(0,1,0), e_3=(0,0,1)$.   Here $\set{e_1,e_2,e_3} \not\subseteq \Span(e_1,e_2)$, reflecting the fact that the query is unsafe.
However, the requirement is also technically important because, for elements $[v]\in \activespace$, the expression $\inner{\bm A_0}{[v]}$ is only well-defined when $\bm A_0\subseteq \Span(Q)$.
}
{\bf Convolution-free CSPQ.} We say that a CSPQ $Q(\bm A_0)\leftarrow \bigotimes_{i=1,m}R_i(\bm A_i)$ over $\Q^d$ is \emph{convolution-free} if $\bigcup_{i=1,m} \bm A_i = \set{e_1, \ldots, e_d}$, where $e_1=(1,0,\ldots,0), \ldots, e_d = (0,0,\ldots,1)$ are the canonical basis vectors in $\Q^d$.  (There is no restriction on $\bm A_0$.)  When we replace the $e_j$ with variables $X_j$, then $Q$ is simply a traditional SPQ; examples include matrix multiplication $Q(X,Z)\leftarrow A(X,Y)\otimes B(Y,Z)$, or any traditional conjunctive query, e.g. $Q(Y)\leftarrow R(X)\wedge S(X,Y)$.  Thus, CSPQs generalize SPQs.
{We illustrate with an example.
\begin{example}
  \label{ex:categorical} Consider the following CSPQ over $\Q^3$, interpreted in the natural numbers $\mathbb{N}$:
  \begin{align}
    Q() \leftarrow & R(X,Y) \cdot S(Y,Z) \cdot T(X,Z) \label{eq:triangle}
  \end{align}
  Written in vector notation, the first atom becomes $R((1,0,0),(0,1,0))$, and similarly for the other two atoms, thus, it is convolution-free.  $\Span(Q)=\Q^3$, and $\activespace=\Q^3/\set{0}\simeq \Q^3$.  The traditional semantics of SPQs
  equals the CSPQ semantics in Eq.~\eqref{eq:semantic}: $\sem{Q} = \sum_{x,y,z \in \Q} R(x,y)\cdot S(y,z)\cdot T(x,z)$.
\end{example}
}
{\bf Invariance Under Substitutions.} Variable substitution is a key operation in linear algebra that facilitates the analysis of an expression by converting it to a simpler, equivalent form. This will be a crucial tool throughout this paper, and we prove here that this operation is valid on CSPQs. 
Formally, a \emph{substitution} for a CSPQ $Q$ is an injective linear function $\varphi: \Span(Q) \rightarrow V'$ for some vector space $V'$.
We write $Q^\varphi$ for the result of applying $\varphi$ to $Q$ as in the following lemma:
\begin{restatable}[Variable Substitution]{lemma} {lemmasubstitution}\label{lemma:substitution} 
Let $Q_1(\bm A_0)\leftarrow \bigotimes_{i=1,m} R_i(\bm A_i)$ be a CSQP over $V_1 := \Q^{d_1}$, and let $\varphi: \Span(Q_1) \rightarrow V_2$ be a substitution, where $V_2 := \Q^{d_2}$.  Define $\bm B_i := \varphi(\bm A_i)$ for $i=0,m$, and consider the query $Q_2(\bm B_0)\leftarrow \bigotimes_{i=1,m}R_i(\bm B_i)$.  Then $Q_1, Q_2$ are equivalent: $\forall D$, $\sem{Q_1}^D=\sem{Q_2}^D$.
\end{restatable}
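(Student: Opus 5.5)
The plan is to exhibit an explicit bijection between the two active spaces, $\activespace_1 := V_1/S_1^\perp$ and $\activespace_2 := V_2/S_2^\perp$, where $S_1 := \Span(Q_1)$ and $S_2 := \Span(Q_2)$. This bijection should preserve every inner product $\inner{\bm A_i}{\cdot}$ appearing in Eq.~\eqref{eq:semantic}. Once it is in place, the two sums defining $\sem{Q_1}^D(\bm x)$ and $\sem{Q_2}^D(\bm x)$ have the same index set (after reindexing) and identical terms, so they coincide.

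First, I will check that $Q_2$ is a well-formed CSPQ and identify its span. By linearity, $S_2 = \Span(\bigcup_i \varphi(\bm A_i)) = \varphi(S_1)$, and since $\bm A_0\subseteq S_1$ we get $\bm B_0=\varphi(\bm A_0)\subseteq S_2$, so the safety condition holds. Because $\varphi$ is injective, $\varphi\colon S_1\to S_2$ is a linear isomorphism.

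Second, I will identify each active space with the dual of the span. The map $[v]_{S_1^\perp}\mapsto \inner{-}{[v]_{S_1^\perp}}$ sends $\activespace_1$ to $S_1^*$, the space of linear functionals on $S_1$. This map is well defined (as noted in the background) and injective, since $\inner{u}{v}=0$ for all $u\in S_1$ means $v\in S_1^\perp$. It is surjective by the dimension count $\dim\activespace_1=\dim S_1$, or directly because every functional on $S_1$ extends to $V_1$ and is then represented by a vector via the dot product. The same argument gives $\activespace_2\cong S_2^*$. The isomorphism $\varphi$ induces the dual isomorphism $\varphi^*\colon S_2^*\to S_1^*$, $f\mapsto f\circ\varphi$. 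Composing these maps yields a bijection $\psi\colon\activespace_2\to\activespace_1$ characterized by
\[
\inner{u}{\psi(w)}=\inner{\varphi(u)}{w}\quad\text{for all } u\in S_1,\ w\in\activespace_2 .
\]
Applying this to each vector of $\bm A_i$ gives $\inner{\bm A_i}{\psi(w)}=\inner{\bm B_i}{w}$ for every $i=0,\dots,m$.

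Finally, I will reindex the sum in Eq.~\eqref{eq:semantic} for $Q_1$ through the bijection $\psi$. The condition $\inner{\bm A_0}{v}=\bm x$ becomes $\inner{\bm B_0}{w}=\bm x$, and each factor $R_i(\inner{\bm A_i}{v})$ becomes $R_i(\inner{\bm B_i}{w})$. By Lemma~\ref{lem:sound}, only finitely many terms on either side are non-$\0$, so reindexing a sum over a bijection is legitimate in an arbitrary commutative semiring.

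The main obstacle is the second step, namely establishing that $\activespace$ is canonically isomorphic to the dual of the span, and in particular surjectivity. I will argue it concretely: extend a basis of $S_1$ to a basis of $V_1$, extend the functional by zero on the complement, and note that every linear functional on $\Q^{d_1}$ has the form $\inner{-}{v}$.
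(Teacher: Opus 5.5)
Your proposal is correct and follows essentially the same route as the paper. The paper also builds a bijection $\varphi^*\colon \activespace_2\to\activespace_1$ characterized by $\inner{\varphi(u)}{w}=\inner{u}{\varphi^*(w)}$ for all $u\in S_1$, by representing the functional $u\mapsto\inner{\varphi(u)}{w}$ on $S_1$ by a vector of $V_1$. It proves injectivity directly, gets bijectivity from a dimension count, and then reindexes the sum. Your factorization $\activespace_2\cong S_2^*\to S_1^*\cong\activespace_1$ is a more conceptual packaging of that same construction. Your extension-by-zero argument also makes explicit the step the paper glosses over, where a functional defined only on $S_1$ is represented by a vector of $V_1$.
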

\begin{proof}[Proof Sketch.]
  Denote $S_1 := \Span(Q_1)$, $S_2 := \Span(Q_2)$ and notice that $\phi\colon S_1\rightarrow S_2$ is a bijection.  
  Let $\activespace_1 :=V_1/S_1^\perp, \activespace_2:=V_2/S_2^\perp$ be the active spaces of $Q_1, Q_2$ respectively.  
  Then, there exists a bijection $\varphi^* : \activespace_2 \rightarrow \activespace_1$ s.t.\footnote{When both null spaces are $S_1^\perp=\set{0}$, $S_2^\perp=\set{0}$ then $\varphi^*$ is simply the \emph{adjoint} of $\varphi$ (the transposed matrix).}:
    $\inner{\varphi(a)}{v_2} =  \inner{a}{\varphi^*(v_2)}, \;\forall a \in S_1, \; \forall v_2 \in \activespace_2$. 
  For any $\bm x \in \Q^{|\bm A_0|}$
    \begin{align*}
      \sem{Q_1}(\bm x)&=  \bigoplus_{v_1\in \activespace_1\colon \inner{\bm A_0}{v_1}=\bm x}\bigotimes_{i=1,m}R^D_i(\inner{\bm A_i}{v_1}) = \bigoplus_{v_2\in \activespace_2\colon \inner{\bm A_0}{\varphi^*(v_2)}=\bm x}\bigotimes_{i=1,m}R^D_i(\inner{\bm A_i}{\varphi^*(v_2)})  
  \end{align*}
    For $i=0,m$, we have $\inner{\bm A_i}{\phi^*(v_2)}=\inner{\bm B_i}{v_2}$.
    Thus, $\sem{Q_1}(\bm x)=\sem{Q_2}(\bm x)$
    and $\sem{Q_1}=\sem{Q_2}$.
\end{proof}
{
\begin{example} \label{eq:output:continued}
  To showcase an application of Lemma~\ref{lemma:substitution}, let us reconsider the CSPQ $Q$ from Ex.~\ref{ex:output}: $Q(2X+2Z)\leftarrow A(X) \cdot B(Y-X, Z-Y) \cdot C(Z)$.  
  Perform the substitution $\varphi$ given by: $U=X, \, V =Y-X, \, W=Z-Y$, and rewrite $Q$ to $Q^\varphi(4U+2V+2W)\leftarrow A(U)\cdot B(V,W)\cdot C(U+V+W)$.
  Lemma~\ref{lemma:substitution} states that for any instances of $A,B,C$, the answers to $Q$ are the same as to $Q^\varphi$, i.e., $\sem{Q}=\sem{Q^\varphi}$.
  To compute $Q^\varphi$, it naturally suffices to iterate over values $u,v,w$, where $u$ is in the support of $A$ and $(v,w)$ is in the support of $B$.
  We only need to keep tuples where $u+v+w$ is in the support of $C$.
  For example, referring to the instances $A, B, C$ in Fig.~\ref{fig:ex:spqle}, it suffices to consider the following values to evaluate $Q^\varphi$:
    \begin{align*}
        (u,v,w)= \;\; (1,0,1),(2,-1,0), (2,0,-1),\;\; (2,0,1),(3,-1,0),(3,0,-1),\;\; (3,0,1),(4,-1,0),(4,0,-1)
    \end{align*}
    We grouped the tuples by the value of $4u+2v+2w$.  For each group we compute one output $\sem{Q^\varphi}(4u+2v+2w)$, as before.  The result
    is the same as $\sem{Q}$ depicted in Fig.~\ref{fig:ex:spqle}.
\end{example}
}

%% file: 4-basic-algorithm.tex
\section{Basic Algorithms}
\label{sec:basicAlgorithm}
In this section, we describe two basic algorithms for evaluating a CSPQs. Our first result captures evaluation of CSPQs over dense tensors:
\begin{theorem}[Dense Evaluation]
\label{thm:dense}
Let $Q$ be a CSPQ and $k=\dim(\Span(Q))$.  Then, $\sem{Q}^D$ can be computed in time $O(|\adom(D)|^k)$.
\end{theorem}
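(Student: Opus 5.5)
The plan is to reduce $Q$ to a query over $\Q^k$ whose variables are all directly ``read off'' from atom arguments, and then enumerate candidate valuations from the active domain. First, pick a basis of $\Span(Q)$ consisting of vectors that occur in the atoms: since $\Span(Q)=\Span(\bigcup_i \bm A_i)$, we can choose $a_1,\dots,a_k\in\bigcup_{i=1,m}\bm A_i$ that are linearly independent and span $\Span(Q)$. Let $\varphi\colon\Span(Q)\to\Q^k$ be the injective linear map sending $a_j\mapsto e_j$. By the Variable Substitution Lemma~\ref{lemma:substitution}, $Q$ is equivalent to $Q^\varphi(\varphi(\bm A_0))\leftarrow\bigotimes_i R_i(\varphi(\bm A_i))$. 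This query has $\Span(Q^\varphi)=\Q^k$, so its active space is $\Q^k/\set{0}\simeq\Q^k$. Moreover, each canonical basis vector $e_j$ occurs as an argument of some atom, say at position $p_j$ of atom $R_{i_j}$.

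Second, I argue that only valuations in $\adom(D)^k$ contribute. Take $v=(v_1,\dots,v_k)\in\Q^k$ whose term in Eq.~\eqref{eq:semantic} is non-$\0$. Then every factor is non-$\0$, since $\0$ is absorbing. In particular, $R_{i_j}(\inner{\varphi(\bm A_{i_j})}{v})\neq\0$, so this tuple lies in $\supp(R_{i_j}^D)$. Its $p_j$-th coordinate is $\inner{e_j}{v}=v_j$, hence $v_j\in\adom(D)$.

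The algorithm therefore iterates over all $v\in\adom(D)^k$, of which there are $|\adom(D)|^k$. For each $v$ it does the following:
\begin{itemize}
\item it computes the $m$ tuples $\inner{\varphi(\bm A_i)}{v}$ and the head tuple $\inner{\varphi(\bm A_0)}{v}$, which takes constant time since the query is fixed;
\item it looks up the annotations, also in constant time (ignoring the log factor, per the computational model);
\item it multiplies the annotations and adds the product into a hash table indexed by the head tuple, exactly as in the ``intuitive'' algorithm described in Sec.~\ref{sec:cspq}.
\end{itemize}
Since distinct $v\in\Q^k$ are distinct elements of the active space, every non-$\0$ term is counted exactly once, and by the previous paragraph no such term is missed. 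This gives total time $O(|\adom(D)|^k)$, plus $O(|D|)$ to index the relations; that cost is dominated whenever $k\ge$ the arity of the relations, and otherwise it can be absorbed because reading the input is unavoidable.

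The main subtlety is the choice of basis: it must be drawn from the atom vectors themselves, not from an arbitrary basis. Only then does each coordinate of $v$ coincide with an attribute value of some relation, which is what lets us restrict the enumeration to $\adom(D)$ rather than to arbitrary rationals. The remaining checks are that $\varphi$ is a valid substitution, which holds because it is defined on $\Span(Q)$ and is injective, and that the head tuple $\varphi(\bm A_0)$ is well-defined, which follows from the safety condition.
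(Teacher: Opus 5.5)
Your proof is correct and follows the same route as the paper. Both pick a basis of $\Span(Q)$ among the atom vectors so that every coordinate of a contributing valuation is an attribute value; the paper packages your enumeration of $\adom(D)^k$ as a guard $G(b_1,\dots,b_k)\leftarrow\bigwedge_{j} P_j(b_j)$ of unary reducts (a cross product of size at most $|\adom(D)|^k$), evaluated via substitution and Lemmas~\ref{lemma:guard:1} and~\ref{lemma:guard:2}, and your caveat about the extra $O(|D|)$ indexing cost applies equally there, since the paper's proof pays that cost to compute the reducts and leaves it implicit.
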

Our second result adapts a worst-case optimal join (WCOJ) algorithm~\cite{DBLP:journals/sigmod/NgoRR13,DBLP:conf/icdt/Veldhuizen14,DBLP:conf/pods/NgoPRR12} from CQs to CSPQs.  Recall that, if $Q \leftarrow \bigwedge_{i=1,m}R_i(\bm X_i)$ is a CQ and $\bm Y \subseteq \var(Q)$, then a \emph{fractional edge cover} of $\bm Y$ is a tuple of weights $\bm w=(w_1, \ldots, w_m)$ such that $\forall X_j\in \bm Y$, $\sum_{i: X_j \in \bm X_i}w_i \geq 1$.  The \emph{fractional edge cover number} is $\rho^*(\bm Y) = \min_{\bm w \in EC(\bm Y)} \sum_{i=1,m} w_i$, where $EC(\bm Y)$ is the set of fractional edge covers of $\bm Y$. WCOJ algorithms compute $\sem{Q}^D$ in time $O(|D|^{\rho^*(Q)})$, where $\rho^*(Q) := \rho^*(\var(Q))$.
We begin by extending the definition of a fractional edge cover from CQs to CSPQs.
\begin{definition} \label{def:fractional:edge:cover} Let $Q \leftarrow \bigotimes_{i=1,m}R_i(\bm A_i)$ be a CSPQ.
A fractional edge cover of a tuple of vectors $\bm C \subseteq \Span(Q)$ is a tuple of weights $\bm w=(w_1,\dots, w_m)$ where $w_i\in [0,1]$, satisfying the following:
  \begin{align}
    \bm C \subseteq & \Span \left(\bigcup_{I \in \calI} \left(\bigcap_{i\in I} \Span(\bm A_i)\right)\right) & \text{where } \calI := \bigsetof{I\subseteq \{1,\dots, m\}}{\sum_{i\in I}w_i\geq 1}. \label{eq:fractional:edge:cover}
  \end{align}
  We denote the \textit{fractional edge cover number} with $\rho^*_Q(\bm C):=\min_{\bm w \in EC(\bm C)} \sum_i w_i$.
  When the query is clear from the context then we write simply $\rho^*(\bm C)$, and denote $\rho^*:=\rho^*(Q) := \rho^*(\Span(Q))$.
\end{definition}
{The classical fractional edge cover of a conjunctive query $Q$ assigns weights to atoms so that every variable in $\var(Q)$ is covered by total weight at least one. 
We explain why, when restricted to convolution-free queries, Definition~\ref{def:fractional:edge:cover} coincides with the classical one.  Indeed, observe that a set of indices $I$ belongs to the set $\calI$ in Eq.\eqref{eq:fractional:edge:cover} iff the set of atoms $\setof{R_i}{i \in I}$ have a total weight $\geq 1$.  Since all vectors in $A_1, \ldots, A_m$ are canonical basis vectors, the intersection $\bigcap_{i\in I}\Span(\bm A_i)$ is the same as $\Span(\bigcap_{i \in I}\bm A_i)$.  Condition $\Span(Q) \subseteq \Span \left(\bigcup_{I \in \calI} \left(\bigcap_{i\in I} \Span(\bm A_i)\right)\right)$ says ``every canonical base vector $e_j$ (i.e. variable $X_j$) must be in some intersection $\bigcap_{i \in I}\bm A_i$, i.e. must belong to a set of atoms $R_i$ for which $\sum_i w_i \geq 1$''.  This is the same as saying that the variable $X_j$ is covered by $\bm w$.  For a general CSPQ,
for a set of atoms $I$, the intersection $\bigcap_{i\in I}\Span(\bm A_i)$ consists of the linear combinations that all atoms in $I$ can represent. The definition declares such a subspace available whenever the total weight of $I$ is at least one, and asks these subspaces to span the vectors being covered. Thus, $\rho^*$ is the minimum total weight needed to cover the vectors in $\Span(Q)$. When the vectors in the tuples $\bm A_i$ are canonical basis vectors, these intersections are precisely the variables shared by the atoms in $I$, and the definition reduces to the usual fractional edge cover.}
In Lemma~\ref{lemma:rhostar:cspq:cq}, we show how this generalizes the $\rho^*$ of a CQ.  Our second result in this section is:
\begin{restatable}[Sparse Evaluation]{theorem}{thmsparse}
  \label{thm:sparse} Fix a CSPQ $Q$. $\sem{Q}^D$ can be computed in time $O(|D|^{\rho^*(Q)})$.
\end{restatable}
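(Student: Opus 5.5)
We reduce to a worst-case optimal join over a set of ``linear-combination variables'' and then recover $v$ by solving a linear system. Fix an optimal fractional edge cover $\bm w$ of $\Span(Q)$ and let $\calI$ be as in Eq.~\eqref{eq:fractional:edge:cover}. For each $I\in\calI$ let $U_I:=\bigcap_{i\in I}\Span(\bm A_i)$. By the covering condition, $\sum_{I\in\calI}U_I=\Span(Q)$, so we can pick a basis $\bm C=(c_1,\dots,c_k)$ of $\Span(Q)$, $k=\dim\Span(Q)$, where each $c_j$ lies in some $U_{I_j}$. The vectors $c_j$ will play the role of variables. Since $c_j\in\Span(\bm A_i)$ for every $i\in I_j$, the value $\inner{c_j}{v}$ is a fixed linear function of the tuple $\inner{\bm A_i}{v}$ for every such atom. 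By Lemma~\ref{lemma:substitution}, applied with the substitution $\varphi$ mapping $\Span(Q)$ to coordinates with respect to $\bm C$, we may assume $\bm C=(e_1,\dots,e_k)$ and $\Span(Q)=\Q^k$. The active space is then $\Q^k$, and $v$ is determined by its coordinates $x_j=\inner{e_j}{v}$.

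Next, we build an auxiliary conjunctive query with variables $X_1,\dots,X_k$. For each atom $R_i$, let $\bm Y_i:=\setof{X_j}{i\in I_j}$ and form the projected relation $R'_i(\bm Y_i)$ by mapping every $t\in\supp(R_i)$ to the tuple $(\lambda_{ij}(t))_{j: i\in I_j}$, where $\lambda_{ij}$ is the linear map satisfying $\lambda_{ij}(\inner{\bm A_i}{v})=\inner{e_j}{v}$. This takes linear time, and $|R'_i|\le |R_i|$. Variable $X_j$ occurs in every atom of $I_j$, and $\sum_{i\in I_j}w_i\ge 1$, so $\bm w$ is a classical fractional edge cover of the CQ $Q'\leftarrow\bigwedge_i R'_i(\bm Y_i)$. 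If some $X_j$ appeared in no atom, $I_j=\emptyset$ would have weight $0<1$, which is impossible. Hence a WCOJ algorithm (Generic Join) enumerates all $(x_1,\dots,x_k)$ satisfying $Q'$ in time $O(\prod_i |R_i|^{w_i})=O(|D|^{\rho^*})$.

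Every $v$ contributing a non-$\0$ term to Eq.~\eqref{eq:semantic} has all $\inner{\bm A_i}{v}\in\supp(R_i)$, so its coordinate vector appears in the output of $Q'$. Thus the candidate set is a superset of the relevant valuations and is enumerated within the bound. For each candidate $v$ we compute $\inner{\bm A_i}{v}$ for all $i$ and look up $R_i$ (a constant number of lookups, since the query is fixed), discarding any candidate with a $\0$ factor. Otherwise we add $\bigotimes_i R_i(\inner{\bm A_i}{v})$ into the hash table at key $\inner{\bm A_0}{v}$. This is exactly the ``intuitive'' algorithm restricted to a finite superset of the support, so it returns $\sem{Q}$.

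The main obstacle is the first step: justifying that a basis can be chosen from $\bigcup_{I\in\calI}U_I$, and that the projection $\lambda_{ij}$ is well defined and linear. The latter holds because $c_j$ is a linear combination of the vectors in $\bm A_i$; if these vectors are dependent, any fixed choice of coefficients works, because on valid tuples $t=\inner{\bm A_i}{v}$ all choices agree. Tuples of $R_i$ that are inconsistent with any $v$ are filtered out by the final check, so they only affect the size bound, which still holds since $|R'_i|\le|R_i|$.
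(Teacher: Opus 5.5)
Your proof is correct and follows essentially the same route as the paper. The paper's Lemma~\ref{lemma:rhostar:geq} builds a reduct from a basis of $\Span(\bigcup_{I\in\calI}\bigcap_{i\in I}\Span(\bm A_i))$ chosen inside the intersections; the proof then substitutes to a convolution-free query with the same $\rho^*$, runs a WCOJ, and uses the result as a guard (Lemmas~\ref{lemma:guard:1} and~\ref{lemma:guard:2}), which is exactly what you inline. The one cosmetic difference is that the paper's reduct of $R_i$ projects onto all chosen basis vectors in $\Span(\bm A_i)$, while you project only onto those $c_j$ with $i\in I_j$; both give the $O(|D|^{\rho^*(Q)})$ bound.
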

In the rest of the section, we prove Theorems~\ref{thm:dense} and~\ref{thm:sparse}.  The proofs rely on a reduction from the evaluation of a CSPQ to a traditional CQ.
We start by considering a special case of a CSQP.
\begin{definition} \label{def:independent}
    A CSPQ $Q \leftarrow \bigotimes_{i=1,m} R_i(\bm A_i)$ is \emph{independent} if $\{b_1,\dots,b_k\}=\bigcup_{i=1,m} \bm A_i$ are linearly independent vectors. (Vectors $b_j$ are allowed to appear multiple times and we ignore the head.)
\end{definition}
Any convolution-free CSPQ (see Sec.~\ref{sec:cspq}) is an independent query.  Conversely, if $Q$ is independent, then we can obtain an equivalent convolution-free query by applying the substitution $\phi$ that maps $b_1, \ldots, b_k$ to the canonical basis $e_1, \ldots, e_k$. This query, $Q^\phi$, when viewed as a SPQ with variables $X_1,\dots,X_k$, can be computed using a naive algorithm with runtime $O(|\adom(D)^k|)$, or a WCOJ with runtime $O(|D|^{\rho^*(Q^\varphi)})$. Because $\sem{Q}=\sem{Q^\varphi}$, we can evaluate $Q$ in this runtime as well.
In general, in order to evaluate a CSPQ we \emph{guard} it with \textit{reducts}.  
To that end, we call an $\S$-relation $P^D : \Q^p \rightarrow \S$ \emph{Boolean} if $P^D(\bm x)=\bm 0$ or $P^D(\bm x)=\bm 1$ for all $\bm x \in \Q^p$; with some abuse, we will also view $P^D$ as being interpreted in the Boolean semiring $\mathbb{B}$.  Fix an atom $R(\bm A)$.  A \emph{reduct} of $R(\bm A)$ is an atom $P(\bm C)$ where $\bm C$ is a tuple of linearly independent vectors s.t. $\bm C \subseteq \Span(\bm A)$.  
Given an $\S$-relation instance $R^D$, we define a reduct of $R^D$ onto $\bm C$ as a Boolean $\S$-relation instance satisfying: $R^D(\inner{\bm A}{v})\neq \bm 0 \Rightarrow P^{D}(\inner{\bm C}{v})=\bm 1$, which implies $R^D(\inner{\bm A}{v})= R^D(\inner{\bm A}{v})\otimes P^{D}(\inner{\bm C}{v})$, for all $v \in \Q^d$: in other words, adding the reduct $P(\bm C)$ does not change the meaning of the atom $R(\bm A)$.
A reduct $P^D$ can be computed in time $O(|R^D|)$ as follows.  
Represent each vector $c_i \in \bm C$ as a linear combination
$c_i = \sum_j \lambda_{ij}a_j$ of the vectors in $\bm A= (a_1, \ldots, a_r)$.  Then, $P^D$ can be computed by iterating over each tuple $(u_1, \ldots, u_r) \in \supp(R^D)$ and inserting $(\sum_j \lambda_{1j}u_j, \ldots, \sum_j \lambda_{pj}u_j)$ in $\supp(P^D)$; notice that $|P^D| \leq |R^D|$.  
For example, if the atom is $R(X+Y,Y,Z+W)$ and the reduct is $P(X,X-Y)$ then $\supp(P^D) = \setof{(u_1-u_2, u_1-2u_2)}{(u_1,u_2,u_3)\in \supp(R^D)}$.
Intuitively, one can think of reducts as being a generalization of projections.
Given a CSPQ $Q$, we construct an independent (Def.~\ref{def:independent}) query $G$ whose atoms are reducts of atoms from $Q$.  We call $G$ a \emph{reduct} of $Q$ and interpret it in the Boolean semiring, indicated by using $\wedge$ instead of $\otimes$.  However, we see $\sem{G}$ as a Boolean $\sring$-relation, i.e. with $\bm 0, \bm 1$ values.
\begin{definition} \label{def:guard}
  A \emph{reduct} of $Q \leftarrow \bigotimes_{i=1,m}R_i(\bm A_i)$ is an independent CSPQ $G(\bm C_0) \leftarrow \bigwedge_{j=1,n}P_j(\bm C_j)$, where each $P_j(\bm C_j)$ is a reduct of some atom $R_i(\bm A_i)$.
\end{definition}
Observe that $\Span(\bm C_0) \subseteq \Span(G)\subseteq \Span(Q)$. Adding $G$ to $Q$ does not affect its semantics:
\begin{restatable}{lemma}{lemmaguardone}\label{lemma:guard:1} 
Let $G(\bm C_0)\leftarrow \bigwedge_{j=1,n}P_j(\bm C_j)$ be a reduct of $Q(\bm A_0)\leftarrow \bigotimes_{i=1,m} R_i(\bm A_i)$.  Then $Q$ is equivalent to $Q'(\bm A_0)\leftarrow \bigotimes_{i=1,m} R_i(\bm A_i)\otimes G(\bm C_0)$ in the following sense: for any instance $D$, if we instantiate the atom $G(\bm C_0)$ in $Q'$ as the answer of $G$ on $D$, $G^D := \sem{G}^D$, then $\sem{Q}^D=\sem{Q'}^{D}$.
\end{restatable}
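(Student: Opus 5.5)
The plan is to compare the two sums in Eq.~\eqref{eq:semantic} term by term. First I would check that $Q'$ is well-formed and has the same active space as $Q$. Since $G$ is a reduct of $Q$, we have $\bm C_0\subseteq\Span(G)\subseteq\Span(Q)$. Hence adding the atom $G(\bm C_0)$ to the body does not change the span: $\Span(Q')=\Span(Q)$. So $Q'$ satisfies the safety condition, and $Q$ and $Q'$ have the same active space $\activespace$. It therefore suffices to show that for every $v\in\activespace$,
\[
\bigotimes_{i=1,m}R_i^D(\inner{\bm A_i}{v}) = \bigotimes_{i=1,m}R_i^D(\inner{\bm A_i}{v})\otimes \sem{G}^D(\inner{\bm C_0}{v}).
\]
All these expressions are well-defined because every vector involved lies in $\Span(Q)$.

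Next I would establish the key claim: $\sem{G}^D$ is Boolean, and $\sem{G}^D(\inner{\bm C_0}{v})=\bm 1$ whenever the product $\bigotimes_i R_i^D(\inner{\bm A_i}{v})$ is non-$\0$. Given this claim the identity above is immediate. If the product is $\0$, both sides are $\0$ because $\0$ is absorbing. Otherwise the extra factor is $\bm 1$.

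To prove the claim, I would work in the Boolean semiring. Fix $v$ with a non-$\0$ product, so every factor $R_i^D(\inner{\bm A_i}{v})$ is non-$\0$. Each $P_j(\bm C_j)$ is a reduct of some atom $R_i(\bm A_i)$, so the defining property of reducts gives $P_j^D(\inner{\bm C_j}{v})=\bm 1$ for all $j$. Note that this property was stated for $v\in\Q^d$; I would apply it to any representative of the class, which is harmless since the vectors lie in $\Span(Q)$.

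Now consider $\sem{G}^D(\bm y)$ with $\bm y:=\inner{\bm C_0}{v}$. Let $v_G$ be the image of $v$ in the active space $\Q^d/\Span(G)^\perp$ of $G$. This image is well-defined because $\Span(Q)^\perp\subseteq\Span(G)^\perp$, and it satisfies $\inner{\bm C_0}{v_G}=\bm y$. Every $P_j$ evaluates to $\bm 1$ at $v_G$, so the disjunction defining $\sem{G}^D(\bm y)$ contains a true term. Since $G$ is interpreted in $\B$ and read back as a Boolean $\S$-relation, $\sem{G}^D(\bm y)=\bm 1$.

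The main obstacle is the bookkeeping between the two quotient spaces: $\activespace$ of $Q$ and the coarser active space of $G$. One must check that passing from $v$ to its class modulo $\Span(G)^\perp$ preserves every inner product that is used. It is also worth stressing that the Boolean interpretation of $G$ is essential. Summing in $\S$ could produce a value other than $\bm 1$, for instance in $\N$ when several $G$-valuations share the same $\bm y$. By contrast, the disjunction over $\B$ only needs a single witness.
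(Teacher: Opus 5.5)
Your proposal is correct and follows the same route as the paper's proof. Both compare the summands pointwise, dispose of the case where the product is $\0$ by absorption, use the reduct property to get $P_j^D(\inner{\bm C_j}{v})=\1$ for every $j$, and pass to the class of $v$ in the active space of $G$ to conclude $\sem{G}^D(\inner{\bm C_0}{v})=\1$; your explicit checks that $\Span(Q')=\Span(Q)$ and $\Span(Q)^\perp\subseteq\Span(G)^\perp$ fill in details the paper leaves implicit.
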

The proof is immediate from the fact that adding a reduct does not change the meaning of the atom it is derived from.  When $\Span(\bm C_0)=\Span (Q)$, then we can compute $\sem{Q}=\sem{Q'}$ by simply iterating over the tuples in $\sem{G}$.  In such a case, we call $G$ a \textit{guard}:
\begin{definition}
  Fix a CSPQ $Q$.  An atom $G(\bm C_0)$ of $Q$ is a \textit{guard} if $\Span(Q)=\Span(\bm C_0)$.
\end{definition}
\begin{restatable}{lemma}{lemmaguardtwo}\label{lemma:guard:2} 
 If $Q$ has a guard $G(\bm C_0)$, then $\sem{Q}^D$ can be evaluated in time $O(|G^D|)$.
\end{restatable}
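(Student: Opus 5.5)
The approach is to iterate over the tuples of $G^D$ and to show that each tuple determines exactly one element of the active space, together with its contribution to the output. Write $S:=\Span(Q)$ and $\bm C_0=(c_1,\dots,c_k)$. Because $G(\bm C_0)$ is a guard, $\Span(\bm C_0)=S$. In the intended use $G$ is an independent reduct, so the $c_j$ are linearly independent and form a basis of $S$. (If they were not independent, I would first extract a basis $\bm C_0'\subseteq\bm C_0$ and project each tuple of $G^D$ onto the corresponding coordinates. This is a linear-time step, since every $c\in\bm C_0$ is a fixed linear combination of $\bm C_0'$.)

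\textbf{Step 1 (bijection).} The linear map $\Phi:\activespace\to\Q^k$, $[v]\mapsto\inner{\bm C_0}{v}$, is well defined because $\bm C_0\subseteq S$. It is injective, since $\inner{c_j}{v-v'}=0$ for all $j$ gives $v-v'\in S^\perp$. It is surjective because $\dim\activespace=\dim S=k$. Each $[v]$ is therefore identified with the tuple $\bm x=\inner{\bm C_0}{v}$.

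\textbf{Step 2 (restricting the sum).} By Lemma~\ref{lemma:guard:1}, $\sem{Q}^D=\sem{Q'}^D$ with $Q'=Q\otimes G(\bm C_0)$. In Eq.~\eqref{eq:semantic} for $Q'$, every $v$ with $G^D(\inner{\bm C_0}{v})=\0$ contributes $\0$. The sum thus ranges only over $v=\Phi^{-1}(\bm x)$ for $\bm x\in\supp(G^D)$, and on these the factor $G^D=\1$ can be dropped.

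\textbf{Step 3 (computation).} Each $a\in\bm A_i$ ($i=0,\dots,m$) lies in $S$, so I precompute coefficients with $a=\sum_j\mu_{a,j}c_j$; this costs constant time because the query is fixed. Then $\inner{a}{v}=\sum_j\mu_{a,j}\inner{c_j}{v}=\sum_j\mu_{a,j}x_j$. For each $\bm x\in\supp(G^D)$, I compute every $\inner{\bm A_i}{v}$ in $O(1)$ arithmetic operations. I look up $R_i^D(\inner{\bm A_i}{v})$ for each $i$, a constant number of lookups (logarithmic factors are ignored per the computational model), multiply the results, and add the product into a hash table under the key $\inner{\bm A_0}{v}$. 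Safety, $\bm A_0\subseteq S$, is exactly what allows the head to be expressed through the $c_j$. The total time is $O(|G^D|)$.

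The main subtlety lies in Step 1: the identification of the active space with $\Q^k$ through the guard is what guarantees that each tuple is enumerated exactly once, so nothing is double-counted as in Example~\ref{ex:single:atom}. The remainder is routine bookkeeping.
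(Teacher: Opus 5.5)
Your proposal is correct and follows the paper's proof. Both arguments use that $\inner{\bm C_0}{-}$ is injective on $\activespace$ because $\Span(\bm C_0)=\Span(Q)$, enumerate $\supp(G^D)$, recover the unique class $[v]$ through a fixed linear (partial) inverse (your coefficients $\mu_{a,j}$), and then look up the atoms and aggregate by head value in $O(1)$ per tuple.

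Two small corrections are needed:
\begin{itemize}
\item A guard is by definition an atom of $Q$. Restricting the sum therefore only needs that $\0$ is absorbing, not Lemma~\ref{lemma:guard:1}, and the guard's factor must be kept when $G^D$ is not Boolean.
\item Independence of a reduct concerns its body, not its head; for example, $\bm C_0=\bigcup_i\bm A_i$ in the proof of Theorem~\ref{thm:sparse}. So the dependent case is the typical one. There surjectivity is unnecessary: you just discard tuples of $G^D$ outside the image of $\inner{\bm C_0}{-}$, or deduplicate your projections.
\end{itemize}
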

\begin{proof}[Proof Sketch.]
  In the definition of $\sem{Q}^D$ given by Eq.~\eqref{eq:semantic}, we can restrict the summation to vectors $v \in \activespace = \Q^d/(\Span(Q))^\perp$ that satisfy $\sem{G}(\inner{\bm C_0}{v})\neq \bm 0$.  There are at most $|G^D|$ such vectors, because $\inner{\bm C_0}{-}\colon \activespace \rightarrow \Q^{|\bm C_0|}$ is injective, as $(\Span(Q))^\perp=(\Span (\bm C_0))^\perp$.
\end{proof}
Lemmas~\ref{lemma:guard:1} and~\ref{lemma:guard:2} lead to an algorithm to evaluate $Q$: Construct a reduct $G$ of $Q$ that acts as a guard. Evaluate the reduct traditionally, then leverage it to evaluate $Q$. We now prove Theorem~\ref{thm:dense}.
\begin{proof}[Proof of Theorem~\ref{thm:dense}.]  Choose a basis $b_1, \ldots, b_k$ of $\Span(Q)$ such that each $b_j$ belongs to some tuple $\bm A_i$: this is possible since $\Span(\bigcup_i \bm A_i)=\Span(Q)$.  For each $b_j \in \bm A_i$, let $P_j(b_j)$ be a reduct of the atom $R_i(\bm A_i)$.  Observe that this reduct has size $|P_j^D| \leq |\adom(R_i^D)|$.  Form the reduct query $G(b_1, \ldots, b_k) \leftarrow \bigwedge_{j=1,k}P_j(b_j)$.  By applying the substitution $b_i \mapsto e_i$, we transform it into an equivalent (by Lemma~\ref{lemma:substitution}) traditional CQ $G^\phi$, which is a cross product that can be computed in time $O(\adom(D)^k)$.  The theorem then follows from Lemmas~\ref{lemma:guard:1} and~\ref{lemma:guard:2}.
\end{proof}
{
  \begin{example}\label{example:dense}
  We reconsider the CSPQ $Q$ from Ex.~\ref{ex:output}: $Q(2X+2Z)\leftarrow A(X) \cdot B(Y-X, Z-Y) \cdot C(Z)$.  
  A reduct of $Q$ is the CSPQ $G(X,Y-X,Z-Y)\leftarrow A(X) \cdot B(Y-X, Z-Y)$ and, due to Lemma~\ref{lemma:guard:1}, $Q$ is equivalent to $Q'(2X+2Z)\leftarrow A(X) \cdot B(Y-X, Z-Y) \cdot C(Z)\cdot G(X,Y-X,Z-Y)$ where $G$ is a guard of $Q'$.
  Since $G$ is independent, we can use a variable substitution to transform it into a convolution-free CSPQ.
  Namely, we can use the substitution: $U=X, \, V =Y-X, \, W=Z-Y$. 
  Thus, $G$ can be evaluated in time $O(|\adom(D)|^3)$.
  Using Lemma~\ref{lemma:guard:2}, we can also evaluate $Q'$ in time $O(|\adom(D)|^3)$ and $ \sem{Q}=\sem{Q'}$.
\end{example}
}
Next, we prove Theorem~\ref{thm:sparse}.  In order to evaluate $\sem{Q}^D$, we construct, as before, a reduct $G(\bm C_0)$ that acts as guard for $Q$ and use Lemmas~\ref{lemma:guard:1} and~\ref{lemma:guard:2}.  This time we need to relate the fractional edge covering number $\rho^*$ of the CSPQ $Q$ to the traditional $\rho^*$ of the CQ representing the reduct.  Thus, the proof of the theorem relies on the careful analysis of $\rho^*$.  First, we show substitution invariance:
\begin{lemma} \label{lemma:same:rhostar} If $Q$ is a CSPQ, $\bm C \subseteq \Span(Q)$, and $\varphi$ is a substitution, then $\rho^*_Q(\bm C) = \rho^*_{Q^\varphi}(\varphi(\bm C))$.  
\end{lemma}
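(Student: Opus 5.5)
The plan is to show that the set of fractional edge covers is the same for $\bm C$ in $Q$ and for $\varphi(\bm C)$ in $Q^\varphi$: $EC_Q(\bm C)=EC_{Q^\varphi}(\varphi(\bm C))$. Both queries have the same atoms $R_1,\dots,R_m$, and weights $\bm w$ are indexed by atoms. So equality of the minima $\rho^*_Q(\bm C)=\rho^*_{Q^\varphi}(\varphi(\bm C))$ follows at once from equality of these feasible sets.

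Fix a weight tuple $\bm w\in[0,1]^m$. The family $\calI=\setof{I}{\sum_{i\in I}w_i\ge 1}$ depends only on $\bm w$, so it is identical for $Q$ and $Q^\varphi$. Recall $\varphi:\Span(Q)\to V_2$ is injective linear and $\bm B_i=\varphi(\bm A_i)$. The covering condition for $Q$ is $\bm C\subseteq W:=\Span\bigl(\bigcup_{I\in\calI}\bigcap_{i\in I}\Span(\bm A_i)\bigr)$, and for $Q^\varphi$ it is $\varphi(\bm C)\subseteq W':=\Span\bigl(\bigcup_{I\in\calI}\bigcap_{i\in I}\Span(\bm B_i)\bigr)$. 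I will prove $W'=\varphi(W)$ in three steps.

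\begin{enumerate}
\item Linear maps commute with spans, so $\Span(\bm B_i)=\varphi(\Span(\bm A_i))$. All these subspaces lie in $\Span(Q)$, the domain of $\varphi$.
\item Because $\varphi$ is injective, images commute with intersections: $\varphi\bigl(\bigcap_{i\in I}U_i\bigr)=\bigcap_{i\in I}\varphi(U_i)$ for subspaces $U_i\subseteq\Span(Q)$. The inclusion $\subseteq$ is trivial. For $\supseteq$, suppose $y=\varphi(u_i)$ for each $i$. Injectivity forces all the $u_i$ to be equal, so $y$ is the image of a single element lying in every $U_i$.
\item Images commute with unions trivially, and with spans by linearity. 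Hence $W'=\varphi(W)$.
\end{enumerate}

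Since $\bm C\subseteq\Span(Q)$ and $\varphi$ is injective on $\Span(Q)\supseteq W$, we get $\bm C\subseteq W$ if and only if $\varphi(\bm C)\subseteq\varphi(W)=W'$. Thus $\bm w$ covers $\bm C$ in $Q$ exactly when it covers $\varphi(\bm C)$ in $Q^\varphi$. One side condition must also hold: $\varphi(\bm C)\subseteq\Span(Q^\varphi)=\varphi(\Span(Q))$, which is immediate, so $\rho^*_{Q^\varphi}(\varphi(\bm C))$ is well-defined.

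The only step with real content is the intersection step, which is the one place injectivity is needed. Everything else is bookkeeping that needs only linearity.
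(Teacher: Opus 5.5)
Your proposal is correct and follows the same route as the paper, whose proof is a one-line appeal to the definition of fractional edge covers together with $\Span(\varphi(\bm A_i))=\varphi(\Span(\bm A_i))$. Your version fills in what that sketch leaves implicit: injectivity of $\varphi$ on $\Span(Q)$ is what makes images commute with the intersections $\bigcap_{i\in I}\Span(\bm A_i)$, and it is also what gives $\bm C\subseteq W \iff \varphi(\bm C)\subseteq\varphi(W)$.
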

\begin{proof}
  The lemma follows from Def.~\ref{def:fractional:edge:cover} and the fact that $\Span(\varphi(\bm A_i))=\varphi(\Span(\bm A_i))$.
\end{proof}
Next, we prove that $\rho^*$ for convolution-free CSPQs coincides with $\rho^*$ for CQs:
\begin{lemma} \label{lemma:rhostar:cspq:cq} Let $Q \leftarrow \bigotimes_{i=1,m}R_i(\bm A_i)$ be a convolution-free CSPQ,
and let $Q' \leftarrow \bigwedge_{i=1,m} R_i(\bm X_i)$ be the CQ obtained by replacing each $e_j$ with a variable $X_j$ (we ignore the head variables in both queries).  Then, for any subset of variables $X_{j_1}, \ldots, X_{j_k}$ the following holds: $\rho^*(e_{j_1}, \ldots, e_{j_k})=\rho^*(X_{j_1}, \ldots, X_{j_k})$.  
\end{lemma}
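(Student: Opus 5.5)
We show that the two linear programs have the same feasible sets: a weight vector $\bm w\in[0,1]^m$ is a fractional edge cover of $(e_{j_1},\dots,e_{j_k})$ in the sense of Def.~\ref{def:fractional:edge:cover} iff it is a classical fractional edge cover of $\{X_{j_1},\dots,X_{j_k}\}$. Equal objectives then give equal minima. One technicality is that the classical definition allows weights above $1$, while Def.~\ref{def:fractional:edge:cover} requires $w_i\in[0,1]$. Replacing $w_i$ by $\min(w_i,1)$ keeps a classical cover feasible, since a variable covered with total weight $\geq 1$ either has every contributing atom with weight $<1$ or some atom with weight $\geq 1$, which becomes exactly $1$. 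This truncation never increases the objective, so we may restrict the classical program to $[0,1]^m$ as well.

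The key structural fact concerns coordinate subspaces. Because $Q$ is convolution-free, each $\Span(\bm A_i)=\Span\{e_j : X_j\in\bm X_i\}$ is a coordinate subspace. We will use two properties of coordinate subspaces $\Span(E)$ with $E\subseteq\{e_1,\dots,e_d\}$:
\begin{itemize}
\item[(a)] $\bigcap_{i\in I}\Span(E_i)=\Span(\bigcap_{i\in I}E_i)$, which holds because a vector lies in $\Span(E)$ iff its support is contained in the index set of $E$;
\item[(b)] $\Span(\bigcup_{I\in\calI}\Span(E_I))=\Span(\bigcup_{I\in\calI}E_I)$, and $e_j$ lies in this span iff $e_j\in E_I$ for some $I$, since a span of basis vectors contains exactly the vectors supported on those coordinates.
\end{itemize}
Applying (a) and then (b), the condition $e_{j}\in\Span\bigl(\bigcup_{I\in\calI}\bigcap_{i\in I}\Span(\bm A_i)\bigr)$ becomes: there exists $I$ with $\sum_{i\in I}w_i\ge1$ and $X_j\in\bm X_i$ for all $i\in I$.

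Now we compare this with classical coverage of $X_j$. If such an $I$ exists, then $\sum_{i:X_j\in\bm X_i}w_i\ge\sum_{i\in I}w_i\ge1$, because weights are nonnegative. Conversely, if $\sum_{i:X_j\in\bm X_i}w_i\ge 1$, take $I=\{i: X_j\in\bm X_i\}$; this set lies in $\calI$, and $e_j$ belongs to every $\Span(\bm A_i)$ with $i\in I$. The condition for the whole tuple $\bm C$ holds iff it holds for each $e_{j_\ell}$ individually, so the two feasible sets coincide. The objective $\sum_i w_i$ is identical in both programs, hence $\rho^*(e_{j_1},\dots,e_{j_k})=\rho^*(X_{j_1},\dots,X_{j_k})$.

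The main obstacle is step (b): we must justify that membership of a single basis vector $e_j$ in the span of a union of coordinate subspaces forces $e_j$ into one of them. This does not hold for general subspaces, but here the union is spanned by a set of canonical basis vectors, and linear independence of the canonical basis gives it directly.
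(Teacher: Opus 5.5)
Your proof is correct and follows the paper's argument. Both identify $\bigcap_{i\in I}\Span(\bm A_i)$ as the coordinate subspace spanned by the $e_j$ common to all $\bm A_i$ with $i\in I$, and then compare the coverage of each $e_j$ in both directions, using $I=\{i : e_j\in\bm A_i\}$ for the converse. Your truncation argument, which reconciles the $w_i\in[0,1]$ bound of Def.~\ref{def:fractional:edge:cover} with the unbounded classical weights, is a detail the paper leaves implicit, and you handle it correctly.
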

\begin{proof}
  For any $I\subseteq \{1,\dots, m\}$ the vector space $\bigcap_{i\in I}\Span(\bm A_i)$ is simply the vector space spanned by the canonical basis vectors $e_j$ for which $e_j\in \bm A_i$ for all $i\in I$.  Thus, for $\bm w$ to be a fractional edge cover of $\set{e_{j_1}, \ldots, e_{j_k}}$, it has to be the case that for every $e_j\in \set{e_{j_1}, \ldots, e_{j_k}}$ there exists a $I\subseteq \{1,\dots, m\}$ such that $\sum_{i\in I}w_i\geq 1$ and $e_j\in \bm A_i$ for all $i\in I$.  Therefore, $\bm w$ is also a traditional fractional edge cover of $\rho^*(X_{i_1},\dots ,X_{i_k})$.  Conversely, for a traditional fractional edge cover $\bm w$, it holds that $\sum_{i: e_j\in \bm A_i}w_i\geq 1 $ for any choice of $X_{j}\in\{X_{i_1},\dots ,X_{i_k}\}$.  Thus, $I:=\{i: e_j\in \bm A_i\}\in \mathcal{I}$ and $e_j\in \bigcap_{i\in I}\Span(\bm A_i)$.
\end{proof}
Thus, in order to evaluate $Q$ in time $O(|D|^{\rho^*})$ it suffices to construct a guard $G$ defined by a reduct $G(\bm C_0) \leftarrow \cdots$ with the same $\rho^*$ as $Q$, because, after substitution, the reduct is essentially a traditional CQ with the same $\rho^*$.  Next, we show how to construct such a reduct.
\begin{lemma} \label{lemma:rhostar:geq} Fix a CSPQ $Q \leftarrow \bigotimes_{i=1,m} R_i(\bm A_i)$.  For any tuple of vectors $\bm C \subseteq \Span(Q)$ there exists a reduct $G(\bm C) \leftarrow \bigwedge_{i=1,m} P_i(\bm C_i)$ of $Q$ such that $\rho^*_{G}(\bm C)=\rho^*_Q(\bm C)$.
\end{lemma}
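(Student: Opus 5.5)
The plan is to prove the two inequalities separately. The direction $\rho^*_G(\bm C)\geq\rho^*_Q(\bm C)$ holds for \emph{every} reduct, so all the work goes into building one particular reduct that achieves $\leq$.

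\emph{Easy direction.} Let $G(\bm C)\leftarrow\bigwedge_{i=1,m}P_i(\bm C_i)$ be any reduct with one reduct atom $P_i(\bm C_i)$ per atom $R_i(\bm A_i)$, so $\bm C_i\subseteq\Span(\bm A_i)$. Then for every $I\subseteq\{1,\dots,m\}$ we have $\bigcap_{i\in I}\Span(\bm C_i)\subseteq\bigcap_{i\in I}\Span(\bm A_i)$. The family $\calI$ in Eq.~\eqref{eq:fractional:edge:cover} depends only on the weights $\bm w$, not on the atoms. Hence any fractional edge cover of $\bm C$ in $G$ is also one in $Q$, which gives $\rho^*_G(\bm C)\geq\rho^*_Q(\bm C)$.

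\emph{Construction for the reverse direction.} Fix an optimal cover $\bm w$ for $\bm C$ in $Q$. It exists because $\calI$ can take only finitely many values, and for each fixed $\calI$ the set of weight vectors inducing a superset of it is a closed polytope, so the minimum is attained. Let $\calI$ be the family induced by $\bm w$, put $W_I:=\bigcap_{i\in I}\Span(\bm A_i)$ for $I\in\calI$, and let $S:=\Span\bigl(\bigcup_{I\in\calI}W_I\bigr)$, so that $\bm C\subseteq S$.
\begin{itemize}
\item Since $\bigcup_{I\in\calI}W_I$ spans $S$, extract from this union a basis $B=\{b_1,\dots,b_k\}$ of $S$. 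Each $b_j$ then lies in some $W_{I_j}$ with $I_j\in\calI$.
\item Define $\bm C_i$ to be the tuple of all $b_j\in B$ with $b_j\in\Span(\bm A_i)$. An atom may receive the empty tuple; such an arity-$0$ reduct is harmless, or it can simply be dropped.
\end{itemize}
We then check that $G$ is a legitimate reduct of $Q$:
\begin{itemize}
\item Each $\bm C_i$ consists of linearly independent vectors contained in $\Span(\bm A_i)$, so $P_i(\bm C_i)$ is a reduct of $R_i(\bm A_i)$.
\item $\bigcup_i\bm C_i\subseteq B$ is linearly independent, so $G$ is independent in the sense of Def.~\ref{def:independent}.
\item Each $b_j\in W_{I_j}$ is nonempty-indexed ($I_j\neq\emptyset$, since weights are at most $1$ but must sum to at least $1$), so $b_j$ lies in some $\Span(\bm A_i)$ and appears in some $\bm C_i$. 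Hence $\bm C\subseteq S=\Span(B)=\Span(G)$, and the head $\bm C$ is safe.
\end{itemize}

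\emph{Key step: $\bm w$ still covers $\bm C$ in $G$.} Take any $b_j$. It lies in $\Span(\bm A_i)$ for every $i\in I_j$, so by construction $b_j\in\bm C_i$ for all $i\in I_j$. Therefore $b_j\in\bigcap_{i\in I_j}\Span(\bm C_i)$, and $I_j\in\calI$ because the weights are unchanged. It follows that
\[
\bm C\subseteq S=\Span(B)\subseteq\Span\Bigl(\bigcup_{I\in\calI}\bigcap_{i\in I}\Span(\bm C_i)\Bigr),
\]
so $\rho^*_G(\bm C)\leq\sum_i w_i=\rho^*_Q(\bm C)$.

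\emph{Main obstacle.} The crux is choosing the basis $B$ from inside the union of the intersections $W_I$, rather than taking an arbitrary basis of $S$. This choice is what makes each basis vector land in \emph{every} reduct $\bm C_i$ with $i\in I_j$, so that the intersection structure driving the cover survives after passing to independent vectors. The only other point needing care is the existence of an optimal $\bm w$, which I would justify by the finiteness argument above.
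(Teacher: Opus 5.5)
Your proposal is correct and follows the paper's proof essentially step for step. Both fix an optimal cover $\bm w$, extract a basis of $\Span\bigl(\bigcup_{I\in\calI}\bigcap_{i\in I}\Span(\bm A_i)\bigr)$ from inside the intersections, let $\bm C_i$ be the basis vectors lying in $\Span(\bm A_i)$, show that $\bm w$ still covers $\bm C$ in $G$, and get the reverse inequality from $\Span(\bm C_i)\subseteq\Span(\bm A_i)$. Your extra checks (that the optimum is attained, that $G$ is independent, and that the head $\bm C$ is safe) fill in details the paper leaves implicit. One small correction: $I_j\neq\emptyset$ holds simply because the empty sum is $0<1$; the bound $w_i\le 1$ plays no role there.
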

\begin{proof}
  Let $\bm w$ be an optimal fractional edge cover of $\bm C$ w.r.t. $Q$, and denote by $\calI= \setof{I \subseteq [m]}{\sum_{i \in I}w_i \geq 1}$.  Let $b_1, \ldots, b_{k}$ be a basis for the subspace $\Span\left(\bigcup_{I \in \calI}\left(\bigcap_{i \in I} \Span(\bm A_i)\right)\right)$, such that, for each $b_j$, there exists $I_j \in \calI$, $b_j \in \bigcap_{i \in I_j}\Span(\bm A_i)$.
  For each atom $R_i(\bm A_i)$, define $\bm C_i$ the tuple of basis vectors $b_j$ that are in $\Span(\bm A_i)$, i.e. $\bm C_i = \Span(\bm A_i) \cap \set{b_1, \ldots, b_{k}}$, let $P_i(\bm C_i)$ be a reduct of the atom $R_i(\bm A_i)$, and define $G\leftarrow \bigwedge_{i=1,m} P_i(\bm C_i)$. To prove $\rho^*_{G}(\bm C) \leq \rho^*_Q(\bm C)$, we show that $\bm w$ is a valid fractional edge cover for $G$. 
  We check that $\bm C \subseteq \Span(\bigcup_{I \in \calI} \bigcap_{i \in I} \Span(\bm C_i))$ by noting that every basis vector $b_j$ is in the set $\bigcup_{I \in \calI} \bigcap_{i \in I} \Span(\bm C_i)$: this holds because, by construction, $b_j \in \bigcap_{i \in I_j} \Span(\bm A_i)$, and, by the definition of $\bm C_i$, we also have $b_j \in \bigcap_{i \in I_j} \Span(\bm C_i)$.  This proves $\rho^*_{G}(\bm C) \leq \rho^*_Q(\bm C)$, and equality follows as $\Span(\bm C_i)\subseteq \Span(\bm A_i) $ for all $i$.
\end{proof}
We can now prove Theorem~\ref{thm:sparse}.
\begin{proof}[Proof of Theorem~\ref{thm:sparse}.] 
  Set $\bm C := \bigcup_{i=1,m}\bm A_i$, and use Lemma~\ref{lemma:rhostar:geq} to obtain a reduct $G$ s.t. $\rho^*(Q)=\rho^*(G)$ and $G(\bm C)$, viewed as an atom, is a guard for $Q$.  We apply a substitution $\varphi$ to obtain a convolution-free query $G^\varphi$ with $\rho^* := \rho^*(G^\varphi) = \rho^*(G)=\rho^*(Q)$.  We use a WCOJ algorithm to compute $\sem{G^\varphi}^D$ in time $O(|D|^{\rho^*})$, and observe that $\sem{G}^D=\sem{G^\varphi}^D$ (Lemma~\ref{lemma:substitution}).  Finally, we use Lemmas~\ref{lemma:guard:1} and~\ref{lemma:guard:2} to compute $\sem{Q}^D$ in the same time.
\end{proof}
\begin{example} Consider the following query over $\Q^3$:
    \begin{align}
    \label{ex:triangle}
    Q(X,Y,Z)\leftarrow A(X,Y)\otimes B(Y,Z) \otimes C(X+Y,Y+Z)        
    \end{align}
    Theorem~\ref{thm:dense} leads to an algorithm with runtime $O(|\adom(D)|^3)$.  The weights $\bm w=(w_A,w_B,w_C)=(1,1,0)$ form a fractional edge cover of $Q$ (all variables are in $A,B$), which leads to an algorithm with runtime $O(|D|^2)$.  But we can do even better, by observing that $\bm w=(\frac{1}{2}, \frac{1}{2}, \frac{1}{2})$ is also a fractional edge cover.\footnote{We verify cond.~\eqref{eq:fractional:edge:cover}: As $\{\{A,B\}, \{B,C\}, \{C,A\}\}\subseteq \mathcal{I}$, it suffices to check that $\left(\Span A \cap \Span B\right) \cup \left(\Span B \cap \Span C\right) \cup \left(\Span C \cap \Span A\right)$ spans $\Q^3$.  We have $\Span A = \setof{(u,v,0)}{u,v}$, $\Span B=\setof{(0,v,w)}{v,w}$, $\Span C=\setof{(u,u+w,w)}{u,w}$ and the intersections are $\setof{(0,v,0)}{v }$, $\setof{(0,w,w)}{w}$, $\setof{(u,u,0)}{u}$: together, they span all of $\Q^3$.}  Then, by Theorem~\ref{thm:sparse}, we can evaluate the query in time $O(|D|^{3/2})$.
    We sketch the algorithm.  
    We start by applying the substitution: $U=X+Y,\; V=Y,\; W=Y+Z$, and the query becomes $Q^\varphi(U-V,V,W-V) \leftarrow A(U-V,V)\otimes B(V,W-V)\otimes C(U,W)$.  
    At this point we take the reduct of each atom onto the variables that are functionally determined by its attributes and obtain:
    $G(U,V,W)\leftarrow P_A(U,V)\land P_B(V,W) \land P_C(U,W).$
    This query $G$ can be evaluated in time $O(|D|^{3/2})$ using Generic Join~\cite{DBLP:journals/sigmod/NgoRR13}, then we can iterate over the $(u,v,w)\in \sem{G}$ answer tuples to compute $\sem{Q^\varphi}=\sem{Q}$ in the same time.
\end{example}
\begin{paragraph}{Discussion.} 
  CSPQs are special cases of queries with general constraints discussed in FAQ-AI~\cite{DBLP:conf/pods/KhamisCM0NOS19,DBLP:journals/tods/KhamisCMNNOS20,DBLP:journals/sigmod/0001023,DBLP:journals/tods/WangY26}.  Query~\eqref{ex:triangle} above is expressed in that formalism by replacing $X+Y,Y+Z$ with new variables $W,V$, and adding four inequality conditions:
\begin{align*}
    Q \leftarrow {} & A(X,Y)\land B(Y,Z) \land C(W,V) \land  W\leq X+Y\land   W\geq X+Y\land  V\leq Y+Z\land  V\geq Y+Z
\end{align*}
Previous techniques evaluate this query by starting from a tree decomposition that ensures that all variables of any constraint are included in two adjacent bags.  In particular, two adjacent bags must contain $X,Y,W$, and similarly from $Y,Z,V$.  The tree width is 2, leading to a runtime of $O(|D|^2)$; as we saw, our algorithm runs in time $O(|D|^{3/2})$.
{At a high level, since CSPQs are a strict subset of the queries in FAQ-AI, they enable query evaluation techniques like Theorem~\ref{thm:sparse} that are not possible in the general setting of FAQ-AI.  For example, the conclusion section of~\cite{DBLP:journals/corr/KhamisNRR15} discusses the query $Q \leftarrow R(X,Y)\wedge S(Y,Z)\wedge X+Y=Z$, observing that their current framework requires $O(N^2)$ time to evaluate the query (where $N=|R|+|S|$).  Expressed as a CSQP the query becomes $Q \leftarrow R(X,Y) \wedge S(Y,X+Y)$, and its fractional edge cover is $\rho^*(Q)=1$, because $\bm w = (1,0)$ is a fractional edge cover of $Q$. The algorithm implicit in the proof of Theorem~\ref{thm:sparse} evaluates this query in time $O(N)$.}
\end{paragraph}

%% file: 5-tree-decompositions.tex
\section{Tree Decompositions for CSPQ}
\label{sec:decomposition}
Tree decompositions (TDs) speedup  the  evaluation of a traditional CQ by factorizing it and applying WCOJs to each factor \cite{DBLP:journals/talg/GroheM14, DBLP:conf/pods/NgoPRR12, DBLP:conf/focs/AtseriasGM08}. In this section, we adapt the notion of TDs to CSPQs.
In our new definition the bags are sets of vectors, and the connectedness condition (also called the running intersection property) incorporates the structure of the vector space.\footnote{We study query evaluation only under worst-case complexity, where the complexity depends only on the input size, not the output size.  For that reason, we require all output variables to be part of a bag.}
\begin{definition}
\label{def:CTD}
Let $Q(\bm A_0)\leftarrow \bigotimes_{i=1,m} R_i(\bm A_i)$ be a CSPQ.  A \textit{Convolution Tree Decomposition} (CTD) of $Q$ is a pair $(T,\chi)$ where $T$ is a tree, and $\chi$ is a function that maps each node $v\in V(T)$ to a finite set of vectors $\chi(v)\subseteq \Span(Q)$ (called the \textit{bag} of $v$) satisfying:
    \begin{enumerate}
    \item For every $i=0,m$ there exists a $v\in V(T)$ such that $\bm A_i \subseteq \Span \chi(v)$.  We choose one such $v$ and call it the \textit{cover} of $\bm A_i$; the \emph{root} of $T$ is the node $r$ that covers $\bm A_0$.
    \item For each edge $uv\in E(T)$, $\Span \chi(T_u) \cap \Span\chi(T_v) = \Span \chi (u) \cap \Span \chi(v)$, where $T_u, T_v$ are the subtrees obtained by removing the edge $uv$,  $\chi(T_u):=\bigcup_{u'\in V(T_u)}\chi(u')$, eq. for $\chi(T_v)$.
    \end{enumerate}
\end{definition}
{A CTD generalizes TDs by replacing variables with vectors. In an ordinary tree decomposition, an atom is covered by a bag if all its variables occur in that bag; here, its argument vectors $\bm A_i$ must lie in the bag's span. The running-intersection condition is generalized in the same way: every vector in the intersection of the spans of the two sides of an edge must lie in the spans of both adjacent bags. This intersection is the interface through which the factorized computation passes messages. Any traditional TD can be viewed as a CTD by replacing each variable with its corresponding canonical basis vector. Under this identification, the coverage and running-intersection conditions are precisely those of traditional TDs.}
\begin{example}
\label{ex:tripleConvolution}
 Consider the following query:
    $Q(X+Y)\leftarrow A(X)\otimes B(Y)\otimes C(Z-X-Y)\otimes D(Z).$
    Clearly, this query can be solved in cubic time (e.g. using Theorem~\ref{thm:dense}), and one can check that $\rho^*(Q)=3$, thus both algorithms in Sec.~\ref{sec:basicAlgorithm} require cubic time.  However, we can solve this query in quadratic time, using the following CTD $(T,\chi)$:
    \begin{center}
    \begin{tikzpicture}[scale = 0.9, font=\footnotesize]
        \node[rectangle, draw] (ABCDE) at (3,0.25) {
        $X,Y$
        };
        \node[rectangle, draw] (DEFGH) at (0,0) {
        $Z-X-Y, Z$
        };
        \draw (ABCDE) -- (DEFGH) node[midway, above] {$X+Y$};
    \end{tikzpicture}
    \end{center} 
    There are two nodes: the root $\chi(r)=\set{X,Y}$ covering $A,B$ and a child $\chi(c)=\{Z-X-Y,Z\}$ covering $C,D$.  We also show a basis for their intersection, $X+Y$.
    We can use this CTD to factorize $Q$ into two subqueries:
    $\;Q^c(X+Y)\leftarrow C(Z-X-Y)\otimes D(Z).$ and $Q^r(X+Y)\leftarrow A(X)\otimes B(Y) \otimes Q^c(X+Y)$ 
    To show the correctness of this factorization, that is $\sem{Q^r}=\sem{Q}$, we perform the variable substitution $U=X,\;V=X+Y,\;W=Z$ and distribute:
    \begin{align*}
        \sem{Q^c}(v) = \bigoplus_{w\in \Q} C(w-v)\otimes D(w), \;\; \sem{Q}(v)
        &= \bigoplus_{u,w\in \Q} A(u)\otimes B(v-u)\otimes C(w-v)\otimes D(w) = \sem{Q^r}(v)
    \end{align*}
    These two subqueries $Q^c,Q^r$ can be solved in quadratic time, so $Q$ can be as well.
\end{example}
In general, we use CTDs to factorize CSPQs, generalizing the way that traditional TDs are used to factorize SPQs.  The key technical tool is the following lemma, which we prove in the appendix:
\begin{restatable}[Factorization Lemma]{lemma}{lemmafactorization}\label{lemma:factorization}
  Let $Q(\bm A_0)\leftarrow \bigotimes_{i=1,m} R_i(\bm A_i)\otimes \bigotimes_{j=1,n} S_j(\bm B_j)$ be a CSPQ and consider the following two queries:
  $$Q_1(\bm A_0)\leftarrow \bigotimes_{i=1,m} R_i(\bm A_i) \otimes Q_2(\bm B_0), \quad Q_2(\bm B_0)\leftarrow \bigotimes_{j=1,n} S_j(\bm B_j)$$
  where $\bm B_0$ is a tuple of vectors s.t. $\Span \bm B_0 = \left(\Span \bigcup_{i=0,m} \bm A_i\right) \cap \left(\Span \bigcup_{j=1,n} \bm B_j\right)$.
  Then, on any database $D$, we have $\sem{Q}^D=\sem{Q_1}^D$, where the atom $Q_2(\bm B_0)$ in $Q_1$ is instantiated as $\sem{Q_2}^D$.
\end{restatable}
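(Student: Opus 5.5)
We reduce the statement to a distributivity argument over a direct-sum decomposition of the active space, after a substitution that puts the queries into a convenient coordinate form.

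\textbf{Notation.} Write $S_A := \Span\bigcup_{i=0,m}\bm A_i$, $S_B := \Span\bigcup_{j=1,n}\bm B_j$ and $S := \Span(Q) = S_A + S_B$. The head satisfies $\bm A_0\subseteq S$, so $S_A\subseteq S$. Let $K := S_A\cap S_B = \Span\bm B_0$.

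\textbf{Step 1: choose adapted coordinates.} Pick a basis of $S$ consisting of three blocks:
\begin{itemize}
\item a basis $k_1,\dots,k_p$ of $K$;
\item vectors $a_1,\dots,a_q$ completing it to a basis of $S_A$;
\item vectors $b_1,\dots,b_r$ completing $k_1,\dots,k_p$ to a basis of $S_B$.
\end{itemize}
Since $\dim S=\dim S_A+\dim S_B-\dim K$, these vectors together form a basis of $S$. Let $\varphi\colon S\to\Q^{p+q+r}$ send this basis to the canonical basis. By Lemma~\ref{lemma:substitution} we may replace $Q$, $Q_1$, $Q_2$ by their images under $\varphi$ (restricted appropriately, since $\Span(Q_1)\subseteq S$ and $\Span(Q_2)\subseteq S_B$). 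We must also check that $Q_1$ is safe: its span contains $S_A$ and $\bm B_0$, so the head condition holds. Note that the substitution applied to $Q_2$ is just the restriction of $\varphi$ to $S_B$, and it is still injective.

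\textbf{Step 2: describe the three spans in coordinates.} Now $S$ corresponds to $\Q^{p+q+r}$ with coordinates $(\kappa,\alpha,\beta)$, and:
\begin{itemize}
\item every $\bm A_i$ (for $i=0,\dots,m$) lies in the $(\kappa,\alpha)$-coordinates;
\item every $\bm B_j$ lies in the $(\kappa,\beta)$-coordinates;
\item $\bm B_0$ spans exactly the $\kappa$-coordinates.
\end{itemize}
Hence the active spaces become concrete. That of $Q$ is $\Q^{p+q+r}$. That of $Q_2$ is identified with $\Q^{p+r}$ via $(\kappa,\beta)$. That of $Q_1$ is identified with $\Q^{p+q}$ via $(\kappa,\alpha)$, because $\Span(Q_1)=S_A+K=S_A$.

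One subtle point is that $\inner{\bm A_i}{v}$ depends only on the $(\kappa,\alpha)$-part of $v$, and $\inner{\bm B_j}{v}$ depends only on the $(\kappa,\beta)$-part. This holds because pairing a canonical-basis combination against a vector only reads the coordinates that are present.

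\textbf{Step 3: split the sum and distribute.} Write Eq.~\eqref{eq:semantic} for $\sem{Q}(\bm x)$ as a sum over triples $(\kappa,\alpha,\beta)$ subject to $\inner{\bm A_0}{(\kappa,\alpha)}=\bm x$. The head constraint involves no $\beta$, so we can split the sum as
$$\bigoplus_{(\kappa,\alpha)\colon \inner{\bm A_0}{(\kappa,\alpha)}=\bm x}\ \bigotimes_i R_i(\inner{\bm A_i}{(\kappa,\alpha)})\otimes\bigoplus_{\beta}\bigotimes_j S_j(\inner{\bm B_j}{(\kappa,\beta)}).$$
This step uses distributivity and commutativity of the semiring. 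All sums are effectively finite by Lemma~\ref{lem:sound}: only finitely many terms are non-$\0$, so the rearrangement is legitimate.

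\textbf{Step 4: identify the inner sum with $Q_2$.} The inner sum equals $\sem{Q_2}(\inner{\bm B_0}{\kappa})$. Indeed, in the definition of $\sem{Q_2}(\bm y)$ the constraint $\inner{\bm B_0}{(\kappa',\beta)}=\bm y$ fixes $\kappa'$ uniquely. Uniqueness follows because $\bm B_0$ spans the $\kappa$-coordinates, so the map $\kappa\mapsto\inner{\bm B_0}{\kappa}$ is injective. Substituting this back into the outer sum gives exactly $\sem{Q_1}(\bm x)$, which completes the proof.

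\textbf{Main obstacle.} The delicate part is Steps 1–2. We must correctly identify the three active spaces (quotients by different null spaces) with coordinate projections, and make sure the uniqueness of $\kappa$ given $\bm y$ holds even when $\bm B_0$ has repeated or redundant vectors. I would handle this by working entirely after the substitution, where every quotient becomes a coordinate projection and these facts can be read off directly.
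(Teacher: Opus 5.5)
Your route is the paper's own. You take a basis of $K=\Span\bm B_0$, complete it separately to bases of $S_A$ and $S_B$, substitute to canonical coordinates, split the sum by distributivity, and recognize the inner sum over $\beta$ as $\sem{Q_2}(\inner{\bm B_0}{\kappa})$. Steps 3 and 4 are fine as written.

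\textbf{The gap.} The unproved step is the equality $\Span(Q_1)=S_A$. You use it twice: to declare $Q_1$ safe, and to identify the active space of $Q_1$ with $\Q^{p+q}$ via $(\kappa,\alpha)$.

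Your justification is circular. The body of $Q_1$ consists of $\bm A_1,\dots,\bm A_m$ and $\bm B_0$. So $\Span(Q_1)=\Span\bigl(\bigcup_{i=1,m}\bm A_i\bigr)+K$, not $S_A+K$, because $S_A$ also contains the head vectors $\bm A_0$. Saying ``its span contains $S_A$'' therefore just restates the safety condition $\bm A_0\subseteq\Span(Q_1)$ that you set out to check.

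This is not a formality. It is the one place where the safety of $Q$ and the index $i=0$ in the hypothesis on $\bm B_0$ enter. For example, take $Q(X)\leftarrow R(Y)\otimes T(X)$ with $T$ as the only atom on the second side. Intersecting only the body spans would give $\bm B_0=()$ and the unsafe query $Q_1(X)\leftarrow R(Y)\otimes Q_2()$.

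\textbf{The fix.} The missing argument is short, and the paper opens its proof with it.
Take $c\in\bm A_0$. By safety of $Q$, $c\in\Span\bigl(\bigcup_{i=1,m}\bm A_i\bigr)+S_B$. So $c=a+b$ with $a\in\Span\bigl(\bigcup_{i=1,m}\bm A_i\bigr)$ and $b\in S_B$.
Then $b=c-a\in S_A$, hence $b\in S_A\cap S_B=\Span\bm B_0$, and therefore $c\in\Span(Q_1)$.
The reverse inclusion $\Span(Q_1)\subseteq S_A$ is trivial, since $\bm B_0\subseteq K\subseteq S_A$. Together these give $\Span(Q_1)=S_A$.

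With this in place, your coordinate description of the active space of $Q_1$ is justified. The final identification of the outer sum with $\sem{Q_1}(\bm x)$ then goes through exactly as in the paper.
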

The lemma says that we can split the atoms of $Q$ arbitrarily into $R_1, R_2, \ldots$ and $S_1, S_2, \ldots$, evaluate first the subquery consisting of atoms $S_j$, then use that result together with the atoms $R_i$ to get the final answer.  The only condition is that the output vectors $\bm B_0$  must cover the intersection of the two subqueries. This is where we need to allow the span of a query to be a strict suspace of $\Q^d$: in general, both $\Span(Q_1)$ and $\Span(Q_2)$ may be strict subspace of $\Q^d$, even when $\Span(Q)=\Q^d$.
We will apply Lemma~\ref{lemma:factorization} to fully factorize $Q$ using a CTD $(T,\chi)$. Denote $r$ its root, choose any child $c$, and let $T_r, T_c$ be the subtrees after removing the edge $rc$.  Denote by $R_i(\bm A_i)$ the atoms covered by $T_r$, and by $S_j(\bm B_j)$ the atoms covered by $T_c$.  Then we can write the query as $Q(\bm A_0)\leftarrow \bigotimes_iR_i(\bm A_i)\otimes \bigotimes_jS_j(\bm B_j)$ and apply Lemma~\ref{lemma:factorization}, by choosing any basis $\bm B_0$ of $\Span \chi(c)\cap \Span \chi(r)$. We then recursively factorize the subqueries, $Q_1, Q_2$, using the fact that they have strictly smaller CTDs, namely $T_c$ and $T_r$ respectively.  When the CTD becomes a single node $v$, then we are done.
However, like for traditional CQ, this runs into slight technical complications when the bag $\chi(v)$ is larger than the span of the atoms covered by $v$.  This problem is exacerbated by CSPQs.  For example, a bag $\chi(v)=\set{X,Y}$ may cover an atom $R(X)$ but  no atom for the variable $Y$. The subquery at $v$ would use fewer variables than mentioned in $\chi(v)$, violating the safety condition in Def.~\ref{def:cspq}.  This is resolved by shrinking bags' spans to only the space required to 1) cover relations and 2) maintain connectedness.
\begin{definition}
\label{def:nice}
Let $(T,\chi)$ be a CTD of the CSPQ $Q(\bm A_0)\leftarrow \bigotimes_iR_i(\bm A_i)$.  For all $v\in V(T)$, define $\zeta(v):=\bigcup_{i\colon \bm A_i \text{ is covered by }v}\bm A_i$, and extend $\zeta$ to subtrees $T'$ of $T$ via $\zeta(T'):=\bigcup_{v\in V(T')}\zeta(v)$.  We say that $(T,\chi)$ is a \textit{minimal} CTD when it additionally satisfies the following:
    \begin{itemize}
    \item[(3)] For each $v\in V(T)$, we have (where $T_u,T_v$ represent the two subtrees of $T-uv$)
        \begin{align}
            \Span \chi(v) = \Span \zeta(v) + \sum_{u \text{ is adjacent to } v} (\Span \zeta(T_u)\cap \Span \zeta(T_v)). \label{eq:niceTD}
        \end{align}
    \end{itemize}
\end{definition}
In any CTD, we have $\zeta(v) \subseteq \chi(v)$, and, $\Span \zeta(T_u) \cap \Span \zeta(T_v) \subseteq \Span \chi(T_u) \cap \Span \chi(T_v) \subseteq \Span \chi(v) $ (by the connectedness condition).  In a minimal CTD, $\chi(v)$ is only as large as needed to satisfy these two conditions.  We note that any CTD can be converted into a minimal CTD:
\begin{restatable}{lemma}{lemnice}
\label{lem:nice}
    Let $(T,\chi)$ be a CTD of $Q$.
    Then, there exists a minimal CTD $(T,\chi')$ of $Q$ where $\Span \chi'(v) \subseteq \Span \chi(v)$ for all $v\in V(T)$.    
\end{restatable}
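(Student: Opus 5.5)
The plan is to keep the tree $T$ and the cover assignment unchanged, and to define the new bags directly by the right-hand side of Eq.~\eqref{eq:niceTD}:
\[
  S'(v) := \Span \zeta(v) + \sum_{u \text{ adjacent to } v} \bigl(\Span \zeta(T_u)\cap \Span \zeta(T_v)\bigr),
\]
where $\zeta$ is computed with respect to the original covers. We let $\chi'(v)$ be any finite basis of $S'(v)$. Then we verify three things: inclusion in the old bags, conditions (1)--(2), and condition (3).

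First, inclusion. By the remark after Def.~\ref{def:nice}, every term of $S'(v)$ lies in $\Span\chi(v)$. Indeed, $\zeta(v)\subseteq \Span\chi(v)$ by the cover condition. Also $\Span\zeta(T_u)\cap\Span\zeta(T_v)\subseteq \Span\chi(T_u)\cap\Span\chi(T_v)=\Span\chi(u)\cap\Span\chi(v)$ by connectedness of the original CTD. Hence $\Span\chi'(v)\subseteq\Span\chi(v)$, and $\chi'(v)\subseteq\Span(Q)$.

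Condition (1) holds because each $\bm A_i$ covered by $v$ lies in $\zeta(v)\subseteq S'(v)$. So the same node still covers it, and the root is unchanged.

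Condition (3) needs care, because $\zeta$ must be recomputed for $(T,\chi')$. Since the covers are kept, $\zeta$ is literally the same function. Then Eq.~\eqref{eq:niceTD} holds by definition of $S'$. I would state explicitly that we retain the same choice of covers; this is allowed, since each covered $\bm A_i$ still lies in the new span.

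The main obstacle is condition (2) for $\chi'$: for each edge $uv$,
\[
  \Span\chi'(T_u)\cap\Span\chi'(T_v)=\Span\chi'(u)\cap\Span\chi'(v).
\]
The inclusion $\supseteq$ is trivial. For $\subseteq$, first observe that $\Span\chi'(T_u)=\Span\zeta(T_u)$. Each $S'(w)$ for $w\in T_u$ is a sum of $\Span\zeta(w)$ and interface terms of the form $\Span\zeta(T_{w'})\cap\Span\zeta(T_{w})$, each contained in $\Span\zeta$ of one side of the edge $ww'$. I expect to argue that this side can be taken within $T_u$ whenever the edge lies inside $T_u$. For the edge $uv$ itself, the interface term is contained in $\Span\zeta(T_u)$ directly.

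Still, an interface term at an internal edge $ww'$ lies in $\Span\zeta(T_w)\cap\Span\zeta(T_{w'})$, and one of these sides contains $T_v$. So I need an extra step. Take $I$ to be that interface term. Then $I\subseteq\Span\zeta(\text{side not containing } T_v)$, and that side lies inside $T_u$. Thus $I\subseteq\Span\zeta(T_u)$, and the claim $\Span\chi'(T_u)=\Span\zeta(T_u)$ follows.

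Symmetrically, $\Span\chi'(T_v)=\Span\zeta(T_v)$. Hence the left-hand side of condition (2) equals $\Span\zeta(T_u)\cap\Span\zeta(T_v)$. This is exactly the interface term added to both $S'(u)$ and $S'(v)$, so it is contained in $\Span\chi'(u)\cap\Span\chi'(v)$. This establishes (2) and completes the proof.
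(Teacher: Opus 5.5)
Your proof is correct and follows essentially the same route as the paper: take $\chi'(v)$ to be a basis of the right-hand side of Eq.~\eqref{eq:niceTD}, obtain $\Span\chi'(v)\subseteq\Span\chi(v)$ from the original connectedness condition, and prove condition (2) by showing $\Span\chi'(T_u)=\Span\zeta(T_u)$. That equality holds because every interface term at an edge inside $T_u$ lies in $\Span\zeta$ of the side of that edge contained in $T_u$. Your explicit remark that the covers, and hence $\zeta$, are kept unchanged, so condition (3) holds for $\chi'$ itself, is a detail the paper leaves implicit.
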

For minimal CTDs, we can now give the full factorization (the proof is in the appendix).
\begin{restatable}[Query Factorization]{theorem}{thmcorrectness}
\label{thm:correctness}
Let $Q(\bm A_0)\leftarrow \bigotimes_i R_i(\bm A_i)$ be a CSPQ and $(T,\chi)$ be a minimal CTD with root $r$.  For each node $v \in V(T)$ we define the following subquery:
    \begin{align}
        Q^v(\bm B^v)\leftarrow & \bigotimes_{v \text{ covers } R_i(\bm A_i)}R_i(\bm A_i)\otimes \bigotimes_{c\in child(v)} Q^c(\bm B^c). \label{eq:tdquery}
    \end{align}
    where $\bm B^r := \bm A_0$, and for $v \neq r, \bm B^v$ is any basis of $\Span \chi(v)\cap \Span \chi(\text{parent}(v))$.
    Then, $\sem{Q}=\sem{Q^r}$.
\end{restatable}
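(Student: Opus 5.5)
We argue by induction on the number of nodes of $T$, using the Factorization Lemma (Lemma~\ref{lemma:factorization}) to peel off one subtree at a time. To make the induction go through, we prove a slightly more general statement. Let $Q(\bm A_0)$ be any CSPQ whose head satisfies $\bm A_0\subseteq \Span\chi(r)$, and let $(T,\chi)$ be a minimal CTD of $Q$ with root $r$. Then $\sem{Q}=\sem{Q^r}$, where the $Q^v$ are as in~\eqref{eq:tdquery}. For the subtrees, the head is $\bm B^v$. The base case is a single node $r$: then $Q^r$ is literally $Q$, because every atom is covered by $r$.

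For the inductive step, we pick a child $c$ of $r$ and split $T$ along the edge $rc$ into $T_r\ni r$ and $T_c\ni c$. The atoms of $Q$ are partitioned into those covered in $T_r$ (the $R_i$) and those covered in $T_c$ (the $S_j$). We then apply Lemma~\ref{lemma:factorization} with $\bm B_0:=\bm B^c$, a basis of $\Span\chi(c)\cap\Span\chi(r)$.

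The key step, and the main obstacle, is checking the hypothesis of that lemma:
\[\Span\chi(r)\cap\Span\chi(c)=\Span\Bigl(\textstyle\bigcup_{i=0}^m\bm A_i\Bigr)\cap\Span\Bigl(\textstyle\bigcup_j\bm B_j\Bigr).\]
By the connectedness condition, the left side equals $\Span\chi(T_r)\cap\Span\chi(T_c)$. We need to show that this coincides with $\Span\zeta(T_r)\cap\Span\zeta(T_c)$, where $\bm A_0$ is treated as an atom covered at $r$. The inclusion $\supseteq$ is immediate, since $\zeta\subseteq\Span\chi$.

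For $\subseteq$ we use minimality, and first prove an auxiliary claim: $\Span\chi(T')=\Span\zeta(T')$ for every subtree $T'=T_u$ obtained by deleting an edge $uv$. By~\eqref{eq:niceTD}, each $\Span\chi(w)$ for $w\in T'$ is $\Span\zeta(w)$ plus terms $\Span\zeta(T_x)\cap\Span\zeta(T_w)$ over neighbours $x$ of $w$. If $x$ lies inside $T'$, or if $x$ is on the far side of the cut edge (so $T_w=T'$), such a term lies in $\Span\zeta(T')$. It does lie there, because $\Span\zeta(T_w)$ is either $\Span\zeta(T')$ or is contained in it for the appropriate orientation of the edge. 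I expect this orientation bookkeeping to be the delicate part: for each edge $wx$ we pick whichever of the two sides is contained in $T'$. With the claim in hand, $\Span\chi(T_r)\cap\Span\chi(T_c)=\Span\zeta(T_r)\cap\Span\zeta(T_c)$, which is exactly the required hypothesis.

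Lemma~\ref{lemma:factorization} then gives $\sem{Q}=\sem{Q_1}$, where $Q_2(\bm B^c)$ consists of the atoms of $T_c$ and $Q_1$ consists of the atoms of $T_r$ together with the new atom $Q_2(\bm B^c)$, covered at $r$. Two checks remain. First, $(T_c,\chi|_{T_c})$ is a minimal CTD of $Q_2$ rooted at $c$ with head $\bm B^c\subseteq\Span\chi(c)$. Second, $(T_r,\chi|_{T_r})$ is a minimal CTD of $Q_1$. For both, connectedness is inherited. Minimality at $c$ holds because the lost neighbour $r$ contributed $\Span\zeta(T_r)\cap\Span\zeta(T_c)=\Span\bm B^c$, which is now part of $\zeta(c)$ through the head of $Q_2$. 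The same argument applies at $r$, where the new atom $Q_2(\bm B^c)$ contributes $\Span\bm B^c$. Safety of $Q_2$, namely $\bm B^c\subseteq\Span\bigcup_j\bm B_j$, follows from the same intersection identity. By induction, $\sem{Q_2}=\sem{Q^c}$, computed recursively on $T_c$. Applying induction to $Q_1$ on $T_r$ then yields $\sem{Q_1}=\sem{Q^r}$ with $Q^r$ as in~\eqref{eq:tdquery}, which completes the proof.
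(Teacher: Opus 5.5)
Your strategy is the paper's own. You argue by induction on $|V(T)|$, split along an edge $rc$, and apply Lemma~\ref{lemma:factorization} with $\bm B_0:=\bm B^c$. You check its hypothesis via $\Span\chi(T')=\Span\zeta(T')$, which is the paper's Lemma~\ref{lem:zetachi}.

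The gap is in the step you treat as routine: that $(T_r,\chi)$ and $(T_c,\chi)$ are again \emph{minimal} CTDs of $Q_1$ and $Q_2$. You account only for the one term of~\eqref{eq:niceTD} that disappears at $r$ and at $c$, and tacitly assume every other interface term is unchanged. That is not automatic. Take an edge $wx\neq rc$ of $T_r$, with $x$ on the side away from $r$. The side containing $w$ (and $r$) is now measured with the new $\zeta$, which has lost all of $\zeta(T_c)$ and gained only $\bm B^c$; symmetrically for edges of $T_c$. So condition~(3) must be re-verified at every node, including at $r$ and $c$ for their remaining neighbours. This matters because minimality of the sub-decompositions is exactly what the inductive hypothesis consumes. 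At the next level you again need $\Span\chi$ of each side to equal the span of the current atoms, both to apply Lemma~\ref{lemma:factorization} and to keep the child queries safe.

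The missing fact is a modular-law identity. Let $U:=\Span\zeta(T_x)$, let $W$ be the span of the old $\zeta$ over the component of $T_r-wx$ containing $w$, and let $C:=\Span\zeta(T_c)$. You need
\[
U\cap(W+C)=U\cap(W+\Span\bm B^c).
\]
Since $\Span\bm B^c=\Span\zeta(T_r)\cap C=(U+W)\cap C$, the inclusion $\supseteq$ is trivial. For $\subseteq$, if $z=a+y\in U$ with $a\in W$ and $y\in C$, then $y=z-a\in(U+W)\cap C=\Span\bm B^c$. The same argument with the roles of $T_r$ and $T_c$ swapped handles $T_c$.

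The paper proves exactly this claim: for every edge $vu\neq rc$, $\Span\zeta(T_v)\cap\Span\zeta(T_u)=\Span\zeta_r(T_{rv})\cap\Span\zeta_r(T_{ru})$, plus the analogue in $T_c$. It does so via a basis-extension argument, and this is the bulk of its proof. Once you add it, the rest of your argument goes through as written.
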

Because $(T,\chi)$ is a minimal CTD, we have $\Span \chi(v) = \Span (Q^v)$ for all $v\in V(T)$.
We call $\sem{Q^c}$ the \emph{message} passed from the child $c$ to the parent $v$.

%% file: 6-query-evaluation.tex
\section{Tree Decomposition Based Query Evaluation and Width Measures}
\label{sec:algorithm}
In this section we show how to use CTDs to evaluate a CSPQ.  While the basic idea is similar to evaluating traditional CQs using traditional TDs, CSPQ create a few new challenges. For example, the size of a message relation of arity $k$ may exceed $O(|adom(D)|^k)$.  We describe three algorithms that tackle these challenges with runtimes characterized by three different width measures.
\begin{example} \label{ex:longConvolution} Throughout this section we will illustrate with the following running example.  The following query generalizes the $(k+1)$-SUM query to an interpretation in an arbitrary semiring:
    \begin{align}
      Q()\leftarrow {}& A_1(X_1)\otimes\cdots \otimes A_k(X_k)\otimes B(X_1+X_2+\cdots+X_k) \label{eq:ksum}
    \end{align}
    where $|A_1|, \ldots, |A_k|, |B| \leq n$.  Recall that a fine-grained complexity assumption is that $Q$ cannot be computed in time polynomially lower than $O(n^{\ceil{(k+1)/2}})$ \cite{DBLP:conf/icalp/AbboudL13, DBLP:conf/icalp/LincolnWWW16}: none of our algorithms break this lower bound, but they generalize differently to arbitrary CSPQs.  Consider the following CTD:
    \begin{center}
    \begin{tikzpicture}[scale = 0.9, font=\footnotesize]
        \node[rectangle, draw] (1) at (0,0) {
        $X_1,X_2$
        };
        \node[rectangle, draw] (2) at (3,0.075) {
        $X_1+X_2,X_3$
        };
        \node[rectangle, draw] (3) at (7.5,0.1875) {
        $X_1+X_2+X_3,X_4$
        };
        \node[rectangle, draw] (4) at (13,0.325) {
        $X_1+\dots +X_{k-1},X_k$
        };
        \draw (1) -- (2) node[midway, above] {$X_1+X_2$};
        \draw (2) -- (3) node[midway, above] {$X_1+X_2+X_3$};
        \draw[dotted] (3) -- (4);
    \end{tikzpicture}
    \end{center} 
    There are $k-1$ nodes, $v_1, v_2, \ldots, v_{k-1}$, which cover $\set{A_1,A_2}, \set{A_3}, \ldots, \set{A_{k-1}}, \set{A_k,B}$ respectively.  Then, Theorem~\ref{thm:correctness} leads to the following bottom-up (or left-to-right in our drawing) factorization:
    \begin{align*}
        &Q^{v_1}(X_1+X_2)\leftarrow A_1(X_1)\otimes A_2(X_2),\quad Q^{v_i}(X_1+\dots +X_{i+1})\leftarrow A_{i+1}(X_{i+1}) \otimes Q^{v_{i-1}}(X_1+\dots +X_i),\\
        &Q^{v_{k-1}}()\leftarrow A_k(X_k)\otimes B(X_1+\dots + X_k) \otimes  Q^{v_{k-2}}(X_1+\dots + X_{k-1})
    \end{align*}
    We can compute $\sem{Q^{v_1}}$ in time $O(|A_1|\cdot |A_2|)=O(n^2)$, but $\sem{Q^{v_1}}$ can have size $n^2$.  The message sizes increase in $i$, $|\sem{Q^{v_i}}| \leq n^{i+1}$, leading to a complexity of $O(n^k)$ for the naive bottom-up algorithm.  
  \end{example}
\subsection{Evaluation Algorithm with respect to the Largest Integer}
\label{sec:algorithm:integers}
We start by examining the effect of the values in the active domain. In this section, we assume the active domain consists of integers; this is without generality, because we can multiply all rational numbers in the active domain with the least common multiple of their denominators, compute $Q$, then divide all domain values of the result by the same least common multiple.  When $\adom(D) \subseteq \Z$, then we define $n(D) := \max_{z \in \adom(D)} |z|$.  We show here that $Q$ can be computed in time $O(n(D)^{\ctw(Q)+1})$, where $\ctw(Q)$ is the convolution tree width defined as follows:
\begin{definition}[Convolution Treewidth]\label{def:ctw}
  Let $(T,\chi)$ be a CTD of the CSPQ $Q$. The \textit{convolution treewidth} of $(T,\chi)$, and of $Q$, are defined as
    $$\ctw(T,\chi)=\max_{v\in V(T)}\dim \Span \chi(v)-1, \quad \ctw(Q)=\min_{(T,\chi)}\ctw(T,\chi).$$
\end{definition}
{Convolution treewidth measures a bag by the dimension of its span. The convolution treewidth is the maximum such dimension minus one; as with treewidth, this subtraction makes the runtime exponent $\ctw(Q)+1$. For integer-valued databases, intuitively, a $k$-dimensional bag has $O(n(D)^k)$ possible values.}
\begin{restatable}{theorem}{thmalgorithmIntegers}
\label{thm:algorithmIntegers}
Let $Q$ be a CSPQ and let $D$ be a database with integer domain values, i.e., $\adom(D)\subseteq \mathbb{Z}$.  Then $\sem{Q}^D$ can be computed in time $O(n(D)^{\ctw(Q)+1})$.
\end{restatable}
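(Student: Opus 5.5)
Take a CTD $(T,\chi)$ achieving $\ctw(Q)$ and, by Lemma~\ref{lem:nice}, replace it with a minimal CTD $(T,\chi')$. This keeps $\Span\chi'(v)\subseteq\Span\chi(v)$, so every bag still has dimension at most $k:=\ctw(Q)+1$. Theorem~\ref{thm:correctness} then factorizes $\sem{Q}$ into the subqueries $Q^v$ of Eq.~\eqref{eq:tdquery}, each with $\Span(Q^v)=\Span\chi'(v)$, so $\dim\Span(Q^v)\le k$. We compute the messages $\sem{Q^v}$ bottom-up. The query is fixed, so $T$ has constant size, and it suffices to compute each $\sem{Q^v}$, given its children's messages, in time $O(n(D)^k)$.

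For a single node $v$, we use the guard construction from the proof of Theorem~\ref{thm:dense}, with one change: we pick the basis of $\Span(Q^v)$ so that every basis vector has an integer range of size $O(n(D))$. The basis vectors are chosen from vectors that appear as arguments of atoms of $Q^v$. These are either arguments $a\in\bm A_i$ of input atoms $R_i$, or vectors of the chosen basis $\bm B^c$ of a child message. For an input atom, $\inner{a}{x}\in\adom(D)\subseteq[-n(D),n(D)]$ on the support, so the unary reduct has $O(n(D))$ values. For a message atom $Q^c(\bm B^c)$, we instead use the semantics: any tuple in $\supp\sem{Q^c}$ equals $\inner{\bm B^c}{x}$ for some $x$ whose subtree-atom arguments lie in $\adom(D)$. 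Each $b\in\bm B^c$ lies in $\Span\zeta(T_c)$, so $b=\sum_j\lambda_j a_j$ for fixed rationals $\lambda_j$ and fixed atom vectors $a_j$ from the subtree. Hence $\inner{b}{x}$ ranges over the values $\sum_j\lambda_j z_j$ with $|z_j|\le n(D)$. This is a set of rationals with bounded denominator (because the query is fixed) and of absolute value $O(n(D))$, hence of size $O(n(D))$.

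We then form a reduct or guard $G$ of $Q^v$ whose atoms are unary predicates on a basis $b_1,\dots,b_{k'}$ of $\Span(Q^v)$, with $k'\le k$. Each predicate is the interval, or bounded-denominator set, of possible values of $b_j$, and has size $O(n(D))$. After the substitution $b_j\mapsto e_j$ (Lemma~\ref{lemma:substitution}), $G$ becomes a cross product of size $O(n(D)^{k'})$. We enumerate $G$ and, for each tuple, look up every atom of $Q^v$ (input relations and child messages, stored in hash tables) in $O(1)$. Lemmas~\ref{lemma:guard:1} and~\ref{lemma:guard:2} then give $\sem{Q^v}$ in time $O(n(D)^{k})$. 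This replaces the unusable bound $|\sem{Q^c}|\le|\adom(D)|^{\dim}$, which is too weak because messages may not be supported on $\adom(D)$-tuples.

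The main obstacle is justifying that a unary predicate of size $O(n(D))$ is a valid reduct for each basis direction. The definition of a reduct requires it to cover the actual support, and for message atoms that support is not given by $\adom(D)$ directly. I would prove a small auxiliary claim by induction over $T$: for every $v$ and every $b\in\Span\zeta(T_v)$, the value $\inner{b}{x}$ on $\supp\sem{Q^v}$ lies in a fixed set $\{\sum_j\lambda_j z_j : z_j\in\Z,\ |z_j|\le n(D)\}$ of size $O(n(D))$. Minimality is used here, since it guarantees $\bm B^c\subseteq\Span\zeta(T_c)$. Alternatively, one can rescale the domain once by the common denominator of all $\lambda$'s, so that every relevant value becomes an integer bounded by $O(n(D))$. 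The hidden constants depend only on $Q$, which is fixed.
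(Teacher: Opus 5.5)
Your proposal is correct and follows essentially the paper's argument. You take a width-optimal minimal CTD and factorize via Theorem~\ref{thm:correctness}, show that every message coordinate is a fixed linear combination of input values so each bag's body has an active domain of size $O(n(D))$, and evaluate each bag with the dense unary-reduct guard of Theorem~\ref{thm:dense}. The only difference is bookkeeping: the paper scales each $\bm B^v$ by an integer so that messages stay integral and bounds them by a local bottom-up induction, whereas you allow bounded-denominator rationals (or rescale once) and bound the message values through a witness $x$ over the subtree's input atoms.
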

\begin{proof}[Proof Sketch]
  Let $(T,\chi)$ be a width optimal CTD of $Q$.  We evaluate $Q$ bottom-up, using the Query Factorization Theorem~\ref{thm:correctness}, where each message $Q^v$ is computed using the dense algorithms given by Theorem~\ref{thm:dense}.  Consider the query $Q^v(\bm B^v)$ shown in Eq.~\eqref{eq:tdquery} and notice that (assume $v\neq r$ here) we are allowed to pick the basis $\bm B^v$ as we please. 
  We pick the $b\in \bm B^v$ to be integral linear combinations of the vectors in the body of $Q^v$.
  Then, if $n$ is the largest input domain value occurring in any $\supp(R_i^D)$ or $\supp(\sem{Q^c}^D)$, then the output values in $\supp(\sem{Q^v}^D)$ are integers and of size $O(n)$, because each such value is an integral linear combination of integers (the picked integral linear combination is query dependent and, thus, only affects the constant hidden by the big $O$ notation).
By Lemma~\ref{lem:nice} we can assume that $(T,\chi)$ is minimal, which implies $\dim(Q^v)= \dim \Span \chi(v)$, therefore, by induction along $T$, query~\eqref{eq:tdquery} can be computed in time $O(n^{\ctw(Q)+1})$ by Theorem~\ref{thm:dense}.
\end{proof}
\begin{example} \label{ex:longConvolution:1} Continuing Example~\ref{ex:longConvolution}, we notice that the convolution treewidth is $1$.  Assume that $\adom(A_i) \subseteq \Z$, let $n_i$ be the largest value of $\adom(A_i)$, and set $n := \sum_i n_i$.  Then the largest value in $\adom(\sem{Q^{v_i}})$ is $n_1+\ldots+n_i$, thus $Q$ can be computed in time $O(n^2) = O(n(D)^2)$.\footnote{Of course, this does not contradict the $(k+1)$-SUM conjecture, because this runtime depends on the maximum value in the database instead of on the number of values $|\adom(D)|$ as in Example~\ref{ex:longConvolution}.  In particular, if instead we choose the input relation $A_i$ to have active domain $\adom(A_i) = \setof{j\cdot n^{i-1}}{j=0,1,\ldots,n-1}$, then $\supp(\sem{Q^{v_i}})=\{0,\dots, n^{i+1}-1\}$ has size $n^{i+1}$, leading to a total runtime of $O(n^k)=O(|\adom(D)|^k)$.}
\end{example}
\subsection{Evaluation Algorithm with respect to the Active Domain Size}
\label{sec:algorithm:activeDomain}
Next, we examine an alternative method to compute each bag of a CTD in a runtime that depends only on the size of active domain of the input, and not on those of the increased active domain of the messages.  
The exponent is given as follows:
\begin{definition}\label{def:vector-cover}
  Fix a CSPQ $Q(\bm A_0) \leftarrow \bigotimes_{i=1,m} R_i(\bm A_i)$ and
  let $\bm C_0\subseteq \Span (Q)$.  A \emph{vector cover} of $\bm C_0$ is a subset of vectors $\bm C\subseteq \bigcup_i \bm A_i$ such that $\bm C_0\subseteq \Span (\bm C)$.  We denote with $\alpha(\bm C_0):=\min_{\bm C} |{\bm C}|$ the minimal size of a vector cover of $\bm C$.  Observe that $\dim \Span (\bm C_0) \leq \alpha(\bm C_0)$.
\end{definition}
{A vector cover spans a set of vectors using body vectors. Its size can exceed $\dim \Span(\bm C_0)$: the body vectors that lie in $\Span(\bm C_0)$ may not suffice to span $\bm C_0$. A cover may therefore need vectors outside $\Span(\bm C_0)$, so $\Span(\bm C)$ can be strictly larger than $\Span(\bm C_0)$. This quantity captures how many body vectors are needed to cover the vectors in a bag.}
\begin{definition}[Active Treewidth]\label{def:atw}
  Let $(T,\chi)$ be a CTD of the CSPQ $Q$.  Then, the \textit{active treewidth} of $(T,\chi)$ and $Q$ are defined as
$$\atw(T,\chi)=\max_{v\in V(T)}\alpha(\chi(v))-1, \quad \atw(Q)=\min_{(T,\chi)}\atw(T,\chi).$$
\end{definition}
{Active treewidth measures a bag by the minimum number of body vectors needed to span it. As with treewidth, we subtract one from the maximum bag size. Unlike convolution treewidth, it reflects the cost of constructing a guard from the input relations rather than enumerating values in the bag's span. A vector cover of size $k$ yields a reduct guard computable in time $O(|\adom(D)|^k)$, which explains the exponent $\atw(Q)+1$.}
\begin{restatable}{theorem}{thmadom}
\label{thm:adom}
Let $Q$ be a CSPQ.
    Then, $\sem{Q}$ can be computed in time $O(|\adom(D)|^{\atw(Q)+1})$.
\end{restatable}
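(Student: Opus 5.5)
We take a width-optimal CTD $(T,\chi)$ of $Q$ with $\atw(T,\chi)=\atw(Q)=:a-1$. By Lemma~\ref{lem:nice} we replace it with a minimal CTD $(T,\chi')$ with $\Span\chi'(v)\subseteq\Span\chi(v)$. Minimality cannot increase $\alpha$: any vector cover of $\chi(v)$ also covers every vector of $\Span\chi(v)$, hence every vector of $\chi'(v)$. We then evaluate bottom-up along the Query Factorization Theorem~\ref{thm:correctness}. The crucial difference from Theorem~\ref{thm:algorithmIntegers} is how each bag query $Q^v$ is computed. We do not evaluate it densely over the possibly blown-up active domain of the incoming messages. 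Instead we build a guard for $Q^v$ directly from the \emph{original} input relations.

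Fix a node $v$ and a vector cover $\bm C=(c_1,\dots,c_k)\subseteq\bigcup_i\bm A_i$ of $\chi'(v)$ with $k\le a$. Discard dependent vectors so that $\bm C$ is linearly independent; note $\Span(\bm C)\supseteq\Span\chi'(v)=\Span(Q^v)$, possibly strictly. Each $c_j$ occurs in some atom $R_{i_j}(\bm A_{i_j})$ of $Q$, and the unary reduct $P_j(c_j)$ of that atom has size at most $|\adom(D)|$. The cross product $H(\bm C)\leftarrow\bigwedge_j P_j(c_j)$ is independent and can be computed in time $O(|\adom(D)|^k)$ via the substitution $c_j\mapsto e_j$ (Lemma~\ref{lemma:substitution}), exactly as in the proof of Theorem~\ref{thm:dense}.

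The main obstacle is that $H$ is not a reduct of $Q^v$: its atoms may come from relations covered elsewhere in $T$, and $\Span(\bm C)$ may exceed $\Span(Q^v)$. We therefore project $H$ onto the active space of $Q^v$. Choose a basis $\bm G$ of $\Span(Q^v)$ and express each $g\in\bm G$ as a linear combination of $\bm C$. Mapping every tuple of $\sem{H}$ accordingly yields a Boolean relation $G$ on $\bm G$ with $|G|\le|\adom(D)|^k$. We must then show $G$ is a valid guard for $Q^v$, i.e.\ adding $G(\bm G)$ to $Q^v$ does not change $\sem{Q^v}$. The plan for this step is to unfold $\sem{Q^v}$ via Theorem~\ref{thm:correctness} into a sum over the active space of the full subquery below $v$, which is a subspace of $\Span(Q)$. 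On every valuation $u$ with non-$\0$ contribution, every atom of $Q$ whose span meets $\bm C$ is nonzero, with one caveat: atoms outside the subtree $T_v$ are not part of $\sem{Q^v}$. We resolve this by first restricting all input relations to tuples that participate in some globally nonzero valuation. This restriction is sound, since such tuples alone determine $\sem{Q}$, and we can avoid computing it exactly by instead multiplying the guard into the query $Q'$ at the parent level, checking via Lemma~\ref{lemma:guard:1} that $\sem{Q}$ is unaffected. I expect the need to formulate the guard relative to the global valuation, rather than locally per bag, to be the delicate point. Once it is established, each relevant $[u]$ satisfies $P_j(\inner{c_j}{u})=\bm 1$, so $\inner{\bm G}{u}\in\supp(G)$.

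Given the guard, Lemma~\ref{lemma:guard:2} evaluates $Q^v\otimes G(\bm G)$ in time $O(|G|)=O(|\adom(D)|^{a})$. Messages are looked up in hash tables, and each message has size at most the number of guard tuples. Summing over the constantly many nodes of $T$ gives total time $O(|\adom(D)|^{\atw(Q)+1})$.
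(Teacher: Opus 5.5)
Your per-bag relation (the cross product of unary reducts over a minimum vector cover of $\chi(v)$, projected onto a basis of $\Span\chi(v)$) is the right object, and the passage to a minimal CTD is fine. The gap is in correctness: the property you set out to prove, that adding $G(\bm G)$ to $Q^v$ leaves $\sem{Q^v}$ unchanged, is false. When a minimum cover of $\chi(v)$ uses vectors of atoms covered outside $T_v$, the guard \emph{must} prune the message. Take the instance in the footnote to Example~\ref{ex:longConvolution:1} with $k=5$: the unguarded message at $v_3$ has $n^4$ tuples while the budget is $O(n^3)$. So your bound ``each message has size at most the number of guard tuples'' cannot hold for $\sem{Q^{v_3}}$. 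Restricting inputs to globally participating tuples does not rescue this, since a combination of individually participating tuples need not extend to a global valuation. For example, use query~\eqref{eq:ksum} with $k=5$, the path CTD, and $A_1=\{0,1\}$, $A_2=\{0,10\}$, $A_3=A_4=A_5=\{0\}$, $B=\{0,11\}$. Every tuple participates, via $(0,0,0,0,0)$ and $(1,10,0,0,0)$. Yet $\supp\sem{Q^{v_3}}=\{0,1,10,11\}$, while the guard built from $A_4,A_5,B$ admits only $X_1+X_2+X_3\in\{0,11\}$. Computing that restriction would also not fit the time bound.

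The missing step is to never guard $Q^v$ in isolation. Each $G^v$ is a reduct of the \emph{global} query $Q$, so Lemma~\ref{lemma:guard:1} applied to $Q$ gives $\sem{Q}=\sem{Q_{\mathrm{mod}}}$ for $Q_{\mathrm{mod}}:=Q\otimes\bigotimes_{v\in V(T)}G^v$. You then check that $(T,\chi)$ remains a minimal CTD of $Q_{\mathrm{mod}}$ when $G^v$ is assigned to node $v$:
\begin{itemize}
\item coverage is immediate, since the arguments of $G^v$ lie in $\Span\chi(v)$;
\item connectedness is untouched, since $\chi$ is unchanged;
\item condition~(3) survives because, by Lemma~\ref{lem:zetachi}, the enlarged $\zeta$-span of every subtree $T'$ is still $\Span\chi(T')=\Span\zeta(T')$.
\end{itemize}
Theorem~\ref{thm:correctness} applied to $Q_{\mathrm{mod}}$ then yields bag queries that each contain $G^v$ as an atom spanning the whole bag. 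Hence Lemma~\ref{lemma:guard:2} applies and each message has at most $|G^v|$ tuples. These messages are filtered versions of $\sem{Q^v}$, not $\sem{Q^v}$ itself, and correctness holds only at the root. This is the paper's proof. Your remark about invoking Lemma~\ref{lemma:guard:1} so that $\sem{Q}$ is unaffected points in this direction. But your plan still ends by trying to show that every nonzero valuation of the local subquery passes the guard, which is the false statement above.
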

\begin{proof}[Proof Sketch.]
  Fix a CTD $(T,\chi)$ of $Q$.  We combine the results of Sec.~\ref{sec:basicAlgorithm} and Sec.~\ref{sec:decomposition} to obtain this result.  The idea is to first add reducts to $Q$ such that after applying the Factorization Theorem~\ref{thm:correctness} that all subqueries $Q^v$ are guarded.  Fix a node $v \in V(T)$, denote $\bm C^v_0 := \chi(v)$, and let $\bm C=(b_1,\dots,b_k)$ be a vector cover of $\bm C^v_0$ of size $k:=\alpha(\bm C^v_0)$.  For each $b_j$, let $R_i(\bm A_i)$ be an atom such that $b_j\in \bm A_i$.  We compute the reducts $P_j(b_j)$ of $R_i(\bm A_i)$, and denote with $G^v(\bm C^v_0)\leftarrow \bigwedge_{j=1,k} P_j(b_j)$ a reduct of $Q$.  By Theorem~\ref{thm:dense}, $\sem{G^v}^D$ can be computed in time $O(|\adom(D)|^{\alpha(\chi(v))})$.  By Lemma~\ref{lemma:guard:1}, we can add an atom $G^v(\bm C^v_0)$ to $Q$ without affecting the semantics.  We repeat this process for all tree nodes $v\in V(T)$.  Then, $(T,\chi)$ is still valid CTD of the modified query.
  We compute the query bottom-up on the tree, using the Factorization Theorem~\ref{thm:correctness}.  Each query $Q^v$ can be computed in time $O(|\sem{G^v}|)$ by Lemma~\ref{lemma:guard:2}, because it is guarded by $G^v$.  Thus, the runtime is dominated by the time needed to compute reducts, which is $O(|\adom(D)|^{\atw(Q)+1})$.
\end{proof}
\begin{example}
  Continuing Example~\ref{ex:longConvolution} consider the bag $\chi(v_i) = \set{X_1+\cdots+X_i, X_{i+1}}$. There are two relevant vector covers for $\chi(v_i)$: either $X_1, X_2, \ldots, X_{i+1}$, or $X_{i+1}, X_{i+2}, \ldots, X_k, X_1+\cdots+X_k$.  We choose the smallest cover for each and add reducts to the query~\eqref{eq:tdquery}.  
  The reducts are:
 \begin{alignat*}{2}
   &G^{v_1}(X_1,X_2) \leftarrow A_1\land A_2,&  \quad &G^{v_2} \leftarrow  A_1\land A_2\land A_3, \quad \dots\\
    \quad &G^{v_{k-2}}(X_{k-1},X_1+\dots +X_{k-2}) \leftarrow A_{k-1} \land A_{k}\land B,& \quad &G^{v_{k-1}}(X_k,X_1+\dots+X_k) \leftarrow  A_k\land B
 \end{alignat*}
 The modified query is $Q'\leftarrow \bigotimes_i A_i \otimes B \otimes \bigotimes_vG^v$
 and after factorization we have
\begin{align*}
    &Q'^{v_1}(X_1+X_2)\leftarrow A_1\otimes A_2\otimes G^{v_1},\quad \quad  Q'^{v_i}(X_1+\dots +X_{i+1})\leftarrow A_{i+1} \otimes Q'^{v_{i-1}}\otimes G^{v_i},\\
    &Q'^{v_{k-1}}()\leftarrow A_k\otimes B \otimes  Q'^{v_{k-2}}\otimes G^{v_{k-1}}
\end{align*}
 This leads to an algorithm with runtime $O(|\adom(D)|^{\ceil{(k+1)/2}})$ where the dominating part is the processing of the node $v_{\ceil{(k-1)/2}}$ and the evaluation of the reduct $G^{v_{\ceil{(k-1)/2}}}$.
 Note that this is precisely the optimal runtime for the $(k+1)$-SUM often assumed.
\end{example}
In general, $\ctw(Q) \leq \atw(Q)$, but the runtimes of Theorem~\ref{thm:algorithmIntegers} and~\ref{thm:adom} are incomparable, because the basis of the runtimes are ordered in the opposite direction: $n(D) \geq |\adom(D)|$.  However, we notice that, if $Q$ is a traditional CQ, then $\ctw(Q)=\atw(Q)$ and both agree with the traditional treewidth of $Q$ because every minimal CTD is equivalent to a traditional TD.
\subsection{Evaluation Algorithm with respect to the Database Size}
\label{sec:algorithm:databseSize}
Lastly, we describe an algorithm that uses the same idea as employed in Sec.~\ref{sec:algorithm:activeDomain}, but uses Theorem~\ref{thm:sparse} to evaluate the reducts.  
\begin{definition}[Fractional Convolution Hypertree Width]\label{def:fctw}
  Let $(T,\chi)$ be a CTD of the CSPQ $Q$.  Then, the \textit{fractional convolution hypertree width} of $(T,\chi)$ and $Q$ are defined as
    $$\fctw(T,\chi)=\max_{v\in V(T)}\rho^*(\chi(v)), \quad \fctw(Q)=\min_{(T,\chi)}\fctw(T,\chi)$$
\end{definition}
{Fractional convolution hypertree width measures each bag by the fractional edge cover number of its span rather than by its dimension. The resulting width captures the cost of constructing a sparse guard for each bag: the guard can be computed by the sparse algorithm, and the largest bag cost dominates the factorized evaluation. Thus, this is the CSPQ analogue of fractional hypertree width.}
\begin{restatable}{theorem}{thmfractionaltw}
\label{thm:fractional}
  Let $Q$ be a CSPQ.  Then, $\sem{Q}$ can be computed in time $O(|D|^{\fctw(Q)})$.
\end{restatable}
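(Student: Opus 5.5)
The plan mirrors the proof of Theorem~\ref{thm:adom}, with the dense reduct evaluation replaced by the sparse one from Theorem~\ref{thm:sparse}. Fix a CTD $(T,\chi)$ of $Q$ with $\fctw(T,\chi)=\fctw(Q)$. By Lemma~\ref{lem:nice}, we may pass to a minimal CTD $(T,\chi')$ with $\Span\chi'(v)\subseteq\Span\chi(v)$. Since $\rho^*_Q$ is monotone under span inclusion (a cover of $\chi(v)$ covers every vector of $\Span\chi(v)$, because condition~\eqref{eq:fractional:edge:cover} is stated in terms of a span), the width does not increase. So we assume $(T,\chi)$ is minimal and optimal.

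\textbf{Step 1: build a guard for each bag.} For every node $v$, apply Lemma~\ref{lemma:rhostar:geq} with $\bm C:=\chi(v)$. This gives a reduct $G^v(\chi(v))\leftarrow\bigwedge_i P^v_i(\bm C^v_i)$ of $Q$ with $\rho^*_{G^v}(\chi(v))=\rho^*_Q(\chi(v))\le\fctw(Q)$. We need $G^v$ to be evaluable in time $O(|D|^{\rho^*_Q(\chi(v))})$. Its span is $\Span(b_1,\dots,b_k)$, where the $b_j$ form the basis built in that lemma from the optimal cover. This span contains $\Span\chi(v)$ but may be strictly larger.

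This is the main obstacle: the head $\chi(v)$ need not span all of $\Span(G^v)$, so $G^v$ is not simply a full join. To handle it, first compute the full query $\hat G^v(b_1,\dots,b_k)$, which has the same body as $G^v$. After the substitution $b_j\mapsto e_j$ (Lemmas~\ref{lemma:substitution} and~\ref{lemma:same:rhostar}), it is a convolution-free CQ whose fractional edge cover number over all its variables is at most $\sum_i w_i=\rho^*_Q(\chi(v))$, by Lemma~\ref{lemma:rhostar:cspq:cq}. The weight vector $\bm w$ covers every $b_j$ by construction. A WCOJ therefore computes $\hat G^v$ in time $O(|D|^{\rho^*_Q(\chi(v))})$, and its output has at most that size. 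Projecting each tuple onto $\inner{\chi(v)}{-}$ yields $\sem{G^v}$ as a Boolean relation in the same time. Over the Boolean semiring, projection is just a linear pass, and duplicates are handled by hashing.

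\textbf{Step 2: factorize and evaluate.} By Lemma~\ref{lemma:guard:1}, adding every atom $G^v(\chi(v))$ to $Q$ preserves $\sem{Q}$. Assign each new atom to node $v$; it is covered there because its vectors lie in $\Span\chi(v)$. Thus $(T,\chi)$ remains a CTD of the modified query $Q'$, and it remains minimal, since $\Span\zeta(v)$ can only grow while staying inside $\Span\chi(v)$. Apply the Query Factorization Theorem~\ref{thm:correctness} and evaluate the subqueries $Q'^v$ bottom-up. Each $Q'^v$ contains the atom $G^v(\chi(v))$ with $\Span\chi(v)=\Span(Q'^v)$, by minimality, so $G^v$ is a guard. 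Lemma~\ref{lemma:guard:2} then evaluates $Q'^v$ in time $O(|\sem{G^v}|)=O(|D|^{\fctw(Q)})$, regardless of the sizes of the incoming child messages. The lookups into the child messages $\sem{Q'^c}$ cost $O(1)$ each in our model.

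\textbf{Step 3: total cost.} The tree has constant size because the query is fixed. Summing over nodes, the guards cost $O(|D|^{\fctw(Q)})$ and the message passing costs the same, which gives the claimed bound $O(|D|^{\fctw(Q)})$.
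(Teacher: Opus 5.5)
Your proposal is correct and follows the paper's proof. The shared steps are: fix a width-optimal minimal CTD, build a reduct guard $G^v(\chi(v))$ for each bag via Lemma~\ref{lemma:rhostar:geq}, evaluate it with a worst-case optimal join, add it to $Q$ by Lemma~\ref{lemma:guard:1}, and evaluate each guarded subquery of the factorization by Lemma~\ref{lemma:guard:2}. Your extra care on three points only makes explicit what the paper leaves implicit when it evaluates $G^v$ with the sparse algorithm: the case where $\Span(G^v)$ strictly contains $\Span\chi(v)$ (handled by joining over all of $b_1,\dots,b_k$, each covered by $\bm w$, and then projecting), width monotonicity when passing to a minimal CTD, and preservation of minimality after adding the guards.
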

\begin{proof}[Proof Sketch]
  We proceed similar to in the proof of Theorem~\ref{thm:adom} but define the reducts differently.  Let $(T,\chi)$ be a width optimal CTD.  Fix a node $v\in T(V)$ and denote $\bm C^v_0 := \chi(v)$.  By Lemma~\ref{lemma:rhostar:geq} we can construct a reduct $G^v(\bm C^v_0) \leftarrow \bigwedge_{i=1,m}P_i(\bm C_i)$ such that $\rho^*_{G^v}(\bm C^v_0) = \rho^*(\bm C_0^v)$.  We evaluate $G^v$ in time $O(|D|^{\rho^*(\chi(v))})$ using Theorem~\ref{thm:dense}.  Then, we continue as in the proof of Theorem~\ref{thm:adom} verbatim.  The total runtime is $O(|D|^{\fctw(Q)})$ because it is dominated by the evaluation of the reducts.
\end{proof}
We sketch the algorithm on an example:
\begin{example}
    Let
    $Q(Y+Z)\leftarrow R_1(W+X,X+Y) \otimes R_2(W,X) \otimes R_3(X,Y)\otimes R_4(Y,Z) \otimes R_5(X+Y,Y+Z).$
    Further, consider the following minimal CTD with  width $\fctw=\frac{3}{2}$:
    \begin{center}
    \begin{tikzpicture}[scale = 0.9, font=\footnotesize]
        \node[rectangle, draw] (ABCDE) at (0,0) {
        $W,X,Y$
        };
        \node[rectangle, draw] (DEFGH) at (3,0.15) {
        $X,Y,Z$
        };
        \draw (ABCDE) -- (DEFGH) node[midway, above] {$X,Y$};
    \end{tikzpicture}
    \end{center} 
    To cover the left bag $\chi(c)$, we can use the fractional edge cover $(\frac{1}{2}, \frac{1}{2}, \frac{1}{2}, 0, 0)$ while we use $(0,0,\frac{1}{2},\frac{1}{2},\frac{1}{2})$ for the right bag $\chi(r)$.
    Then, to compute $\sem{Q}$, we would add the reducts
    \begin{align*}
        G^c(W,X,Y)\leftarrow R_1 \land R_2 \land R_3, \quad G^r(X,Y,Z)\leftarrow R_3\land R_4 \land R_5 ,
    \end{align*}
    to $Q$.
    Then, Theorem~\ref{thm:correctness} applied to $Q'\leftarrow \bigotimes_i R_i\otimes G^c \otimes G^r$ gives us the guarded queries
    \begin{align*}
        Q^c(X,Y)\leftarrow R_1 \otimes R_2 \otimes R_3 \otimes G^c, \quad  
        Q^r(Y+Z)\leftarrow R_4 \otimes R_5 \otimes Q^c  \otimes G^r.
    \end{align*}
    Each of these four queries $G^c,G^r,Q^c,Q^r$ can be solved in time $O(|D|^{3/2})$ and, $\sem{Q}=\sem{Q^r}$.
\end{example}
For traditional CQs $Q$, the value $\fctw(Q)$ coincides with the fractional tree width $fhtw(Q)$. Further, our width notions are invariant under variable substitutions.
For any CSPQ $Q$ and substitution $\phi$:
$$\ctw(Q)=\ctw(Q^\phi), \quad \atw(Q)=\atw(Q^\phi), \quad \fctw(Q)=\fctw(Q^\phi)$$

%% file: 7-other-fields.tex
\section{CSPQs over other Fields}
\label{sec:otherFields}
Up to this point, we always considered databases and CSPQs where the underlying field of domain elements were the rationals $\Q$.
However, this was only relevant in Theorem~\ref{thm:algorithmIntegers} and was actually never used anywhere else.
Thus, all other statements hold equally when $\Q$ is replaced by an arbitrary field $\field$.
However, 
one can also give an algorithm whose runtime has $\ctw(Q)+1$ as its exponent for fields with positive characteristic\footnote{Recall, a field $\field$ has characteristic $p> 0$ when adding $\1$ $p$-times gives $\0$ while adding $\1$ less times does not give $\0$.} $p>0$.
The idea of the algorithm in Section~\ref{sec:algorithm:integers} was to have a runtime of $O(n^{\ctw(T,\chi)+1})$ where $n$ is an upper bound on the number of field elements that may appear in one of the relations in the body of the query $Q$ or in one of the messages $\sem{Q^c}$.
With this in mind, let us denote the field elements used in $Q$ by $\adom(Q)\subseteq \field$ (i.e., the coefficients in $\bm A_0,\dots ,\bm A_m$)
and write $\adom(D,Q) :=\adom(D)\cup \adom(Q)$.
All field elements used by the messages must be derived from $\adom(D,Q)$.
Hence, if $\adom(D, Q)$ is contained in some finite subfield of $\field$ of size $n$, then also the messages will use only these $n$ field elements: We denote with $n(D,Q)$ the size of the smallest subfield containing $\adom(D,Q)$.
\begin{restatable}{theorem}{thmrelevantsubfield}
\label{thm:relevantSubfield}
For fields with positive characteristic, $Q$ can be evaluated in time $O(n(D,Q)^{\ctw(Q)+1})$.
\end{restatable}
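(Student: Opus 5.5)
We follow the proof of Theorem~\ref{thm:algorithmIntegers} almost verbatim, replacing the integer-size argument by a closure argument over a finite subfield. Let $\field_0 \subseteq \field$ be the smallest subfield containing $\adom(D,Q)$, so $|\field_0| = n := n(D,Q)$. Fix a width-optimal CTD $(T,\chi)$ of $Q$ and, by Lemma~\ref{lem:nice}, assume it is minimal. Then $\dim \Span(Q^v) = \dim \Span\chi(v) \le \ctw(Q)+1$ for every node $v$. We then evaluate the subqueries $Q^v$ of the Query Factorization Theorem~\ref{thm:correctness} bottom-up; by that theorem the root message is $\sem{Q}$.

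The key invariant, proved by induction on $T$, is that every tuple in the support of every message $\sem{Q^c}^D$ has all its coordinates in $\field_0$. For this we choose each basis $\bm B^v$ of $\Span\chi(v)\cap\Span\chi(\mathrm{parent}(v))$ carefully. All vectors occurring in $Q$ have coordinates in $\adom(Q) \subseteq \field_0$, so the intersection of the two spans is defined over $\field_0$: it is the solution space of a linear system with $\field_0$-coefficients, and Gaussian elimination stays inside $\field_0$. Moreover, each $b \in \bm B^v$ can be written as an $\field_0$-linear combination of the vectors in the body of $Q^v$, since these vectors span $\Span\chi(v)$ and the system expressing $b$ in terms of them again has coefficients in $\field_0$. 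Hence, for any term $[u]$ of the sum defining $\sem{Q^v}$ that is non-$\0$, the value $\inner{b}{u}$ is an $\field_0$-combination of values $\inner{a}{u}$. Each such $\inner{a}{u}$ is a coordinate of a tuple in $\supp(R_i^D)$ or, by the induction hypothesis, of a tuple in some child message. So all these values lie in $\field_0$, and since $\field_0$ is closed under field operations, $\inner{b}{u}\in\field_0$ as well. The root is handled the same way, because $\bm A_0$ has coefficients in $\adom(Q)$.

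The runtime then follows from the dense algorithm. At node $v$ we apply the proof of Theorem~\ref{thm:dense} to $Q^v$. Pick a basis $b_1,\dots,b_k$ of $\Span(Q^v)$ from the body vectors, with $k\le \ctw(Q)+1$, and form unary reducts $P_j(b_j)$. By the invariant, each $P_j$ has support inside $\field_0$, so $|P_j|\le n$. The guard $G$ is a cross product of size at most $n^k$. Lemmas~\ref{lemma:guard:1} and~\ref{lemma:guard:2} then compute $\sem{Q^v}$ in time $O(n^{\ctw(Q)+1})$. The tree has constant size because the query is fixed, so the total time is $O(n(D,Q)^{\ctw(Q)+1})$. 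As in Theorem~\ref{thm:dense}, the argument uses Lemma~\ref{lemma:substitution} over an arbitrary field; as Section~\ref{sec:otherFields} notes, the earlier results carry over verbatim.

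The main obstacle is the closure step. We must make sure every intermediate value produced by reducts and messages stays in $\field_0$, which requires all chosen bases and change-of-basis coefficients to be $\field_0$-rational. I would handle this by noting once that linear algebra over $\field$ on data defined over $\field_0$ can always be carried out inside $\field_0$. Positive characteristic is used only to guarantee that $\field_0$ can be finite; in characteristic $0$ the prime subfield $\Q$ is already infinite.
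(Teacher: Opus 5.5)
Your proposal is correct and follows essentially the same route as the paper. Both arguments pass to a minimal, width-optimal CTD whose bag spans are defined over the subfield; this is the paper's Lemma~\ref{lem:subfield}, which you use implicitly when you claim the edge intersections are $\field_0$-rational. Both then choose the $\bm B^v$ over that subfield, show by induction that every message has active domain inside $\field_p(\adom(D,Q))$, and apply the dense algorithm at each bag. The only difference is cosmetic: the paper uses a substitution to make the representative $u$ itself $\field_0$-rational, while you write each $b\in\bm B^v$ directly as an $\field_0$-combination of body vectors, which is slightly cleaner.
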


%% file: 8-conclusion.tex
\section{Conclusion and Future Work}

\label{sec:conslusions}

In this paper, we extended sum-product queries to allow linear expressions over variables, resulting in convolution sum-product queries (CSPQs). We then adapted worst-case optimal join algorithms to this setting. Further, we used convolution tree-decompositions to factorize CSPQs and presented three width measures that bound the complexity of CSPQ evaluation relative to different notions of input size. We then extended this to queries over arbitrary fields.

{Several directions suggest themselves for future work. A natural first step is to identify further problems that can be expressed directly as CSPQs and to study how the techniques developed here perform on them. Such examples could clarify which features of a problem are captured by the three width measures, and could suggest additional structural parameters or specialized evaluation methods.}

{Exact-weight problems provide a particularly interesting test case for this program. The exact-weight subgraph problems studied in~\cite{DBLP:conf/icalp/AbboudL13} can be expressed as CSPQs. However, our  current algorithms  do not always  match the conditional lower bounds proven in~\cite{DBLP:conf/icalp/AbboudL13} for these problems. Matching these bounds may require incorporating additional information, such as functional dependencies, into the evaluation framework and the definition of width.}
{For example, the 4-path exact-weight problem in~\cite{DBLP:conf/icalp/AbboudL13} asks whether a graph with $n$ nodes and weighted edges has a path $E(A,B),E(B,C),E(C,D)$ whose weights sum to exactly 0.  Theorem 2 in~\cite{DBLP:conf/icalp/AbboudL13} proves that, under the $k$-SUM conjecture, this is not possible in time $O(n^{3-\varepsilon})$.  It turns out to be difficult to model exact sum queries as either SPQs or CSQPs.  For example, to write this query as an SPQ we need a semiring $(\Q,\oplus, \otimes, \bm 0, \bm 1)$ where $\otimes = +$ (because we need to add the three weights $E(A,B)+E(B,C)+E(C,D)$), but how should we define $\oplus$?  We need an operator such that $x_1 \oplus x_2 \oplus \cdots$ distinguishes between the cases when $0 \in \set{x_1, x_2, \ldots}$ and $0 \not\in \set{x_1, x_2, \ldots}$, and at the same time $+$ (which is our $\otimes$) distributes over $\oplus$, i.e. $(x_1 \oplus x_2 \oplus \cdots) + y = (x_1 + y) \oplus (x_2 + y) \oplus \cdots$. No such operator seems to exists.  Another option for the 4-path query is to express it as a CSQP over the Boolean semiring, similarly to our 3SUM query: $Q()\leftarrow E(A,B,X)\wedge E(B,C,Y) \wedge E(C,D,-X-Y)$, where the third argument of $E$ carries the weight of the edge. Then, a valid Convolution Tree Decomposition is $\{A,B,C,X,Y\}-\{C,D,-X-Y\}$. However, to evaluate the ``size'' of the bag, we have to use the fact that the following functional dependencies hold: $A,B\rightarrow X,\;\; B,C\rightarrow Y,\;\; C,D\rightarrow -X-Y$. In that case, we can assert that the bags have size 3 and the algorithm runs in time $O(n^3)$.  This suggests that it may be possible to extend CTDs to evaluate this query in time $O(n^3)$, matching the lower bound in~\cite{DBLP:conf/icalp/AbboudL13}, but this requires new techniques like handling functional dependencies.  We leave this for future work.}

{On the algorithmic side, non-combinatorial techniques may further improve CSPQ evaluation over semirings such as the standard $(+,\cdot)$ semiring and the Boolean $(\lor,\land)$ semiring. Fast matrix multiplication and fast Fourier transforms are two examples of techniques that may apply, depending on the semiring and the structure of the query. A systematic characterization of when such techniques can be used for CSPQ evaluation would be valuable. Another direction is to develop a CSPQ analogue of $(\#)$-submodular width.}

{On the complexity-theoretic side, conditional lower bounds for CSPQs could be derived from conjectures such as $(\min,+)$-convolution and $k$-SUM. Conversely, hardness conjectures for CSPQs might yield tighter lower bounds for restricted classes of SPQs.}

{Finally, one could extend the algebraic foundations beyond linear algebra over fields. This may lead to query languages with non-linear expressions or with domain elements from non-field algebraic structures such as groups, together with new notions of decomposition and width suited to those settings.}

\section*{Acknowledgements}

The work of Merkl was partially supported by the Vienna Science and Technology Fund (WWTF) [10.47379/ICT2201].
Suciu was partially supported by NSF IIS 2314527 and NSF 2507117 III.

%% file: AA-3.tex
\section{Full Proofs for Section~\ref{sec:cspq}}
\label{app:sec3}
Here, we give the missing/complete proofs of the lemmas of  Section~\ref{sec:cspq}
\lemsound*
\begin{proof}
    We have to show two things:
    1) We have to show that the infinite sum on the right hand side of Equation~\eqref{eq:semantic} is well-defined, i.e., that it only consists of finitely many non-$\0$ terms.
    2) We have to show that the support of $\sem{Q}$ is finite.
    Fix a CSPQ $Q(\bm A_0)\leftarrow \bigotimes_{i=1,m}R_i(\bm A_i)$ over $\Q^d$ with active space $\activespace$.
    To prove 1), let us fix a $\bm x\in \Q^{|\bm A_0|}$ and assume towards a contradiction that infinitely many distinct $u_1,\dots \in \activespace$ such that for all $j$
    $$u_j\in \activespace\colon \bm x = \inner{\bm A_0}{u_j}, \quad \text{ and } \quad \bigotimes_{i=1,m} R_i(\inner{\bm A_i}{u_j})\neq \0.$$
    This implies that for any $i,j$ combination also $R_i(\inner{\bm A_i}{u_j})\neq \0$.
    However, the support of each $R_i$ is finite.
    Thus, there must exist some $u_{j_1}$ and $u_{j_2}$ such that for all $i$
    $$\inner{\bm A_i}{u_{j_1}} = \inner{\bm A_i}{u_{j_2}}.$$
    Consequently, $\inner{a}{u_{j_1}-u_{j_2}}=0$ for any $a\in \bigcup_{i=1,m}\bm A_i$.
    I.e., $u_{j_1}-u_{j_2}=0\in \Q^d/(\bigcup_{i=1,m}\bm A_i)^\bot$.
    Put differently, $u_{j_1}=u_{j_2}$, contradicting the assumption that $u_{j_1}$ and $u_{j_2}$ are distinct.
    Thus, $\sem{Q}\colon \Q^{|\bm A_0|}\rightarrow \sring$ is well-defined.
    For 2) assume towards a contradiction that infinitely many distinct $\bm x_1,\dots\in \Q^{|\bm A_0|}$ such that for all $j$
    $$\sem{Q}(\bm x_j)\neq \0.$$
    Thus there are also $u_1,\dots\in \activespace$ such that 
    $$\bm x_j= \inner{\bm A_0}{u_j}, \quad \text{ and }\quad \forall i=1,m \colon R_i(\inner{\bm A_i}{u_j})\neq \0.$$
    Again, the support of each $R_i$ is finite.
    Thus, there must exist some $u_{j_1}$ and $u_{j_2}$ such that for all $i$
    $$\inner{\bm A_i}{u_{j_1}} = \inner{\bm A_i}{u_{j_2}}.$$
    Then, also 
    $$\bm x_{j_1}=\inner{\bm A_0}{u_{j_1}} = \inner{\bm A_0}{u_{j_2}} =  \bm x_{j_2}$$
    This contradict the assumption that $\bm x_{j_1}$ and $\bm x_{j_2}$ are distinct.
\end{proof}
\lemmasubstitution*
\begin{proof}
  Denote $S_1 := \Span(Q_1)$, $S_2 := \Span(Q_2)$ and notice that $S_2 = \varphi(S_1)$.  Let $\activespace_1 :=V_1/S_1^\perp, \activespace_2:=V_2/S_2^\perp$ be the active spaces of $Q_1, Q_2$ respectively.  We claim that there exists a linear, injective function $\varphi^* : V_2/S_2^\perp \rightarrow V_1/S_1^\perp$ with the following property:\footnote{When both null spaces are $S_1^\perp=\set{0}$, $S_2^\perp=\set{0}$ then $\varphi^*$ is the \emph{adjoint} of $\varphi$ (the transposed matrix).}
  \begin{align}
    \forall v_1 \in V_1, \ \ \ \ \forall [u_2] \in V_2/S_2^\perp: && \inner{\varphi(v_1)}{[u_2]} = & \inner{v_1}{\varphi^*([u_2])}\label{eq:varphistar}
  \end{align}
  The claim suffices to prove the lemma, because $\varphi^*$ is also a bijection (since $\dim(V_2/S_2^\perp)= \dim(V_1/S_1^\perp)$, because $\dim(V_i/S_i^\perp)=\dim(S_i)$ for $i=1,2$, and $\dim(S_2)=\dim(\varphi(S_1))=\dim(S_1)$); then $\forall \bm x \in \Q^{|\bm A_0|}$,
    \begin{align*}
      \sem{Q_1}(\bm x)&=  \bigoplus_{[v_1]\in \activespace_1\colon \inner{\bm A_0}{[v_1]}=\bm x}\bigotimes_iR^D_i(\inner{\bm A_i}{[v_1]}) \\
      &= \bigoplus_{[v_2]\in \activespace_2\colon \inner{\bm A_0}{\varphi^*([v_2])}=\bm x}\bigotimes_iR^D_i(\inner{\bm A_i}{\varphi^*([v_2])})  \\
      &=   \bigoplus_{[v_2]\in \activespace_2\colon \inner{\varphi(\bm A_0)}{[v_2]}=\bm x}\bigotimes_iR^D_i(\inner{\varphi(\bm A_i)}{[v_2]})\\
      &=\bigoplus_{[v_2]\in \activespace_2\colon \inner{\bm B_0}{[v_2]}=\bm x}\bigotimes_iR^D_i(\inner{\bm B_i}{[v_2]}) \\
      &=\sem{Q_2}(\bm x)
  \end{align*}
  It remains to prove the existence of $\varphi^*$.  Fix a vector $v_2 \in V_2$, and consider the function $f : S_1 \rightarrow \Q$, given by $f(v_1) := \inner{\varphi(v_1)}{[v_2]}$.  Since this is a linear function, there exists a vector $z \in V_1$ such that $f(v_1)=\inner{v_1}{z}$ for all $v_1 \in V_1$.  Define $\varphi^*([v_2]):= [z] \in V_1/S_1$. The function $\varphi^*:\activespace_2\rightarrow \activespace_1$ is well defined, because if $\inner{v_1}{[z]}=\inner{v_1}{[z']}$ for all $v_1 \in S_1$, then $\inner{v_1}{z-z'}=0$ for all $v_1 \in S_1$, implying $z-z' \in S_1^\perp$, hence $[z]=[z']$.  This proves the existence of $\varphi^*$ satisfying~\eqref{eq:varphistar}; linearity of $\varphi^*$ follows immediately from the linearity of $\varphi$ and of the dot product.  It remains to prove that it is injective.  If $\varphi^*([v_2])=\varphi^*([v_2'])$ then $\inner{\varphi(v_1)}{[v_2]}=\inner{\varphi(v_1)}{[v_2']}$ for all $v_1 \in S_1$, which, by definition, implies $\inner{\varphi(v_1)}{v_2}=\inner{\varphi(v_1)}{v_2'}$, i.e. $v_2-v_2' \in S_2^\perp$, hence $[v_2]=[v_2']$.   This proves the claim, and completes the proof of the lemma.
\end{proof}

%% file: AA-4.tex
\section{Full Proofs for Section~\ref{sec:basicAlgorithm}}
Before giving the missing proofs of Section~\ref{sec:basicAlgorithm}, we formally state and proof that 1) we can evaluate convolution-free CSPQs using traditional conjunctive query evaluation algorithms and 2) we can compute a reduct $P(\bm C)$ of $R(\bm A)$ in linear time:
\begin{lemma}
    Let $Q(\bm A_0)\leftarrow \bigotimes_{i=1,m}R_i(\bm A_i)$ be a convolution-free CSPQ (that is, $\bigcup_{i=1,m}\bm A_i=\{e_1,\dots,e_d\}$).
    Further, let $C(X_1,\dots,X_d)\leftarrow \bigotimes_{i=1,m}R_i(\bm E_i)$ be the full SPQ with the same body as $Q$ but where $e_j$ is replaced by the corresponding variable $X_j$.
    Then, $\sem{Q}$ can be computed in time $O(|\adom(D)|^d)$ and time $O(|D|^{\rho^*(C)})$
\end{lemma}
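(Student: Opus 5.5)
We reduce $Q$ to the full SPQ $C$ and then aggregate. Since $Q$ is convolution-free, $\Span(Q)=\Span\{e_1,\dots,e_d\}=\Q^d$, so $(\Span(Q))^\perp=\set{0}$ and the active space $\activespace=\Q^d/\set{0}$ is canonically identified with $\Q^d$. Under this identification, for $v=(v_1,\dots,v_d)$ we have $\inner{e_j}{v}=v_j$. Hence $\inner{\bm A_i}{v}$ is exactly the tuple obtained by applying the valuation $X_j\mapsto v_j$ to $\bm E_i$. Comparing Eq.~\eqref{eq:semantic} with Eq.~\eqref{eq:spqsem}, this gives
\begin{align*}
\sem{Q}(\bm x)=\bigoplus_{v\in\Q^d\colon \inner{\bm A_0}{v}=\bm x}\sem{C}(v).
\end{align*}
So it suffices to compute $\sem{C}$, a finite list of pairs $(v,\sem{C}(v))$ by Lemma~\ref{lem:sound}. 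We then make one pass over this list, computing $\inner{\bm A_0}{v}$ in constant time (the query is fixed) and accumulating $\sem{C}(v)$ into a hash table keyed by $\inner{\bm A_0}{v}$. This aggregation costs $O(|\supp\sem{C}|)$, which is at most the time needed to produce $\sem{C}$.

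For the bound $O(|\adom(D)|^d)$, we compute $\sem{C}$ naively. Every $v$ with $\sem{C}(v)\neq\0$ has each coordinate $v_j$ in $\adom(D)$, because $e_j$ occurs in some $\bm A_i$, so $v_j$ appears in a tuple of $\supp(R_i^D)$. We therefore enumerate all $v\in\adom(D)^d$ and evaluate the product of the $m$ lookups $R_i^D(\inner{\bm A_i}{v})$, each in constant time (ignoring the log factor, as agreed in the computational model). For the bound $O(|D|^{\rho^*(C)})$, we run a WCOJ algorithm such as Generic Join on the support of $C$ viewed as a Boolean CQ; it enumerates all $v$ with every $R_i^D(\inner{\bm A_i}{v})\neq\0$ within that time. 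For each such $v$ we multiply the annotations, which adds only a constant factor.

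The only delicate point is the identification of $\activespace$ with $\Q^d$ and the claim that the sum over $\activespace$ counts each valuation exactly once, which needs $(\Span Q)^\perp=\set{0}$. A minor subtlety is repeated vectors inside an atom, e.g.\ $R(e_1,e_1)$, which correspond to the atom $R(X_1,X_1)$. Standard WCOJ algorithms handle these via a linear-time selection on the relation, so they do not affect the bound. Nothing else is an obstacle; the statement is essentially bookkeeping linking the two semantics.
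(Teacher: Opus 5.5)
Your proposal is correct and follows the same route as the paper's proof. Both identify the active space $\Q^d/\set{0}$ with valuations of $X_1,\dots,X_d$, compute the full SPQ $C$ (naively over $\adom(D)^d$ or with a WCOJ), and then group the resulting tuples $v$ by $\inner{\bm A_0}{v}$ and aggregate with $\oplus$. Your additional remarks only make explicit details the paper leaves implicit: that every coordinate lies in $\adom(D)$, that you enumerate the tuples whose factors are all non-$\0$ before multiplying, and that repeated vectors inside an atom are handled.
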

\begin{proof}
    First notice that the active space of $Q$ is $\activespace = \Q^d/\{e_1,\dots,e_d\}^\perp=\Q^d/\{0\}$.
    Thus, clearly, there is a bijection between valuations $v\colon \{X_1,\dots,X_d\}\rightarrow \Q$ and elements $[(v_1,\dots,v_d)]\in \activespace$ by equating $v_j=v(X_j)$ for all $j=1,d$.
    Furthermore, under this bijection
    $$\bigotimes_{i=1,m}R_i(\inner{\bm A_i}{[(v_1,\dots,v_d)]}) = \bigotimes_{i=1,m}R_i(v(\bm E_i))$$
    and the non-$\0$ term are precisely the support of $\sem{C}$.
    Thus, to evaluate $Q$ is suffices to go over the $(v_1,\dots,v_d)\in \sem{C}$ group them by $\inner{\bm A_0}{[(v_1,\dots,v_d)]}$ and aggregate via $\oplus$.
    That is,
    $$\sem{Q}(\bm x)=\bigoplus_{(v_1,\dots,v_d)\in \sem{C}}\bigotimes_{i=1,m}R_i(\inner{\bm A_i}{[(v_1,\dots,v_d)]}).$$
    As $C$ can be evaluated in time $O(|\adom(D)|^d)$ and time $O(|D|^{\rho^*(C)})$, also $Q$ can be evaluated in the same time frame.
\end{proof}
\begin{lemma}
\label{lem:computeReduct}
    Let $P(\bm C)$ be a reduct of $R(\bm A)$.
    Then, we can compute a reduct $P^D$, i.e., a Boolean $\sring$-relation $P^D$ such that $$\forall v\in \Q^d\colon R^D(\inner{\bm A}{v})\neq \bm 0 \Rightarrow P^{D}(\inner{\bm C}{v})=\bm 1,$$ in time $O(|R^D|)$.
\end{lemma}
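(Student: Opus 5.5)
The plan is to make the construction sketched in Section~\ref{sec:basicAlgorithm} precise. Write $\bm A=(a_1,\dots,a_r)$ and $\bm C=(c_1,\dots,c_p)$. Since $P(\bm C)$ is a reduct of $R(\bm A)$, we have $\bm C\subseteq\Span(\bm A)$. So each $c_k$ can be written as $c_k=\sum_{j=1}^r\lambda_{kj}a_j$ for some rationals $\lambda_{kj}$.

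These coefficients depend only on the query, which is fixed, so they are computed once by Gaussian elimination in constant time (data complexity). When the $a_j$ are linearly dependent the $\lambda_{kj}$ are not unique, and we simply fix one choice. Let $\Lambda$ denote the $p\times r$ matrix $(\lambda_{kj})$.

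Next, define $\supp(P^D):=\setof{\Lambda\bm u}{\bm u\in\supp(R^D)}$ and let $P^D$ take the value $\bm 1$ on this set and $\bm 0$ elsewhere. This is computed by one scan of the list representing $R^D$. Each tuple costs $O(pr)=O(1)$ arithmetic operations and one insertion into a hash table, or into a list that is deduplicated afterwards. The total time is therefore $O(|R^D|)$, modulo the log factor for sorting or lookup that the computational model already omits.

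It remains to check the reduct property. Fix $v\in\Q^d$ with $R^D(\inner{\bm A}{v})\neq\bm 0$, and set $\bm u:=\inner{\bm A}{v}\in\supp(R^D)$. By linearity of the dot product in its first argument,
\[
\inner{c_k}{v}=\sum_{j}\lambda_{kj}\inner{a_j}{v}=(\Lambda\bm u)_k
\]
for every $k$. Hence $\inner{\bm C}{v}=\Lambda\bm u\in\supp(P^D)$, that is, $P^D(\inner{\bm C}{v})=\bm 1$.

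There is no real obstacle here. The only subtle point is that the possible non-uniqueness of $\Lambda$ is harmless. The identity $\inner{c_k}{v}=(\Lambda\bm u)_k$ holds for every valid choice of $\Lambda$, because both sides equal $\inner{c_k}{v}$ for the actual vector $v$. So any fixed choice yields a correct reduct.
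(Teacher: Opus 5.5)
Your proposal is correct and matches the paper's proof. Both write each $c_k$ as a fixed linear combination of the $a_j$, push every tuple of $\supp(R^D)$ through $\Lambda$ in one linear scan, and check the reduct property by linearity of the dot product. Your remark that the choice of $\Lambda$ does not matter when the $a_j$ are dependent is a correct clarification that the paper leaves implicit.
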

\begin{proof}
Since $\bm C\subseteq \Span(\bm A)$ we can represent each vector $c_i \in \bm C$ as a linear combination
$c_i = \sum_{j=1,r} \lambda_{ij}a_j$ of the vectors in $\bm A= (a_1, \ldots, a_r)$.  
Then, let us define $P^D$ by iterating over each tuple $(u_1, \ldots, u_r) \in \supp(R^D)$ and setting $(\sum_{j=1,r} \lambda_{1j}u_j, \ldots, \sum_{j=1,r} \lambda_{pj}u_j)\mapsto \1$ in $P^D$.
All other tuples $\Q^{|\bm C|}$ are not in $\supp(P^D)$.
Then, clearly, $|P^D| \leq |R^D|$ and this process only takes time $O(|R^D|)$.  
Now lets fix a $v\in \Q^d$ be such that $R^D(\inner{\bm A}{v})\neq \0$.
Then, $\inner{\bm A}{v}=(u_1,\dots,u_r)\in \supp(R^D)$ and $(\sum_{j=1,r} \lambda_{1j}u_j, \ldots, \sum_{j=1,r} \lambda_{pj}u_j)\mapsto \1$ in $P^D$.
Further, notice that
\begin{align*}
\inner{\bm C}{v} & =\inner{(c_1,\dots,c_d)}{v}    \\
& =(\inner{c_1}{v},\dots,\inner{c_d}{v} )    \\
& =(\sum_{j=1,r}\lambda_{1,j}\inner{a_j}{v},\dots,\sum_{j=1,r}\lambda_{d,j}\inner{a_j}{v}) \\
& =(\sum_{j=1,r}\lambda_{1,j}u_j,\dots,\sum_{j=1,r}\lambda_{d,j}u_j).
\end{align*}
Thus, $P^{D}(\inner{\bm C}{v})=\bm 1$ as required.
\end{proof}
\lemmaguardone*
\begin{proof}
    We only have to show for any $v\in \Q^d$ that
    $$\bigotimes_{i=1,m}R_i(\inner{\bm A_i}{v}) = \bigotimes_{i=1,m}R_i(\inner{\bm A_i}{v}) \otimes \sem{G}(\inner{\bm C_0}{v}).$$
    To that end let us assume $\bigotimes_{i=1,m}R_i(\inner{\bm A_i}{v})\neq \0$ as equality otherwise trivially holds.
    Then, for every $i=1,m$ also $R_i(\inner{\bm A_i}{v})\neq \0$ and due to the definition of a reduct for every $j=1,n$ we have $P_j(\inner{\bm C_j}{v}) = \1$.
    Thus, $\bigwedge_{j=1,m}P_j(\inner{\bm C_j}{[v]}) = \1$ where $[v]\in \activespace_G$ and $\activespace_G$ is the active space of~$G$.
    Thus, moreover, $\sem{G}(\inner{\bm C_0}{[v]}) = \sem{G}(\inner{\bm C_0}{v}) = \1$.
    I.e.,
    $$\bigotimes_{i=1,m}R_i(\inner{\bm A_i}{v}) \otimes \sem{G}(\inner{\bm C_0}{v}) = \bigotimes_{i=1,m}R_i(\inner{\bm A_i}{v}) \otimes \1= \bigotimes_{i=1,m}R_i(\inner{\bm A_i}{v}) $$
    as required.
\end{proof}
\lemmaguardtwo*
\begin{proof}
  In the definition of $\sem{Q}^D$ given by Eq.~\eqref{eq:semantic}, we can restrict the summation to vectors $v \in \activespace = \Q^d/(\Span(Q))^\perp$ that satisfy $G(\inner{\bm C_0}{v})\neq \bm 0$. 
  Further, notice that for any tuple of vectors $\bm I$ the map $\inner{\bm I}{-}\colon \Q^d / \bm I^\perp \rightarrow \Q^{|I|}$ is injective.
  Thus, as $\Span(\bm C_0)=\Span(Q)$, we have $\Q^d/\bm C_0^\perp = \activespace$
  and 
  $\inner{\bm C_0}{-}\colon \activespace \rightarrow \Q^{|\bm C_0|}$ is injective.
  Computing the (partial) inverse map $\phi\colon \Q^{|\bm C_0|} \rightarrow  \activespace$
  is simple by applying basic linear algebra and expressable by a matrix.
  Thus, computing the $v\in \activespace$ such that $\inner{\bm C_0}{v}\neq \0$ is a matter of going over $\bm y\in \supp(G)$ and applying $\phi(\bm y)$.
  I.e., we can restrict the $\leq |G|$ many elements $\phi(\supp(G))\subseteq \activespace$.
\end{proof}

%% file: AA-5.tex
\section{Full Proofs for Section~\ref{sec:decomposition}}
Here, we give the missing/complete proofs for  Section~\ref{sec:decomposition}
\lemmafactorization*
\begin{proof}
    First we show that $Q_1$ is well-defined.
    That is, that $\bm A_0\subseteq \Span \bigcup_{i=1,m} \bm A_i \cup \bm B_0$.
    To that end, take $c\in \bm A_0$ which we write as $c=a+b\in \Span \bigcup_{i=1,m} \bm A_i\cup \bigcup_{j=1,n} \bm B_j \supseteq \bm B_0 $ where $a\in \Span \bigcup_{i=1,m}\bm A_i, b\in \Span \bigcup_{j=1,n} \bm B_j$.
    Thus, also, $b=c-a\in \Span\bigcup_{i=0,m} \bm A_i$ and consequently, $b\in \Span \bm B_0$.
    Therefore, $c=a+b\in \Span \bigcup_i \bm A_i \cup \bm B_0$.
    For convenience we write $V_1:=\Span(Q_1)=\Span \bm B_0 \cup \bigcup_{i=1,m} \bm A_i=\Span \bigcup_{i=0,m}\bm A_i, d_1:=\dim V_1$, and $V_2:=\Span(Q_2)= \Span \bigcup_{j=1,n} \bm B_j, d_2=\dim V_2$, and $V_0=\Span \bm B_0, d_0=\dim V_0$.
    Notice that $V_0=V_1\cap V_2$.
    Then, due to the Substitution Lemma~\ref{lemma:substitution}, we can assume
    $$V_1 = \Span \{e_1,\dots, e_{d_1}\}, \quad V_0= \Span \{e_{d-d_2+1}, \dots, e_{d-d_2+d_0}\}, \quad  V_2 = \Span \{e_{d-d_2+1},\dots, e_{d}\}.$$
    Note that $d+d_0\geq d_1+d_2$.
    To achieve this form, we can start with a basis $B_0$ of $V_0=V_1\cap V_2$ using the vectors $\bm B_0$, then extend it with elements $B_1$ from $V_1$ to a basis $B_0\cup B_1$ of $V_1$, and then extend $B_0$ again with elements $B_2$ from $V_2$ to a basis $B_0\cup B_2$ of $V_2$.
    I.e., in the end, 
    $$\Span B_0 = V_0=V_1\cap V_2,\quad \Span B_0\cup B_1 = V_1,\quad B_0 \cup B_2 = V_2.$$
    Now, for $\bm x \in \Q^{|\bm A_0|}$ we have
    \begin{align*}
        \sem{Q}(\bm x)=\bigoplus_{[u]\in \Q^d/\{0\}\colon \inner{\bm A_0}{u}=\bm x}  \bigotimes_{i=1,m} R_i(\inner{\bm A_i}{u})\otimes \bigotimes_{j=1,n} S_j(\inner{\bm B_j}{u}).
    \end{align*}
    We can uniquely break up $u$ into $u=u_1+u_0+u_2$ where 
    $$u_1\in \Span \{e_1,\dots, e_{d-d_2}\}, \quad u_0\in V_0, \quad u_2\in \Span \{e_{d-d_2+d_0+1},\dots, e_d\}.$$
    Thus, $u_1+u_0\in V_1$ and $u_0+u_2\in V_2$.
    Note that $\inner{\bm B_j}{u_1}=0$ and $\inner{\bm A_i}{u_2}=0$ for any $j=1,n,i=1,m$, as well as $\inner{\bm A_0}{u_2}=0$ and $\inner{\bm B_0}{u_1}=0=\inner{\bm B_0}{u_2}$.
    Therefore, further,
    \begin{align*}
        \sem{Q}(\bm x)& =\bigoplus_{u=u_1+u_1+u_2\in \Q^d\colon \inner{\bm A_0}{u_1+u_0}=\bm x}  \bigotimes_{i=1,m} R_i(\inner{\bm A_i}{u_1+u_0})\otimes \bigotimes_{j=1,n} S_j(\inner{\bm B_j}{u_0+u_2}) \\
        & =\bigoplus_{u_1+u_0 \colon \inner{\bm A_0}{u_1+u_0}=\bm x}  \bigotimes_i R_i(\inner{\bm A_i}{u_1+u_0})\otimes \bigoplus_{u_2}\bigotimes_j S_j(\inner{\bm B_j}{u_0+u_2}).
    \end{align*}
    Now let us consider a $\bm y\in \Q^{|\bm B_0|}$ such that $\bm y = \inner{\bm B_0}{u_0}$ and compute
    $$\sem{Q_2}(\bm y) = \bigoplus_{v\in \Q^d/V_2^\bot \colon \inner{\bm B_0}{v}=\bm y}\bigotimes_j S_j(\inner{\bm B_j}{v}).$$
    Note that we can equivalently write $v\in  \Q^d/V_2^\bot$ as $v=[v_0+v_2]_{V_2^\bot}$ for some unique vectors $v_0\in V_0, v_2\in \Span \{e_{d-d_2+d_0+1},\dots, e_d\}$.
    Further, note that $\inner{\bm B_0}{v_0+v_2}=\inner{\bm B_0}{v_0}=\bm y$ and, thus, $v_0=u_0$.
    Hence, the only freedom is in varying $v_2$ and 
    $$\sem{Q_2}(\bm y) = \bigoplus_{v_2}\bigotimes_j S_j(\inner{\bm B_j}{v_0+v_2}).$$
    Plugged into the previous equation
    $$\sem{Q}(\bm x) = \bigoplus_{u_1+u_0\colon \inner{\bm A_0}{u_1+u_0}=\bm x}  \bigotimes_i R_i(\inner{\bm A_i}{u_1+u_0}) \otimes \sem{Q_2}(\inner{\bm B_0}{u_0}).$$
    Further, as $\inner{\bm B_0}{u_1}=0$ we get
    $$\sem{Q}(\bm x) = \bigoplus_{u_1+u_0\colon \inner{\bm A_0}{u_1+u_0}=\bm x}  \bigotimes_i R_i(\inner{\bm A_i}{u_1+u_0}) \otimes \sem{Q_2}(\inner{\bm  B_0}{u_1+u_0}).$$
    Then, as $V_1^\perp = \Span \{e_{d-d_2+d_0+1},\dots, e_d\}$, we can equivalently sum over $[u_1+u_0] \in Q^d/V_1^\perp$.
    I.e., 
    \begin{align*}
        \sem{Q}(\bm x) & = \bigoplus_{[u_1+u_0]\in \Q^d/V_1^\bot\colon \inner{\bm A_0}{[u_1+u_0]}=\bm x}  \bigotimes_i R_i(\inner{\bm A_i}{[u_1+u_0]}) \otimes \sem{Q_2}(\inner{\bm B_0}{[u_1+u_0]}) \\
        & = \sem{Q_1}(\bm x).
    \end{align*}
    This completes the proof.
\end{proof}
\lemnice*
\begin{proof}
    To prove this, we simply start from $(T,\chi)$ and then, pick $\chi'(v)$ such that it is a basis of vector space on the right hand side of Equation~\eqref{eq:niceTD}.
    I.e.,
    $$\Span \chi'(v) = \Span \zeta(v) + \sum_{u \text{ is adjacent to } v}(\Span \zeta(T_u)\cap \Span \zeta(T_v))$$
    Then, we claim that $(T,\chi')$ is a minimal CTD as required.
    To that end, first notice that for any $v\in V(T)$ and subtree $T'$ of $T$
    $$\Span \zeta(v)\subseteq \Span \chi(v), \quad \Span \zeta(T')\subseteq \Span \chi(T').$$
    Thus, for any edge $uv$ in $T$ we have
    $$\Span \zeta(T_u)\cap \Span \zeta(T_v) \subseteq \Span \chi(T_u)\cap \Span \chi(T_v)= \Span \chi(u)\cap \Span \chi(v)\subseteq \Span \chi(v).$$
    Putting these fact together, ensures that $\Span \chi'(v)\subseteq \Span \chi(v)$ for any $v\in V(T)$.
    Thus, it only remains to show that $(T,\chi')$ is a minimal CTD of $Q$, i.e., properties (1)--(3) are satisfied.
    Properties (1), and (3) are satisfied by construction.
    For (1), notice that $\Span\zeta(v) \subseteq \Span \chi'(v)$.
    Hence, we are only left with property (2), the connectedness condition.
    To that end, let $uv$ be an edge in $T$.
    We claim
    $$\Span\chi'(T_u)=\Span \zeta(T_u), \quad \Span\chi'(T_v)=\Span \zeta(T_v).$$
    We only need to show ``$\subseteq$'' as ``$\supseteq$'' from the point-wise inclusion $\Span\zeta(v) \subseteq \Span \chi'(v)$.
    To the and, let $T'=T_v$ (the same works for $T'=T_u$).
    Then, to show that $\Span \chi'(w)$ is contained in $\Span \zeta(T')$ for every $w\in V(T')$, it suffices to show that all subspaces in the sum in Equation~\eqref{eq:niceTD} also lie in $\Span \zeta(T')$.
    Clearly, $\Span \zeta(w)\subseteq \Span \zeta(T')$.
    Then, for any edge $e$ incident to $w$, one of the subtrees of $T-e$ must be a subtree of $T'$ as well.
    For this subtree $T''$, $\Span \zeta(T'')\subseteq \Span \zeta(T')$ as $V(T'')\subseteq V(T')$.
    Consequently, $\Span \chi'(w)\subseteq \Span \zeta(T')$.
    As $w\in V(T')$ was arbitrary, $\Span \chi'(T')\subseteq \Span \zeta(T')$.
    Now, by the construction of $\chi'(u)$ and $\chi'(v)$, we have
    $$\Span \chi'(T_u) \cap \Span \chi'(T_v) = \Span \zeta(T_u) \cap \Span \zeta(T_v) \subseteq \Span \chi'(u) \cap \Span \chi'(v).$$
    Of course, the reverse inclusion holds trivially.
    Hence, we have shown property (2) of a CTD.
    Everything put together, we can assert that $(T,\chi')$ is a minimal CTD of $Q$ of the required form.
\end{proof}
Actually, in the proof we have even seen the following:
\begin{lemma}
    \label{lem:zetachi}
    Let $(T,\chi)$ be a minimal CTD of $Q$.
    Then, for any $uv\in E$, we have $\Span \chi(T_u)=\Span \zeta(T_u)$ and $\Span \chi(T_v)=\Span \zeta(T_v)$.
\end{lemma}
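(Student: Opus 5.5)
The plan is to extract the statement from the proof of Lemma~\ref{lem:nice}, using condition~(3) of Definition~\ref{def:nice} as the only input. Fix an edge $uv$ and consider $T' := T_v$; the case $T_u$ is symmetric. The inclusion $\Span \zeta(T') \subseteq \Span \chi(T')$ is immediate. Indeed, for each $w\in V(T')$ every atom covered by $w$ satisfies $\bm A_i\subseteq \Span\chi(w)$ by condition~(1), hence $\Span\zeta(w)\subseteq \Span\chi(w)$. Taking the union over $w\in V(T')$ gives the claim.

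For the reverse inclusion $\Span\chi(T')\subseteq \Span\zeta(T')$ it suffices to show $\Span\chi(w)\subseteq \Span\zeta(T')$ for each $w\in V(T')$. By minimality, Equation~\eqref{eq:niceTD} writes $\Span\chi(w)$ as $\Span\zeta(w)$ plus a sum over neighbours $x$ of $w$ of the spaces $\Span\zeta(T_x)\cap\Span\zeta(T_w)$, where $T_x,T_w$ are the two components of $T-wx$. So it is enough to show that each summand lies in $\Span\zeta(T')$.

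The first summand is clear, since $w\in V(T')$ gives $\Span\zeta(w)\subseteq\Span\zeta(T')$. For an edge $wx$, I claim that one of its two sides is contained in $T'$, and this is the step requiring care.

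If $wx = uv$, the side $T_v$ is exactly $T'$. Otherwise $wx$ is an edge inside $T'$, because $T'$ is a connected component of $T-uv$ containing $w$ and $wx\neq uv$. Removing $wx$ then splits $T$ into two parts, and only one of them contains the edge $uv$. The other part lies entirely within $T'$, so it is a subtree $T''$ with $V(T'')\subseteq V(T')$.

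Since the intersection $\Span\zeta(T_x)\cap\Span\zeta(T_w)$ is contained in $\Span\zeta(T'')$ for whichever side $T''$ is inside $T'$, and $\Span\zeta(T'')\subseteq\Span\zeta(T')$, every summand lies in $\Span\zeta(T')$. Summing these subspaces gives $\Span\chi(w)\subseteq\Span\zeta(T')$. Taking the union over $w\in V(T')$ and then the span yields $\Span\chi(T')\subseteq\Span\zeta(T')$.

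Combining the two inclusions gives equality for $T_v$, and by symmetry for $T_u$. The argument uses only the tree structure and the minimality equation, not the connectedness condition~(2).
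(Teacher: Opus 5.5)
Your proposal is correct and follows essentially the same route as the paper. The paper proves this lemma by pointing back to the argument inside the proof of Lemma~\ref{lem:nice}: every summand of Equation~\eqref{eq:niceTD} at a node $w\in V(T')$ lies in $\Span\zeta(T')$, because one side of each edge at $w$ is contained in $T'$, and for a minimal CTD this applies verbatim to $\chi$. You carry out that argument directly, and you justify more carefully why one side of each incident edge lies inside $T'$.
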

\begin{proof}
    Recall that in the proof of Lemma~\ref{lem:nice} we have shown this for $\chi'$.
    However, because here $(T,\chi)$ is already minimal, $\Span \chi'(w)=\Span \chi(w)$ for any $w\in V(T)$.
\end{proof}
\thmcorrectness*
\begin{proof}
    We prove this by induction on the size of the CTD $(T,\chi)$.
    When $T$ only consists of a single node $r$, then the statement trivially holds as $\bm B^v = \bm A_0$ and every body atom of $Q$ is covered by $r$.
    Equally, $Q^v$ simply is of the same form as $Q$:
    $$Q^v(\bm A_0)\leftarrow \bigotimes_i R_i(\bm A_i)$$
    Now let the root $r$ of $T$ at least contain some child $c$.
    Further, let $T_c,T_r$ be the two connected components of $T-rc$ where $c$ is the root of $T_c$ and $r$ is the root of $T_r$.
    Then, let 
    \begin{align*}
        I_c&:=\{i=1,m\mid R_i(\bm A_i) \text{ is covered by some } v\in V(T_c)\},\\ 
        I_r&:=\{i=1,m\mid R_i(\bm A_i) \text{ is covered by some } v\in V(T_r)\}.
    \end{align*}
    Then, let us apply the Factorization Lemma~\ref{lemma:factorization} to
    $$Q(\bm A_0)\leftarrow \bigotimes_{i\in I_r}R_i(\bm A_i) \otimes \bigotimes_{i\in I_c}R_i(\bm A_i) $$
    where we use $\bm B_0=\bm B^c$.
    We show that $\bm B_0=\bm B^c$ valid set of vectors.
    To that end, notice that due to Lemma~\ref{lem:zetachi}
    \begin{align*}
        (\Span \bm A_0\cup \bigcup_{i\in I_r}\bm A_i)\cap (\Span \bigcup_{i\in I_c}\bm A_i)& = \Span \chi(T_r)\cap \Span \chi(T_c) \\
        & = \Span \chi(r)\cap \Span \chi(c) \\
        & = \Span \bm B^c.
    \end{align*}
    Thus, we can apply the Factorization Lemma~\ref{lemma:factorization}.
    That is, for 
    $$Q_1(\bm A_0)\leftarrow \bigotimes_{i\in I_r}R_i(\bm A_i)\otimes Q_2(\bm B^c), \quad Q_2(\bm B^c)\leftarrow \bigotimes_{i\in I_c}R_i(\bm A_i)$$
    we have $\sem{Q}=\sem{Q_1}$.
    Then, we claim that $(T_c,\chi)$ is a minimal CTD of $Q_2$
    while $(T_r,\chi)$ is a minimal CTD of $Q_1$
    where we use $r\in V(T_r)$ to cover $Q_2(\bm B^c)$.
    First, notice that due to Lemma~\ref{lem:zetachi}  for $v\in V(T_r)$ we have $\chi(v)\subseteq \chi(T_r) = \zeta(T_r)=\Span \bm A_0 \cup \bigcup_{i\in I_r}\bm A_i = \Span \bm B^c \cup \bigcup_{i\in I_r}\bm A_i$, and  for $v\in V(T_c)$ we have $\chi(v)\subseteq \chi(T_c) = \zeta(T_c)=\Span \bigcup_{i\in I_c}\bm A_i$.
    Thus, $(T_c,\chi), (T_r,\chi)$ are at least candidates for being CTDs.
    Then, property (1) is clearly satisfied.
    Further, for property (2), simply notice that for any edge $uv\in T_c$ or $uv\in T_r$, the subtree $T_u,T_v$ to consider are smaller than they would be in $T$.
    Hence, the ``$\subseteq$'' is easier to satisfy in $T_c$ and $T_r$ than in $T$.
    I.e., the ``$\subseteq$'' relation holds in $T_c$ and $T_r$ as it does in $T$.
    The ``$\supseteq$'' relation holds trivially.
    Consequently,  $(T_c,\chi)$ and $ (T_r,\chi)$ are in fact CTDs.
    Now, let us consider property (4).    
    To that end, let $\zeta_c,\zeta_r$ be the function $\zeta$ of Definition~\ref{def:nice} for $(T_c,\chi)$ and $ (T_r,\chi)$, respectively.
    Now, let $vu\in T$ be an edge $vu\neq rc$ where $v$ is the parent of $u$.
    Let us assume that $vu$ is in $T_r$ and let us consider the two subtrees $T_v,T_u$ of $T-vu$ and the two subtrees $T_{rv},T_{ru}$ of $T_r-vu$.
    Then, we claim
    $$\Span\zeta(T_v)\cap \Span\zeta(T_u)=\Span\zeta_r(T_{rv})\cap \Span\zeta_r(T_{ru})$$
    First, notice that $T_u=T_{ru}$ and $\zeta_r(T_{ru}) = \zeta(T_{u})$.
    For $\Span\zeta_r(T_{rv})$, let us construct a basis of $\Span \bigcup_{i=1,m} \bm A_i = \Span \zeta(T_r)+\Span \zeta(T_c)$ in the following way (Due to Lemma~\ref{lem:zetachi}, the intersection is $\Span \bm B^c$):
    We start with a basis $B_{rc}\subseteq \bm B^c$ that spans $\Span \bm B^c$, and extend it with elements $B_c\subseteq \Span \zeta(T_c)$ to a basis $B_{rc}\cup B_c$ of $\Span \zeta(T_c)$.
    At the same time, we extend $B_{rc}$ with $B_{rv}\subseteq \Span \zeta_r(T_{rv})$ to a basis $B_{rc}\cup B_{rv}$ of $\Span \zeta_r(T_{rv})$ and, further, with $B_{r}\subseteq \Span \zeta_r(T_{r})$ to a basis $B_{rc}\cup B_{rv}\cup B_{r}$ of $\Span \zeta_r(T_{r})$.
    Then, $\Span\zeta(T_v)\subseteq \Span (B_{rc}\cup B_c \cup B_{rv})$ while $\Span\zeta(T_u)\subseteq \Span (B_{rc}\cup B_{rv}\cup B_{r})$.
    Thus, 
    $$\Span\zeta(T_v)\cap \Span\zeta(T_u)\subseteq \Span (B_{rc}\cup B_{rv}).$$
    At the same time 
    $$\Span\zeta(T_v) \cap \Span (B_{rc}\cup B_{rv}) = \Span \zeta_r(T_{rv}).$$
    Thus, 
    $$\Span\zeta(T_v)\cap \Span\zeta(T_u)=\Span\zeta(T_v)\cap  \Span (B_{rc}\cup B_{rv})\cap \Span\zeta(T_u)=\Span\zeta_r(T_{rv})\cap \Span\zeta_r(T_{ru})$$
    as claimed.
    We never used the fact that $r$ is the parent of $c$, hence, the analog statement holds for $vu$ in $T_c$ (simply switch the roles of $r$ and $c$).
    Further, $\zeta_r(w)=\zeta(w)$ for $w\in V(T_r), w\neq r$ as well as $\zeta_c(w)=\zeta(w)$ for $w\in V(T_c), w\neq c$.
    Thus, property (3) is verified for all but $w=r$ or $w=c$ as in Equation~\eqref{eq:niceTD} we can simply replace all $\zeta$ with $\zeta_r$ (resp. $\zeta_c$) as well as $T_u, T_v$ with $T_{ru}, T_{rv}$ (resp. $T_{cu}, T_{cv}$).
    For $w=r$ or $w=c$ we simply shift the term $\Span \bm B^c = \Span \zeta(T_r)\cap \Span \zeta(T_c)$ from the big sum to the left.
    That is, notice $\Span \zeta(w) + \Span \bm B^c= \Span \zeta_r(w)$ (resp. $\Span \zeta(w) + \Span \bm B^c= \Span \zeta_c(w)$).
    All other terms are taken care of as before.
    Thus, we conclude that $(T_c,\chi), (T_r,\chi)$ are in fact minimal CTDs.
    Thus, by induction, we assert that $\sem{Q_2}=\sem{Q^c}$.
    Further, let
    $$Q'_1(\bm A_0)\leftarrow \bigotimes_{i\in I_r}R_i(\bm A_i)\otimes Q^c(\bm B^c).$$
    Then, clearly, $\sem{Q}=\sem{Q_1}=\sem{Q'_1}$.
    Now we can apply the induction also to $Q'_1$
    and get $Q''_1$ of the form
    $$Q''_1(\bm A_0)\leftarrow \bigotimes_{r \text{ covers } R_i(\bm A_i)} R_i(\bm A_i) \otimes Q^c(\bm B^c)\otimes \bigotimes_{c'\in child(v)} Q^{c'}(\bm B^{c'})$$
    where $c'$ are the children of $r$ in $T_r$, i.e., this excludes $c$.
    However, note that $Q''_1$ is syntactically the same as $Q^r$.
    Thus, $\sem{Q}=\sem{Q''_1}=\sem{Q^r}$.
\end{proof}

%% file: AA-6.tex
\section{Full Proofs for Section~\ref{sec:algorithm}}
\label{app:arxiv1}
In this section, we give the full proofs of Theorems~\ref{thm:algorithmIntegers},~\ref{thm:adom}, and~\ref{thm:fractional} but also prove informal statements that $\ctw(Q)=\atw(Q)=\tw(Q)$ for traditional SPQs $Q$, and that $\ctw(Q)=\ctw(Q^\phi), \; \atw(Q)=\atw(Q^\phi), \; \fctw(Q)=\fctw(Q^\phi)$ for CSPQs $Q$ and substitutions $\phi$.
\thmalgorithmIntegers*
\begin{proof}[Proof Sketch]
    Due to Lemma~\ref{lem:nice}, we can assume, w.l.o.g, that there exists a minimal CTD $(T,\chi)$ such that $\ctw(T,\chi)=\ctw(Q)$.
    We evaluate $Q$ bottom-up, using the Query Factorization Theorem~\ref{thm:correctness}.
    We pick the $\bm B^v$ bottom-up in the following way:
    Fix a $v\in V(T)$ where $v$ is not the root of $T$ and all the $\bm B^c$ for children $c$ of $v$ have already been picked. 
    Then, let $\bm D^v = \{d_1,\dots,d_{l_v}\}$ be basis of $\Span \chi(v)$ consisting of vectors that appear in the body of $Q^v$.
    Further, let $b^v_1,\dots,b^v_{k_v}$ be a basis of $\Span \chi(v) \cap \Span \chi(parent(v))$
    Thus, we can express $b^v_1,\dots,b^v_{k_v}$ as a linear combinations 
    $$b^v_1 = \sum_j \lambda_{1,j}d_j,\quad \dots,\quad b^v_{k_v} = \sum_j \lambda_{k_v,j}d_j.$$
    Now, we can multiply all of the $b^v_1,\dots,b^v_{k_v}$ with a big enough integer $\alpha\in \Z$ such that all the factors $\alpha \lambda_{i,j}$ are also integers.
    Therefore, $\bm B^v:=(\alpha b^v_1,\dots,\alpha b^v_{k_v})$ is then a basis of $\Span \chi(v) \cap \Span \chi(parent(v))$ where all basis elements are expressable as an \textit{integral} combination of the vectors in $\bm D^v$.
    Hence, with the $\bm B^v$ picked as above and when the supports of the relations in the body of $Q^v$ (including the messages) only use integers which are at most of size $n$, then, due to Lemma~\ref{lemma:substitution}, also $\sem{Q^v}$ only contains integers and their size is bound by $\alpha^{l_v} \cdot n$.
    Therefore, for any $v\in V(T)$ (now $v$ is allowed to be the root), the active domain of the relations in the body of $Q^v$ are integers and bounded by $O(n(D))$ as we can hide the factors $\alpha^{l_v}$ in the big $O$ notation.
    Thus, due to Theorem~\ref{thm:dense}, we can evaluate every $Q^v$ in time $O(n(D)^{l_v})$ where $\max_v l_v = \ctw(Q)+1$.
    Thus, this algorithm runs in time $O(n(D)^{\ctw(Q)+1})$.
\end{proof}
\thmadom*
\begin{proof}
  Fix a width optimal minimal CTD $(T,\chi)$ of $Q(\bm A_0)\leftarrow \bigotimes_{i=1,m}R_i(\bm A_i)$.  We combine the results of Sec.~\ref{sec:basicAlgorithm} and Sec.~\ref{sec:decomposition} to obtain this result.  The idea is to first add reducts to $Q$ such that after applying the Factorization Theorem~\ref{thm:correctness} that all subqueries $Q^v$ are guarded.  Fix a node $v \in V(T)$, denote $\bm C^v_0 := \chi(v)$, and let $\bm C=(b_1,\dots,b_k)$ be a a vector cover of $\bm C^v_0$ of size $k:=\alpha(\bm C^v_0)$.  For each $b_j$, let $R_i(\bm A_i)$ be an atom such that $b_j\in \bm A_i$.  
  We compute the reducts $P_j(b_j)$ of $R_i(\bm A_i)$, and denote with $G^v(\bm C^v_0)\leftarrow \bigwedge_{j=1,k} P_j(b_j)$ a reduct of $Q$.  
  By Theorem~\ref{thm:dense} $\sem{G^v}^D$ can be computed in time $O(|\adom(D)|^{\alpha(\chi(v))})$.  By Lemma~\ref{lemma:guard:1}, we can add an atom $G^v(\bm C^v_0)$ to $Q$ without affecting the semantics. 
  We repeat this process for all tree nodes $v\in V(T)$. 
    The modified query is
    $$Q_{\text{mod}}(\bm A_0)\leftarrow \bigotimes_{i=1,m}R_i(\bm A_i)\otimes \bigotimes_{v\in V(T)}G^v(\bm C^v_0)$$
  Then, $(T,\chi)$ is still valid minimal CTD of the modified query where we cover the new atoms $G^v(\bm C_0^v)$ each with the node $v$.
  We compute the query bottom-up on the tree $T$, using the Factorization Theorem~\ref{thm:correctness}.  
  That is, the queries are of the form
  $$Q_{\text{mod}}^v(\bm B^v)\leftarrow \bigotimes_{v \text{ covers } R_i(\bm A_i)}R_i(\bm A_i)\otimes \bigotimes_{c\in child(v)} Q_{\text{mod}}^c(\bm B^c) \otimes G^v(\bm C^v_0)$$
  where $\Span(Q_{\text{mod}}^v) = \Span \chi (v)=\Span(\bm C^v_0)$ by construction.
  Thus, each query $Q_{\text{mod}}^v$ can be computed in time $O(|\sem{G^v}|)$ by Lemma~\ref{lemma:guard:2}, because it is guarded by $G^v$.  Thus, the runtime is dominated by the time needed to compute reducts, which is $O(|\adom(D)|^{\atw(Q)+1})$.
    This completes the proof as for the root $r$ of $T$ we have $\sem{Q^r_{\text{mod}}}=\sem{Q}$.
\end{proof}
\thmfractionaltw*
\begin{proof}
    We proceed as in the proof of Theorem~\ref{thm:adom}.
  Fix a width optimal minimal CTD $(T,\chi)$ of $Q(\bm A_0)\leftarrow \bigotimes_{i=1,m}R_i(\bm A_i)$.  
  The idea is again to first add reducts to $Q$ such that after applying the Factorization Theorem~\ref{thm:correctness} that all subqueries $Q^v$ are guarded.  
  Fix a node $v \in V(T)$ and denote $\bm C^v_0 := \chi(v)$.
  By Lemma~\ref{lemma:rhostar:geq} we can construct a reduct $G^v(\bm C^v_0) \leftarrow \bigwedge_{i=1,m}P_i(\bm C_i)$ of $Q$ such that $\rho^*_{G^v}(\bm C^v_0) = \rho^*(\bm C_0^v)$.  
  We can evaluate $G^v$ in time $O(|D|^{\rho^*(\chi(v))})$ using Theorem~\ref{thm:dense}.  
  By Lemma~\ref{lemma:guard:1}, we can add an atom $G^v(\bm C^v_0)$ to $Q$ without affecting the semantics. 
  We repeat this process for all tree nodes $v\in V(T)$. 
    The modified query is again
    $$Q_{\text{mod}}(\bm A_0)\leftarrow \bigotimes_{i=1,m}R_i(\bm A_i)\otimes \bigotimes_{v\in V(T)}G^v(\bm C^v_0)$$
  Then, $(T,\chi)$ is still valid minimal CTD of the modified query where we cover the new atoms $G^v(\bm C_0^v)$ each with the node $v$.
  We compute the query bottom-up on the tree $T$, using the Factorization Theorem~\ref{thm:correctness}.  
  That is, the queries are of the form
  $$Q_{\text{mod}}^v(\bm B^v)\leftarrow \bigotimes_{v \text{ covers } R_i(\bm A_i)}R_i(\bm A_i)\otimes \bigotimes_{c\in child(v)} Q_{\text{mod}}^c(\bm B^c) \otimes G^v(\bm C^v_0)$$
  where $\Span(Q_{\text{mod}}^v) = \Span \chi (v)=\Span(\bm C^v_0)$ by construction.
  Thus, each query $Q_{\text{mod}}^v$ can be computed in time $O(|\sem{G^v}|)$ by Lemma~\ref{lemma:guard:2}, because it is guarded by $G^v$.  Thus, the runtime is dominated by the time needed to compute reducts, which is $O(|\adom(D)|^{\atw(Q)+1})$.
    This completes the proof as for the root $r$ of $T$ we have $\sem{Q^r_{\text{mod}}}=\sem{Q}$.
\end{proof}
\begin{theorem}
    Let $Q(\bm E_0)\leftarrow \bigotimes_{i=1,m}R_i(\bm E_i)$ be a convolution-free CSPQ with $\bm E_0\subseteq \{e_1,\dots,e_d\}$ and $C(\bm X_0)\leftarrow \bigwedge_{i=1,m}R_i(\bm X_i)$ the CQ obtained by replacing the $e_j$ with simply variables $X_j$.
    Then, 
    $$\ctw(Q)=\atw(Q)=\tw(C),\quad \text{and}\quad \fctw(Q)= \ftw(C)$$
    where $\tw(C)$ is the traditional treewidth of $C$ (with a single bag covering all output variables) and $\ftw(C)$ is the traditional fractional treewidth of $C$ (with a single bag covering all output variables).
\end{theorem}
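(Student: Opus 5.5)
The plan is to prove two chains of inequalities. The first, $\ctw(Q)\le\atw(Q)\le\tw(C)$ and $\fctw(Q)\le\ftw(C)$, comes from turning a TD of $C$ into a CTD of $Q$. The second, $\tw(C)\le\ctw(Q)$ and $\ftw(C)\le\fctw(Q)$, comes from turning an optimal CTD of $Q$ into a TD of $C$. The inequality $\ctw\le\atw$ holds in general, because $\dim\Span(\chi(v))\le\alpha(\chi(v))$ by the observation in Def.~\ref{def:vector-cover}.

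\emph{From TDs to CTDs.} Let $(T,\beta)$ be a TD of $C$ with a bag containing $\bm X_0$, and set $\chi(v):=\{e_j\mid X_j\in\beta(v)\}$.
\begin{itemize}
\item Coverage: each atom's variables lie in some bag, so its canonical vectors lie in that bag's span.
\item Connectedness: the span of a set of canonical vectors is a coordinate subspace, and an intersection of two coordinate subspaces is spanned by the common basis vectors. Hence $\Span\chi(T_u)\cap\Span\chi(T_v)$ is spanned by the $e_j$ whose variable occurs on both sides of the edge. By the running intersection property of $(T,\beta)$, such an $X_j$ lies in $\beta(u)\cap\beta(v)$.
\end{itemize}
For the widths, $\chi(v)$ itself is a vector cover of $\chi(v)$ drawn from the body vectors, so $\alpha(\chi(v))\le|\beta(v)|$. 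By Lemma~\ref{lemma:rhostar:cspq:cq}, $\rho^*(\chi(v))$ equals the classical $\rho^*(\beta(v))$. Therefore $\atw(Q)\le\tw(C)$ and $\fctw(Q)\le\ftw(C)$.

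\emph{From CTDs to TDs.} Take a CTD that is optimal for $\ctw$ (respectively $\fctw$), and replace it by a minimal one using Lemma~\ref{lem:nice}. The widths do not increase under this step:
\begin{itemize}
\item $\dim$ is monotone under taking subspaces.
\item $\rho^*$ is monotone, since a fractional edge cover of $\chi(v)$ also covers any subset of $\Span\chi(v)$, by Def.~\ref{def:fractional:edge:cover}.
\end{itemize}

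The key claim, and the step I expect to be the main obstacle, is that in a minimal CTD of a convolution-free query every $\Span\chi(v)$ is a coordinate subspace. To see this, note that $\zeta(v)$ and $\zeta(T')$ consist of canonical vectors, except possibly for $\bm A_0$ at the root. I will handle $\bm A_0$ using the hypothesis $\bm E_0\subseteq\{e_1,\dots,e_d\}$, so it is canonical as well. Intersections and sums of coordinate subspaces are again coordinate subspaces, so Eq.~\eqref{eq:niceTD} yields the claim.

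Given the claim, define $\beta(v):=\{X_j\mid e_j\in\Span\chi(v)\}$. Then $|\beta(v)|=\dim\Span\chi(v)$, and Lemma~\ref{lemma:rhostar:cspq:cq} gives $\rho^*(\beta(v))=\rho^*(\chi(v))$. It remains to check that $(T,\beta)$ is a TD of $C$.
\begin{itemize}
\item Coverage: this follows from the CTD coverage condition, and the root bag contains $\bm X_0$.
\item Running intersection: suppose $X_j\in\beta(u)\cap\beta(w)$, and let $ab$ be any edge on the path from $u$ to $w$. Then $e_j$ lies in $\Span\chi(T_a)\cap\Span\chi(T_b)=\Span\chi(a)\cap\Span\chi(b)$, so $X_j\in\beta(a)\cap\beta(b)$. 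Inducting along the path shows that $X_j$ lies in every bag on it.
\end{itemize}
This gives $\tw(C)\le\ctw(Q)$ and $\ftw(C)\le\fctw(Q)$, closing both chains.
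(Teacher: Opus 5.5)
Your proof is correct and follows the paper's own route. You interpret a traditional TD as a CTD to get the upper bounds, and you use Eq.~\eqref{eq:niceTD} together with $\bm E_0\subseteq\{e_1,\dots,e_d\}$ to show that every minimal CTD of a convolution-free query has coordinate-subspace bags, so it is a traditional TD. You also spell out details the paper leaves implicit: the running-intersection check, the monotonicity of $\dim$ and $\rho^*$ under minimalization, and the general inequality $\ctw\le\atw$, which spares you from analysing $\atw$ on minimal CTDs.
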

\begin{proof}
    First notice that traditional tree decompositions can easily be interpreted as CTDs and the width of each bag seen as a CTD is the same as the width of the bag when seen a traditional tree decomposition (for $\atw$ note that a traditional bag is a vector cover of itself, and for $\fctw$ recall Lemma~\ref{lemma:rhostar:cspq:cq}).
    Thus, it suffices to prove that minimal CTD $(T,\chi)$ are such that $\chi(v)=\{e_{j_1},\dots,e_{j_{n_v}}\}$.
    To that end, let $(T,\chi)$ be an arbitrary minimal CTD of $Q$.
    Then, for any $v\in V(T)$ we simply have $$\zeta(v)=\bigcup_{R_i(\bm E_i) \text{ is covered by } v}\bm E_i=\{e_{i_1}, ,\dots,e_{i_{m_v}}\}$$
    Thus, by looking at the RHS of Equation~\ref{eq:niceTD} we notice that
    $$\Span \chi(v)= \Span \{e_{j_1},\dots,e_{j_{n_v}}\}$$
    as all vector spaces in the sum are of the same form.
    Therefore, w.l.o.g, we can pick $\chi(v)= \Span \{e_{j_1},\dots,e_{j_{n_v}}\}$ and $(T,\chi)$ can be seen as a traditional tree decomposition.
    This completes the proof.
\end{proof}
\begin{theorem}
    Let $Q$ be a CSPQ and $\phi$ a substitution. Then,
    $$\ctw(Q)=\ctw(Q^\phi), \quad \atw(Q)=\atw(Q^\phi), \quad \fctw(Q)=\fctw(Q^\phi).$$
\end{theorem}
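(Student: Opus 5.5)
The approach is to transport CTDs along $\phi$ and show that every bag measure is preserved. Let $\phi\colon \Span(Q)\rightarrow V'$ be an injective linear map, so $\phi$ is a linear bijection $\Span(Q)\rightarrow \Span(Q^\phi)$, and its inverse $\phi^{-1}\colon \Span(Q^\phi)\rightarrow\Span(Q)$ is again a substitution with $(Q^\phi)^{\phi^{-1}}=Q$. By this symmetry it suffices to prove one inequality for each measure, say $\ctw(Q^\phi)\leq\ctw(Q)$, and similarly for $\atw$ and $\fctw$.

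Given a CTD $(T,\chi)$ of $Q$, I will define $\chi^\phi(v):=\phi(\chi(v))$ and first check that $(T,\chi^\phi)$ is a CTD of $Q^\phi$.

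\emph{Condition (1).} If $\bm A_i\subseteq\Span\chi(v)$, then $\phi(\bm A_i)\subseteq\phi(\Span\chi(v))=\Span\phi(\chi(v))$ by linearity. This also applies to $i=0$, so the root is preserved.

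\emph{Condition (2).} Since $\phi$ is injective, it commutes with intersections of subspaces and with spans of unions: $\phi(U\cap W)=\phi(U)\cap\phi(W)$ and $\Span\phi(X)=\phi(\Span X)$. Applying $\phi$ to both sides of the connectedness equation gives the same equation for $\chi^\phi$, since $\chi^\phi(T_u)=\phi(\chi(T_u))$. Injectivity is exactly what makes the intersection identity hold, which is why I record it explicitly.

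With $(T,\chi^\phi)$ in hand, I will compare the three widths bag by bag.

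\textbf{Convolution treewidth.} We have $\dim\Span\phi(\chi(v))=\dim\Span\chi(v)$ by injectivity, so $\ctw(T,\chi^\phi)=\ctw(T,\chi)$.

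\textbf{Active treewidth.} If $\bm C\subseteq\bigcup_i\bm A_i$ is a vector cover of $\chi(v)$, then $\phi(\bm C)\subseteq\bigcup_i\phi(\bm A_i)$ is a vector cover of $\phi(\chi(v))$, and $|\phi(\bm C)|\leq|\bm C|$; in fact equality holds by injectivity. Hence $\alpha_{Q^\phi}(\phi(\chi(v)))\leq\alpha_Q(\chi(v))$.

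\textbf{Fractional convolution hypertree width.} Lemma~\ref{lemma:same:rhostar} gives $\rho^*_{Q^\phi}(\phi(\chi(v)))=\rho^*_Q(\chi(v))$ directly.

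Taking the maximum over bags and then the minimum over CTDs yields the three inequalities. The symmetric argument with $\phi^{-1}$ then gives equality.

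The step I expect to need the most care is the bookkeeping around what $\phi$ is defined on. The bags satisfy $\chi(v)\subseteq\Span(Q)$, so $\phi$ is defined on them, and the bags of $Q^\phi$ lie in $\Span(Q^\phi)=\phi(\Span(Q))$, so $\phi^{-1}$ is defined there. I also need to confirm that the atoms of $(Q^\phi)^{\phi^{-1}}$ are literally $\bm A_i$, so that the reverse inequality really concerns $Q$.

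For $\atw$, one must additionally confirm that the body vectors of $Q^\phi$ are exactly $\phi$ applied to those of $Q$. This holds by the definition $\bm B_i=\phi(\bm A_i)$.

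Everything else reduces to the fact that an injective linear map preserves spans, intersections, and dimensions.
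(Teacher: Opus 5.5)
Your proposal is correct and follows essentially the same route as the paper: you transport a CTD along $\phi$ via $\chi^\phi(v)=\phi(\chi(v))$, use that $\phi$ is a linear isomorphism $\Span(Q)\to\Span(Q^\phi)$ to preserve the CTD conditions, and then compare bag dimensions, vector-cover sizes, and $\rho^*$ (via Lemma~\ref{lemma:same:rhostar}) bag by bag. The only difference is cosmetic: you prove one inequality and obtain equality by symmetry through $\phi^{-1}$, whereas the paper states the corresponding ``if and only if'' facts directly.
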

\begin{proof}
    We simply have to observe that substitutions do not affect the width notions.
    To that end, let $(T,\chi)$ be a CTD of the CSPQ $Q$.
    Then, let us define $\chi^\phi\colon v\mapsto \phi(\chi(v))$.
    Notice that $\phi$ is a isomorphism from the span of $Q$ to the image of $\phi$.
    Thus, properties (1)--(3) of minimal tree decompositions are satisfied by $(T,\chi)$ and $Q$ if and only if there are satisfied by $(T,\chi^\phi)$ and $Q^\phi$.
    Furthermore, 1) $\dim \Span \chi(v)=\dim \Span \chi^\phi(v)$ so $\ctw(Q)=\ctw(Q^\phi)$; 2) $\bm C$ is a vector cover of $\bm C_0$ with respect to $Q$ if and only if $\phi(\bm C)$ is a vector cover of $\phi (\bm C_0)$ with respect to $Q^\phi$ so $\alpha_Q(\chi(v))=\alpha_{Q^\phi}(\chi^\phi(v))$ and $\atw(Q)=\atw(Q^\phi)$; 3) we have already noted in Lemma~\ref{lemma:same:rhostar} that $\rho^*_Q(\chi(v))=\rho^*_{Q^\phi}(\chi^\phi(v))$ so $\fctw(Q)=\fctw(Q^\phi)$.
\end{proof}

%% file: AA-7.tex
\section{Proofs for Section~\ref{sec:otherFields}}
\label{app:arxiv2}
For a field $\mathbb{F}$ with characteristic $p>0$ and some field elements $A\subseteq \mathbb{F}$, we write $\field_p(A)$ for the smallest subfield of $\mathbb{F}$ containing $A$. 
Before proving Theorem~\ref{thm:relevantSubfield}, we show that we can always ask for a tree decomposition such that the bags are made up of field elements from $\field_p(\adom(Q))$.
Thus, essentially, we never have to go beyond $\field_p(\adom(D, Q))$ where $|\field_p(\adom(D, Q))| = n(D,Q)$.
Note that $\field_p(\adom(D, Q))$ is infinite when $\adom(D,Q)$ contains an non-algebraic element.
In that case, we consider Theorem~\ref{thm:relevantSubfield} to be trivially true.
\begin{restatable}{lemma}{lemsubfield}
\label{lem:subfield}
    Let $(T,\chi)$ be a tree decomposition of the CSPQ $Q$ over a field of characteristic $p>0$.
    Then, there exists a minimal tree decomposition $(T,\chi')$ of $Q$ where $\chi'(v)\subseteq \mathbb{F}_{p}(\adom(Q))^d$ for all $v\in V(T)$ and $\ctw(T,\chi')\leq \ctw(T,\chi)$.
\end{restatable}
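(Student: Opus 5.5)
First apply Lemma~\ref{lem:nice} to $(T,\chi)$: this gives a minimal CTD $(T,\chi'')$ with $\Span \chi''(v)\subseteq \Span\chi(v)$ for all $v$, hence $\ctw(T,\chi'')\leq \ctw(T,\chi)$. The point is that minimality pins down each span, by Eq.~\eqref{eq:niceTD}, as
\[
\Span\chi''(v)=\Span \zeta(v) + \sum_{u \text{ adjacent to } v} \bigl(\Span \zeta(T_u)\cap\Span\zeta(T_v)\bigr).
\]
This expression depends only on the query vectors $\bm A_0,\dots,\bm A_m$ and the tree $T$. It does not depend on $\chi$ itself. So it suffices to show that each of these subspaces has a basis with entries in $\field_p(\adom(Q))$, and then to choose $\chi'(v)$ to be such a basis. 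Since $\Span\chi'(v)=\Span\chi''(v)$, properties (1)--(3) are inherited verbatim and the width does not increase.

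To prove the basis claim, write $K:=\field_p(\adom(Q))$. Each $\zeta(v)$ and each $\zeta(T_u)$ is a finite set of vectors in $K^d$. Call a subspace $W\subseteq \field^d$ \emph{defined over $K$} if $W=\Span_{\field}(W\cap K^d)$. I will check two closure facts:
\begin{itemize}
\item spans of subsets of $K^d$ are defined over $K$;
\item sums of subspaces defined over $K$ are defined over $K$.
\end{itemize}
Both are immediate from the definition.

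The main obstacle is intersections: I must show that $\Span_\field(S_1)\cap\Span_\field(S_2)$ is defined over $K$ when $S_1,S_2\subseteq K^d$. I would argue via linear systems. A vector $x$ lies in the intersection iff $x=M_1\lambda=M_2\mu$, where $M_1,M_2$ are the matrices with columns $S_1,S_2$. So the intersection is the image under $(\lambda,\mu)\mapsto M_1\lambda$ of the kernel of $[M_1\mid -M_2]$, a matrix with entries in $K$. Gaussian elimination performed over the subfield $K$ produces a kernel basis in $K$-coordinates that is also a basis over $\field$, because rank does not change under field extension. Its image under $M_1$ therefore lies in $K^d$ and spans the intersection over $\field$.

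Alternatively, I could use the dimension identity $\dim_\field(\Span_\field S_1\cap\Span_\field S_2)=\dim_K(\Span_K S_1\cap \Span_K S_2)$, which follows from $\dim(U_1\cap U_2)=\dim U_1+\dim U_2-\dim(U_1+U_2)$ together with invariance of rank under extension. Combining the closure properties, each $\Span\chi''(v)$ is defined over $K$, so I can pick $\chi'(v)\subseteq K^d$, which completes the argument.

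Nothing in this argument uses positive characteristic beyond the definition of $\field_p$. If $K$ is infinite, the statement still holds, just as the surrounding text already remarks.
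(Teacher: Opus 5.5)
Your proof is correct and takes the same route as the paper. You apply Lemma~\ref{lem:nice}, use Eq.~\eqref{eq:niceTD} to see that each minimal span is built from the spaces $\Span\zeta(\cdot)$ by intersections and sums, and then pick bases in $\mathbb{F}_p(\adom(Q))^d$. The one difference is that you actually justify closure under intersection, via the kernel of a matrix over $K$ and invariance of rank under field extension, whereas the paper simply asserts it.
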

\begin{proof}
    Due to Lemma~\ref{lem:nice}, we know that there always exists a minimal tree decomposition $(T,\chi')$.
    Note that $\Span\chi'(v)\subseteq \Span \chi(v)$ for each $v\in V(T)$ implies $\ctw(T,\chi')\leq \ctw(T,\chi)$.
    Clearly, for every $v\in V(T)$ it does not matter which set of vectors $\chi'(v)$ are picked, as long as they satisfy Equation~\eqref{eq:niceTD} in Definition~\ref{def:nice}.
    Then, notice that each $\Span \zeta(v)$ is vector space with a basis in $\mathbb{F}_{p}(\adom(Q))^d$.
    Thus, as this property is closed under taking intersections and sums, $\Span \chi(v)$ is a vector space with a basis in $\mathbb{F}_{p}(\adom(Q))^d$.
    Thus, w.l.o.g., $\chi'(v)\subseteq \mathbb{F}_{p}(\adom(Q))^d$ for every $v\in V(T)$.
\end{proof}
Now, we can prove Theorem~\ref{thm:relevantSubfield}.
\thmrelevantsubfield*
\begin{proof}
    Due to Lemma~\ref{lem:subfield}, we can assume w.l.o.g.\ that there is a minimal tree decomposition $T,\chi$ such that $\ctw(T,\chi)=\ctw(Q)$ and $\chi(v)\subseteq\mathbb{F}_{p}(\adom(Q))^d$ for all $v\in V(T)$.
    Then, let us evaluate $Q$ according to Theorem~\ref{thm:correctness} via the subqueries $Q^v, v\in V(T)$.
    As $\chi(v)\subseteq\mathbb{F}_{p}(\adom(Q))^d$, we can also pick $\bm B^v\subseteq \mathbb{F}_{p}(\adom(Q))^d$.
    We show by induction that $\adom(\sem{Q^v})\subseteq \mathbb{F}_{p}(\adom(D, Q))$.
    To that end, we perform a variable substitution according to Lemma~\ref{lemma:substitution} such that $\phi^{-1}(e_1),\dots,\phi^{-1}(e_k)$ appear in the body of $Q^v$.
    Then, $\sem{Q^{v\phi}}(\inner{\phi(\bm B^c)}{u})\neq 0$
    only if $R_i(\inner{\phi(\bm A_i)}{u})\neq 0$ and $\sem{Q^c}(\inner{\phi(\bm B^c)}{u})\neq 0$.
    Thus, as $\phi^{-1}(e_1),\dots,\phi^{-1}(e_k)$ appear somewhere in the body, we have $u\in \mathbb{F}_{p}(\adom(D,Q))^d$.
    Notice that all vectors $b\in \bm B^v$ are a $\mathbb{F}_{p}(\adom(Q))$-linear combination of $\phi^{-1}(e_1),\dots,\phi^{-1}(e_k)$.
    Then, also $\inner{\phi(b)}{u}\in \mathbb{F}_{p}(\adom(D,Q))$.
    Consequently, $\adom(\sem{Q^v})\subseteq \mathbb{F}_{p}(\adom(D,Q))$.
    Now we simply have to apply Theorem~\ref{thm:dense} to evaluate each of the subqueries $Q^v$ in time $O(|\mathbb{F}_{p}(\adom(D,Q))|^{\ctw(T,\chi)+1})$ as required.
\end{proof}

%% file: AA-linearEquations.tex
\section{CSPQs and SPQs with Linear Equality Constraints}
\label{app:le}
In this section we outline how CSPQs semantically capture SPQ with added linear equation constraints (SPQLEs).
To that end, we give two ways CSPQs capture SPQLEs.
The first version is very direct but adds relations while in the second version the relations used by the SPQLEs are exactly the same as the one used by the CSPQs.
We begin by formally defining SPQLEs.
\begin{definition}
    Let $R_1(\bm X_1),\dots, R_m(\bm X_m)$ be atoms using the variables $\bm X=(X_1,\dots,X_d)^T$ and let $\bm B\in \Q^{n \times d}$ be a matrix.
    A \textit{sum-product query with linear equation constraints} (SPQLEs) is an expression of the form:
    \begin{align*}
        L(\bm X_0)\leftarrow R_1(\bm X_1)\otimes \dots\otimes  R_m(\bm X_m),\; \text{s.t. } \bm B\bm X = 0 
    \end{align*}
    We require the safety condition $\bm X_0\subseteq \{X_1,\dots,X_d\}=\bigcup_{i=1,m} \bm X_0$
    The answer of $L$ over an $\sring$-database instance $D$ is the $\sring$-relation\footnote{with $\bm B\bm X$ (resp. $\bm Bv(\bm X)$) we mean the matrix multiplication of the matrix $\bm B$ with the transposed vector $\bm X=(X_1,\dots,X_d)^T$ (resp. $v(\bm X)=(v(X_1),\dots,v(X_d))^T$). }:
    \begin{align}
        \forall \bm x \in \Q^{|\bm X_0|}\colon \quad \sem{L}(\bm x) := \bigoplus_{v\colon \bm X\rightarrow \Q, \text{ s.t. } v(\bm X_0)=\bm x \text{ and } \bm Bv(\bm X)=0} \bigotimes_{i=1,m}R_i(v(\bm X_i)) \label{eq:spqle}
    \end{align}
\end{definition}
Formally, in a SPQLE we group all linear equation constraint in a single equation $\bm B \bm X=0$.
That is $\bm B$ is a matrix with $n$ rows $\bm B= \left(\begin{matrix} - b_1-\\\vdots\\\ - b_n-\end{matrix}\right)$
and we require that the $n$ linear equation $b_1v(\bm X)=0, \dots, b_n v(\bm X)=0$ hold of the valuations $v$ used in the sum.
Thus, to transform $L$ into a CSPQ $Q$, we can simply add a Boolean relation $\1\colon \Q \rightarrow \sring$ with $\1(0)=\1$ and $\1(\lambda)=\0$ for $\lambda\neq 0$.
We show:
\begin{theorem}
    Let 
    $$L(\bm X_0)\leftarrow R_1(\bm X_1)\otimes \dots\otimes  R_m(\bm X_m),\; \text{s.t. } \bm B\bm X = 0, \quad \text{ where }\bm B= \left(\begin{matrix} -b_1-\\\vdots\\\ -b_n-\end{matrix}\right) $$ 
    be a SPQLE.
    Then, $L$ is equivalent to the CSPQ
    $$Q(\bm E_0)\leftarrow R_1(\bm E_1)\otimes \dots\otimes  R_m(\bm E_m) \otimes \1(b_1) \otimes \dots \otimes \1(b_n)$$
    where the $\bm E_i$ are simply tuples of canonical basis vectors obtained from $X_i$ by replacing the occurrences of the variable $X_j$ with the vector $e_j$.
    We means that over any database $D$ we have $\sem{L}^D=\sem{Q}^D$
\end{theorem}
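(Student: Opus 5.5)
The plan is to compare the two sums term by term, via a bijection between the index sets. First I record the spans. Every variable $X_j$ occurs in some $\bm X_i$ by the safety condition $\{X_1,\dots,X_d\}=\bigcup_i \bm X_i$. Hence $\bigcup_{i=1,m}\bm E_i=\{e_1,\dots,e_d\}$, so $\Span(Q)=\Q^d$. The extra atoms $\1(b_1),\dots,\1(b_n)$ do not change this span, since every $b_k\in\Q^d$. Consequently $(\Span Q)^\perp=\{0\}$ and the active space is $\activespace=\Q^d/\{0\}\simeq\Q^d$. The safety condition of Def.~\ref{def:cspq} also holds for $Q$, because $\bm E_0\subseteq\{e_1,\dots,e_d\}$. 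So $Q$ is a well-formed CSPQ.

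Next I set up the bijection between valuations $v\colon\{X_1,\dots,X_d\}\to\Q$ and vectors $u=(v(X_1),\dots,v(X_d))\in\Q^d=\activespace$. This is the same identification used in the lemma on convolution-free CSPQs in the appendix. Under it:
\begin{itemize}
\item[(i)] $\inner{\bm E_i}{u}=v(\bm X_i)$ for $i=0,\dots,m$, because $\inner{e_j}{u}=v(X_j)$;
\item[(ii)] $\inner{b_k}{u}=b_k v(\bm X)$, which is the $k$-th entry of $\bm B v(\bm X)$.
\end{itemize}
The head condition $\inner{\bm E_0}{u}=\bm x$ therefore matches $v(\bm X_0)=\bm x$. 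Moreover, the summand of Eq.~\eqref{eq:semantic} for $Q$ becomes
\begin{align*}
\bigotimes_{i=1,m}R_i(v(\bm X_i))\otimes\bigotimes_{k=1,n}\1(b_k v(\bm X)).
\end{align*}

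Now I split the sum over $u$ according to whether $\bm B v(\bm X)=0$. If $\bm B v(\bm X)=0$, every factor $\1(b_kv(\bm X))$ equals $\1(0)=\bm 1$. The term then reduces to $\bigotimes_i R_i(v(\bm X_i))$, which is exactly the summand of Eq.~\eqref{eq:spqle}. If $\bm B v(\bm X)\neq 0$, some factor is $\1(\lambda)=\bm 0$ with $\lambda\neq0$. Since $\bm 0$ is absorbing, the whole term is $\bm 0$, and dropping it does not change the $\oplus$-sum. Both sums have only finitely many non-$\bm 0$ terms, by Lemma~\ref{lem:sound} and by finiteness of the supports respectively. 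So the reindexing is legitimate, and $\sem{Q}^D(\bm x)=\sem{L}^D(\bm x)$ for all $\bm x$.

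There is no real obstacle here. The only points needing care are, first, checking that $\Span(Q)=\Q^d$, so that the active space is genuinely $\Q^d$ and no quotienting collapses distinct valuations. Second, one must note that the relation $\1$ has finite support $\{0\}$ and is therefore a legitimate $\S$-relation.
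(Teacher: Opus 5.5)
Your proposal is correct and follows essentially the same route as the paper. Both use $\Span(Q)=\Q^d$ to identify valuations $v$ with vectors $u=(v(X_1),\dots,v(X_d))$. Both then observe that the factors $\1(\inner{b_k}{u})$ all equal $\1$ exactly when $\bm B v(\bm X)=0$, and that otherwise some factor is the absorbing $\0$, so the non-$\0$ terms of the two sums coincide, with matching head values.
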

\begin{proof}
    We simply compare the non-$\0$ terms given in the Equation~\eqref{eq:spqle} with the non-$\0$ terms given in the Equation~\eqref{eq:semantic} (the semantics for a CSPQ).
    To that end, notice that $\Span(Q)=\Q^d$ and we associate the valuations $v$ with the vectors $[(v(X_1),\dots, v(X_d))]\in Q^d/\{0\}$.
    If $v$ produces a non-$\0$ term in Equation~\eqref{eq:spqle}, then $\bm Bv(\bm X)=0$.
    Thus, in particular, $0=b_jv(\bm X)=\inner{b_j}{[(v(X_1),\dots, v(X_d))]}$
    Thus, for $u:=[(v(X_1),\dots, v(X_d))]$
    \begin{align*}
        \bigotimes_{i=1,m}R_i(\inner{\bm E_i}{u}) \otimes \bigotimes_{j=1,n} \1(\inner{b_j}{u}) & = \bigotimes_{i=1,m}R_i(\inner{\bm E_i}{u}) \otimes \1\\
        & = \bigotimes_{i=1,m}R_i(v(\bm X_i))
    \end{align*}
    Moreover, $v(\bm X_0)=\inner{\bm E_0}{u}$.
    Conversely, if $u:=[(u_1,\dots,u_d)]$ produces a non-$\0$ term in Equation~\eqref{eq:semantic}, then $\1(\inner{b_j}{u})=\1$ for every $j=1,n$.
    Thus, $\bm Bv(\bm X)=0$ for $v\colon X_i\mapsto u_i$ and again
    \begin{align*}
        \bigotimes_{i=1,m}R_i(\inner{\bm E_i}{u}) \otimes \bigotimes_{j=1,n} \1(\inner{b_j}{u}) = \bigotimes_{i=1,m}R_i(v(\bm X_i))
    \end{align*}
    Moreover, $v(\bm X_0)=\inner{\bm E_0}{u}$.
\end{proof}
However, we can even capture SPQLEs without needed the additional relation $\1$ and only use the relations $R_1,\dots,E_m$.
We show:
\begin{theorem}
    Let 
    $$L(\bm X_0)\leftarrow R_1(\bm X_1)\otimes \dots\otimes  R_m(\bm X_m),\; \text{s.t. } \bm B\bm X = 0 $$
    be a SPQLE.
    Then, $L$ is equivalent to the CSPQ
    $$Q(\bm A_0)\leftarrow R_1(\bm A_1)\otimes \dots\otimes  R_m(\bm A_m)$$
    for some tuples of vectors $\bm A_0,\dots, \bm A_m$.
    We means that over and database $D$ we have $\sem{L}^D=\sem{Q}^D$
\end{theorem}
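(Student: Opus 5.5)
The plan is to absorb the constraint $\bm B\bm X=0$ into the vectors by reparametrizing the solution space. Let $K:=\setof{x\in\Q^d}{\bm B x=0}$ be the null space of $\bm B$, of dimension $k$, and fix a basis $\kappa_1,\dots,\kappa_k$ of $K$. Let $M\in\Q^{d\times k}$ be the matrix with columns $\kappa_1,\dots,\kappa_k$. Then every valuation $v$ with $\bm B v(\bm X)=0$ can be written as $v(\bm X)=My$ for a unique $y\in\Q^k$, since $M$ is injective. The CSPQ will live over $\Q^k$. For the $j$-th coordinate of a variable tuple, if that coordinate is the variable $X_l$, I set the corresponding vector in $\bm A_i$ to be the $l$-th row of $M$, viewed as a vector $r_l\in\Q^k$. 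This gives $\inner{r_l}{y}=(My)_l=v(X_l)$, and more generally $\inner{\bm A_i}{y}=v(\bm X_i)$ for $i=0,\dots,m$.

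The main obstacle is the semantics of the active space. The CSPQ sums over $\Q^k/(\Span Q)^\perp$, not over $\Q^k$, so I must show that $\Span(Q)=\Q^k$. Equivalently, I must show that no nonzero $y$ is orthogonal to all the $r_l$. The safety condition $\bigcup_i\bm X_i=\{X_1,\dots,X_d\}$ is what settles this: every variable occurs in some atom, so every row $r_l$ of $M$ appears in some $\bm A_i$ with $i\ge 1$. Hence $\Span(Q)=\Span\{r_1,\dots,r_d\}$, which is the row space of $M$. Because $M$ has full column rank $k$, this row space is all of $\Q^k$. So the null space is $\{0\}$ and the active space is $\Q^k$ itself. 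The same observation gives the CSPQ safety condition $\bm A_0\subseteq\Span(Q)$ for free.

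It then remains to compare the two sums term by term. The map $y\mapsto v_y$ with $v_y(\bm X)=My$ is a bijection between $\Q^k$ and the valuations satisfying $\bm B v(\bm X)=0$. Under this bijection, $\bigotimes_i R_i(\inner{\bm A_i}{y})=\bigotimes_i R_i(v_y(\bm X_i))$, and the head condition $\inner{\bm A_0}{y}=\bm x$ matches $v_y(\bm X_0)=\bm x$. Hence the summations in Eq.~\eqref{eq:spqle} and Eq.~\eqref{eq:semantic} coincide, and both are finite by Lemma~\ref{lem:sound}. Therefore $\sem{L}^D=\sem{Q}^D$ for every $D$.

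Finally, I will check the degenerate case $K=\{0\}$, where $k=0$. There $Q$ is over $\Q^0$, and both sides reduce to the single all-zero valuation. I would either treat this case directly or note that the argument above goes through with empty sums and products.
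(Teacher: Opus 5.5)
Your proposal is correct and takes essentially the paper's approach. The paper also parametrizes the solutions of $\bm B\bm X=0$ by $\Q^k$ and replaces each variable with the corresponding row of the parametrization. It uses the specific basis from the reduced row echelon form (free variables become $e_1,\dots,e_k$, pivot variables become $-c_j$), so $\Span(Q)=\Q^k$ follows directly from safety, where you instead take an arbitrary basis and argue via row rank equals column rank.
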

\begin{proof}
    We start by reducing $\bm B$ into reduced row echelon form.
    That is, recall that after permuting variables, there is a matrix
    $$\bm B'=\left(\begin{matrix} 0 & 0\\ \bm C & \bm I_{d-k}\end{matrix}\right)$$
    such that $\bm Bv(\bm X)=0$ if and only if $\bm B'v(\bm X)=0$.
    Here $\bm I_{d-k}$ is the $d-k\times d-k$ identity matrix and $\bm C\in \Q^{k\times d-k}$ is an arbitrary matrix.
    Thus, we can partition the variables into $\bm X = (X_1,\dots,X_k,X_{k+1},\dots,X_d)$ and the solutions to the equation $\bm B'\bm X=0$ comes from picking valuations of $X_1,\dots, X_k$ arbitrarily and determining the valuations for $X_{k+1},\dots, X_d$ from $\bm B'$.
    Let us write $\bm C = \left(\begin{matrix} -c_{k+1}-\\\vdots\\\ -c_{d}-\end{matrix}\right)$.
    Then, in $\bm A_0,\dots, \bm A_m$ let us replace $X_1,\dots,X_k$ with $e_1,\dots,e_k\in \Q^k$ and $X_{k+1},\dots,X_d$ with $-c_{k+1},\dots, -c_d\in \Q^k$.
    Notice that $\Span(Q)=\Q^k$ due to the safety condition of SPQLEs and that for any valuations $v$ such that $\bm Bv(\bm X)=0$ and $u:=(v(X_1),\dots,v(X_k))$ we have 
    $$\inner{e_1}{u}=v(X_1), \dots,  \inner{e_k}{u}=v(X_1),\quad  \inner{-c_{k+1}}{u}=v(X_{k+1}), \dots, \inner{-c_{d}}{u}=v(X_{d}).$$
    We now compare the terms given in the Equation~\eqref{eq:spqle} with the terms given in the Equation~\eqref{eq:semantic} (the semantics for a CSPQ).
    We associate the valuations $v$ such that $\bm Bv(\bm X)=0$ with the vectors $u=[(v(X_1),\dots, v(X_k))]\in \Q^k/\{0\}$.
    Note that this is a bijection as the valuations for $X_{k+1},\dots,X_d$ can be determined from the valuations for $X_1,\dots,X_k$ via $v(X_j)=\inner{-c_{j}}{u}$.
    Then starting from and $u$ or $v$, we have 
    \begin{align*}
        \bigotimes_{i=1,m}R_i(\inner{\bm A_i}{u})  = \bigotimes_{i=1,m}R_i(v(\bm X_i))
    \end{align*}
    Thus, $\sem{L}=\sem{Q}$ as also $\inner{\bm A_0}{u}=v(\bm X_0)$.
\end{proof}

%% file: biblio.bib
@inproceedings{DBLP:conf/pods/GreenKT07,
  author       = {Todd J. Green and
                  Gregory Karvounarakis and
                  Val Tannen},
  editor       = {Leonid Libkin},
  title        = {Provenance semirings},
  booktitle    = {Proceedings of the Twenty-Sixth {ACM} {SIGACT-SIGMOD-SIGART} Symposium
                  on Principles of Database Systems, June 11-13, 2007, Beijing, China},
  pages        = {31--40},
  publisher    = {{ACM}},
  year         = {2007},
  url          = {https://doi.org/10.1145/1265530.1265535},
  doi          = {10.1145/1265530.1265535},
  bibsource    = {dblp computer science bibliography, https://dblp.org}
}

@inproceedings{DBLP:conf/icdt/Veldhuizen14,
  author       = {Todd L. Veldhuizen},
  editor       = {Nicole Schweikardt and
                  Vassilis Christophides and
                  Vincent Leroy},
  title        = {Triejoin: {A} Simple, Worst-Case Optimal Join Algorithm},
  booktitle    = {Proc. 17th International Conference on Database Theory (ICDT), Athens,
                  Greece, March 24-28, 2014},
  pages        = {96--106},
  publisher    = {OpenProceedings.org},
  year         = {2014},
  url          = {https://doi.org/10.5441/002/icdt.2014.13},
  doi          = {10.5441/002/ICDT.2014.13},
  bibsource    = {dblp computer science bibliography, https://dblp.org}
}

@inproceedings{DBLP:conf/pods/NgoPRR12,
  author       = {Hung Q. Ngo and
                  Ely Porat and
                  Christopher R{\'{e}} and
                  Atri Rudra},
  editor       = {Michael Benedikt and
                  Markus Kr{\"{o}}tzsch and
                  Maurizio Lenzerini},
  title        = {Worst-case optimal join algorithms: [extended abstract]},
  booktitle    = {Proceedings of the 31st {ACM} {SIGMOD-SIGACT-SIGART} Symposium on
                  Principles of Database Systems, {PODS} 2012, Scottsdale, AZ, USA,
                  May 20-24, 2012},
  pages        = {37--48},
  publisher    = {{ACM}},
  year         = {2012},
  url          = {https://doi.org/10.1145/2213556.2213565},
  doi          = {10.1145/2213556.2213565},
  bibsource    = {dblp computer science bibliography, https://dblp.org}
}

@article{DBLP:journals/corr/KhamisNRR15,
  author       = {Mahmoud Abo Khamis and
                  Hung Q. Ngo and
                  Atri Rudra},
  title        = {{FAQ:} Questions Asked Frequently},
  journal      = {CoRR},
  volume       = {abs/1504.04044},
  year         = {2015},
  url          = {http://arxiv.org/abs/1504.04044},
  eprinttype    = {arXiv},
  eprint       = {1504.04044},
  bibsource    = {dblp computer science bibliography, https://dblp.org}
}

@article{DBLP:journals/sigmod/NgoRR13,
  author       = {Hung Q. Ngo and
                  Christopher R{\'{e}} and
                  Atri Rudra},
  title        = {Skew strikes back: new developments in the theory of join algorithms},
  journal      = {{SIGMOD} Rec.},
  volume       = {42},
  number       = {4},
  pages        = {5--16},
  year         = {2013},
  url          = {https://doi.org/10.1145/2590989.2590991},
  doi          = {10.1145/2590989.2590991},
  bibsource    = {dblp computer science bibliography, https://dblp.org}
}

@article{DBLP:journals/talg/GroheM14,
  author       = {Martin Grohe and
                  D{\'{a}}niel Marx},
  title        = {Constraint Solving via Fractional Edge Covers},
  journal      = {{ACM} Trans. Algorithms},
  volume       = {11},
  number       = {1},
  pages        = {4:1--4:20},
  year         = {2014},
  url          = {https://doi.org/10.1145/2636918},
  doi          = {10.1145/2636918},
  bibsource    = {dblp computer science bibliography, https://dblp.org}
}

@article{DBLP:journals/jcss/GottlobLS02,
  author       = {Georg Gottlob and
                  Nicola Leone and
                  Francesco Scarcello},
  title        = {Hypertree Decompositions and Tractable Queries},
  journal      = {J. Comput. Syst. Sci.},
  volume       = {64},
  number       = {3},
  pages        = {579--627},
  year         = {2002},
  url          = {https://doi.org/10.1006/jcss.2001.1809},
  doi          = {10.1006/JCSS.2001.1809},
  bibsource    = {dblp computer science bibliography, https://dblp.org}
}

@inproceedings{DBLP:conf/focs/AtseriasGM08,
  author       = {Albert Atserias and
                  Martin Grohe and
                  D{\'{a}}niel Marx},
  title        = {Size Bounds and Query Plans for Relational Joins},
  booktitle    = {49th Annual {IEEE} Symposium on Foundations of Computer Science, {FOCS}
                  2008, October 25-28, 2008, Philadelphia, PA, {USA}},
  pages        = {739--748},
  publisher    = {{IEEE} Computer Society},
  year         = {2008},
  url          = {https://doi.org/10.1109/FOCS.2008.43},
  doi          = {10.1109/FOCS.2008.43},
  bibsource    = {dblp computer science bibliography, https://dblp.org}
}

@article{DBLP:journals/corr/abs-2404-04369,
  author       = {Karl Bringmann and
                  Egor Gorbachev},
  title        = {A Fine-grained Classification of Subquadratic Patterns for Subgraph
                  Listing and Friends},
  journal      = {CoRR},
  volume       = {abs/2404.04369},
  year         = {2024},
  url          = {https://doi.org/10.48550/arXiv.2404.04369},
  doi          = {10.48550/ARXIV.2404.04369},
  eprinttype    = {arXiv},
  eprint       = {2404.04369},
  bibsource    = {dblp computer science bibliography, https://dblp.org}
}

@inproceedings{DBLP:conf/pods/KhamisCM0NOS19,
  author       = {Mahmoud Abo Khamis and
                  Ryan R. Curtin and
                  Benjamin Moseley and
                  Hung Q. Ngo and
                  XuanLong Nguyen and
                  Dan Olteanu and
                  Maximilian Schleich},
  editor       = {Dan Suciu and
                  Sebastian Skritek and
                  Christoph Koch},
  title        = {On Functional Aggregate Queries with Additive Inequalities},
  booktitle    = {Proceedings of the 38th {ACM} {SIGMOD-SIGACT-SIGAI} Symposium on Principles
                  of Database Systems, {PODS} 2019, Amsterdam, The Netherlands, June
                  30 - July 5, 2019},
  pages        = {414--431},
  publisher    = {{ACM}},
  year         = {2019},
  url          = {https://doi.org/10.1145/3294052.3319694},
  doi          = {10.1145/3294052.3319694},
  bibsource    = {dblp computer science bibliography, https://dblp.org}
}

@article{DBLP:journals/talg/CyganMWW19,
  author       = {Marek Cygan and
                  Marcin Mucha and
                  Karol Wegrzycki and
                  Michal Wlodarczyk},
  title        = {On Problems Equivalent to (min, +)-Convolution},
  journal      = {{ACM} Trans. Algorithms},
  volume       = {15},
  number       = {1},
  pages        = {14:1--14:25},
  year         = {2019},
  url          = {https://doi.org/10.1145/3293465},
  doi          = {10.1145/3293465},
  bibsource    = {dblp computer science bibliography, https://dblp.org}
}

@article{DBLP:journals/tods/KhamisCMNNOS20,
  author       = {Mahmoud Abo Khamis and
                  Ryan R. Curtin and
                  Benjamin Moseley and
                  Hung Q. Ngo and
                  XuanLong Nguyen and
                  Dan Olteanu and
                  Maximilian Schleich},
  title        = {Functional Aggregate Queries with Additive Inequalities},
  journal      = {{ACM} Trans. Database Syst.},
  volume       = {45},
  number       = {4},
  pages        = {17:1--17:41},
  year         = {2020},
  url          = {https://doi.org/10.1145/3426865},
  doi          = {10.1145/3426865},
  bibsource    = {dblp computer science bibliography, https://dblp.org}
}

@article{DBLP:journals/taco/TavarageriHAKGU21,
  author       = {Sanket Tavarageri and
                  Alexander Heinecke and
                  Sasikanth Avancha and
                  Bharat Kaul and
                  Gagandeep Goyal and
                  Ramakrishna Upadrasta},
  title        = {PolyDL: Polyhedral Optimizations for Creation of High-performance
                  {DL} Primitives},
  journal      = {{ACM} Trans. Archit. Code Optim.},
  volume       = {18},
  number       = {1},
  pages        = {11:1--11:27},
  year         = {2021},
  url          = {https://doi.org/10.1145/3433103},
  doi          = {10.1145/3433103},
  bibsource    = {dblp computer science bibliography, https://dblp.org}
}

@inproceedings{DBLP:conf/cgo/BaghdadiRRSAZSK19,
  author       = {Riyadh Baghdadi and
                  Jessica Ray and
                  Malek Ben Romdhane and
                  Emanuele Del Sozzo and
                  Abdurrahman Akkas and
                  Yunming Zhang and
                  Patricia Suriana and
                  Shoaib Kamil and
                  Saman P. Amarasinghe},
  editor       = {Mahmut Taylan Kandemir and
                  Alexandra Jimborean and
                  Tipp Moseley},
  title        = {Tiramisu: {A} Polyhedral Compiler for Expressing Fast and Portable
                  Code},
  booktitle    = {{IEEE/ACM} International Symposium on Code Generation and Optimization,
                  {CGO} 2019, Washington, DC, USA, February 16-20, 2019},
  pages        = {193--205},
  publisher    = {{IEEE}},
  year         = {2019},
  url          = {https://doi.org/10.1109/CGO.2019.8661197},
  doi          = {10.1109/CGO.2019.8661197},
  bibsource    = {dblp computer science bibliography, https://dblp.org}
}

@inproceedings{DBLP:conf/osdi/ChenMJZYSCWHCGK18,
  author       = {Tianqi Chen and
                  Thierry Moreau and
                  Ziheng Jiang and
                  Lianmin Zheng and
                  Eddie Q. Yan and
                  Haichen Shen and
                  Meghan Cowan and
                  Leyuan Wang and
                  Yuwei Hu and
                  Luis Ceze and
                  Carlos Guestrin and
                  Arvind Krishnamurthy},
  editor       = {Andrea C. Arpaci{-}Dusseau and
                  Geoff Voelker},
  title        = {{TVM:} An Automated End-to-End Optimizing Compiler for Deep Learning},
  booktitle    = {13th {USENIX} Symposium on Operating Systems Design and Implementation,
                  {OSDI} 2018, Carlsbad, CA, USA, October 8-10, 2018},
  pages        = {578--594},
  publisher    = {{USENIX} Association},
  year         = {2018},
  url          = {https://www.usenix.org/conference/osdi18/presentation/chen},
  bibsource    = {dblp computer science bibliography, https://dblp.org}
}

@article{DBLP:journals/taco/VerdoolaegeJCGTC13,
  author       = {Sven Verdoolaege and
                  Juan Carlos Juega and
                  Albert Cohen and
                  Jos{\'{e}} Ignacio G{\'{o}}mez and
                  Christian Tenllado and
                  Francky Catthoor},
  title        = {Polyhedral parallel code generation for {CUDA}},
  journal      = {{ACM} Trans. Archit. Code Optim.},
  volume       = {9},
  number       = {4},
  pages        = {54:1--54:23},
  year         = {2013},
  url          = {https://doi.org/10.1145/2400682.2400713},
  doi          = {10.1145/2400682.2400713},
  bibsource    = {dblp computer science bibliography, https://dblp.org}
}

@inproceedings{DBLP:conf/sigmod/MoerkotteN08,
  author       = {Guido Moerkotte and
                  Thomas Neumann},
  editor       = {Jason Tsong{-}Li Wang},
  title        = {Dynamic programming strikes back},
  booktitle    = {Proceedings of the {ACM} {SIGMOD} International Conference on Management
                  of Data, {SIGMOD} 2008, Vancouver, BC, Canada, June 10-12, 2008},
  pages        = {539--552},
  publisher    = {{ACM}},
  year         = {2008},
  url          = {https://doi.org/10.1145/1376616.1376672},
  doi          = {10.1145/1376616.1376672},
  bibsource    = {dblp computer science bibliography, https://dblp.org}
}

@inproceedings{DBLP:conf/otm/KianiS05,
  author       = {Ali Kiani and
                  Nematollaah Shiri},
  editor       = {Robert Meersman and
                  Zahir Tari and
                  Mohand{-}Said Hacid and
                  John Mylopoulos and
                  Barbara Pernici and
                  {\"{O}}zalp Babaoglu and
                  Hans{-}Arno Jacobsen and
                  Joseph P. Loyall and
                  Michael Kifer and
                  Stefano Spaccapietra},
  title        = {Containment of Conjunctive Queries with Arithmetic Expressions},
  booktitle    = {On the Move to Meaningful Internet Systems 2005: CoopIS, DOA, and
                  ODBASE, {OTM} Confederated International Conferences CoopIS, DOA,
                  and {ODBASE} 2005, Agia Napa, Cyprus, October 31 - November 4, 2005,
                  Proceedings, Part {I}},
  series       = {Lecture Notes in Computer Science},
  pages        = {439--452},
  publisher    = {Springer},
  year         = {2005},
  url          = {https://doi.org/10.1007/11575771\_28},
  doi          = {10.1007/11575771\_28},
  bibsource    = {dblp computer science bibliography, https://dblp.org}
}

@inproceedings{DBLP:conf/issads/KianiS05,
  author       = {Ali Kiani and
                  Nematollaah Shiri},
  editor       = {F{\'{e}}lix F. Ramos Corchado and
                  Victor M. Larios{-}Rosillo and
                  Herwig Unger},
  title        = {A Framework for Information Integration with Uncertainty},
  booktitle    = {Advanced Distributed Systems: 5th International School and Symposium,
                  {ISSADS} 2005, Guadalajara, Mexico, January 24-28, 2005, Revised Selected
                  Papers},
  series       = {Lecture Notes in Computer Science},
  pages        = {194--206},
  publisher    = {Springer},
  year         = {2005},
  url          = {https://doi.org/10.1007/11533962\_17},
  doi          = {10.1007/11533962\_17},
  bibsource    = {dblp computer science bibliography, https://dblp.org}
}

@inproceedings{DBLP:conf/ihis/KianiS05,
  author       = {Ali Kiani and
                  Nematollaah Shiri},
  editor       = {Axel Hahn and
                  Sven Abels and
                  Liane Haak},
  title        = {Answering queries with arithmetic expressions in heterogeneous systems},
  booktitle    = {Proceedings of the first international {ACM} workshop on Interoperability
                  of Heterogeneous Information Systems (IHIS'05), {CIKM} Conference,
                  Bremen, Germany, November 4, 2005},
  pages        = {17--24},
  publisher    = {{ACM}},
  year         = {2005},
  url          = {https://doi.org/10.1145/1096967.1096972},
  doi          = {10.1145/1096967.1096972},
  bibsource    = {dblp computer science bibliography, https://dblp.org}
}

@article{DBLP:journals/tods/WangY26,
  author       = {Qichen Wang and
                  Ke Yi},
  title        = {Conjunctive Queries with Comparisons},
  journal      = {{ACM} Trans. Database Syst.},
  volume       = {51},
  number       = {2},
  pages        = {7:1--7:37},
  year         = {2026},
  url          = {https://doi.org/10.1145/3769424},
  doi          = {10.1145/3769424},
  bibsource    = {dblp computer science bibliography, https://dblp.org}
}

@article{DBLP:journals/sigmod/0001023,
  author       = {Qichen Wang and
                  Ke Yi},
  title        = {Conjunctive Queries with Comparisons},
  journal      = {{SIGMOD} Rec.},
  volume       = {52},
  number       = {1},
  pages        = {54--62},
  year         = {2023},
  url          = {https://doi.org/10.1145/3604437.3604450},
  doi          = {10.1145/3604437.3604450},
  bibsource    = {dblp computer science bibliography, https://dblp.org}
}

@article{DBLP:journals/jacm/Klug88,
  author       = {Anthony C. Klug},
  title        = {On conjunctive queries containing inequalities},
  journal      = {J. {ACM}},
  volume       = {35},
  number       = {1},
  pages        = {146--160},
  year         = {1988},
  url          = {https://doi.org/10.1145/42267.42273},
  doi          = {10.1145/42267.42273},
  bibsource    = {dblp computer science bibliography, https://dblp.org}
}

@inproceedings{DBLP:conf/adbis/BrisaboaHPP98,
  author       = {Nieves R. Brisaboa and
                  H{\'{e}}ctor J. Hern{\'{a}}ndez and
                  Jos{\'{e}} R. Param{\'{a}} and
                  Miguel R. Penabad},
  editor       = {Witold Litwin and
                  Tadeusz Morzy and
                  Gottfried Vossen},
  title        = {Containment of Conjunctive Queries with Built-in Predicates with Variables
                  and Constants over any Ordered Domain},
  booktitle    = {Advances in Databases and Information Systems, Second East European
                  Symposium, ADBIS'98, Poznan, Poland, September 7-10, 1998, Proceedings},
  series       = {Lecture Notes in Computer Science},
  pages        = {46--57},
  publisher    = {Springer},
  year         = {1998},
  url          = {https://doi.org/10.1007/BFb0057716},
  doi          = {10.1007/BFB0057716},
  bibsource    = {dblp computer science bibliography, https://dblp.org}
}

@article{DBLP:journals/jcss/IbarraS99,
  author       = {Oscar H. Ibarra and
                  Jianwen Su},
  title        = {A Technique for Proving Decidability of Containment and Equivalence
                  of Linear Constraint Queries},
  journal      = {J. Comput. Syst. Sci.},
  volume       = {59},
  number       = {1},
  pages        = {1--28},
  year         = {1999},
  url          = {https://doi.org/10.1006/jcss.1999.1624},
  doi          = {10.1006/JCSS.1999.1624},
  bibsource    = {dblp computer science bibliography, https://dblp.org}
}

@article{DBLP:journals/amai/Srivastava93,
  author       = {Divesh Srivastava},
  title        = {Subsumption and Indexing in Constraint Query Languages with Linear
                  Arithmetic Constraints},
  journal      = {Ann. Math. Artif. Intell.},
  volume       = {8},
  number       = {3-4},
  pages        = {315--343},
  year         = {1993},
  url          = {https://doi.org/10.1007/BF01530796},
  doi          = {10.1007/BF01530796},
  bibsource    = {dblp computer science bibliography, https://dblp.org}
}

@book{DBLP:books/sp/Kuper00,
  editor       = {Gabriel M. Kuper and
                  Leonid Libkin and
                  Jan Paredaens},
  title        = {Constraint Databases},
  publisher    = {Springer},
  year         = {2000},
  url          = {https://doi.org/10.1007/978-3-662-04031-7},
  doi          = {10.1007/978-3-662-04031-7},
  isbn         = {3-540-66151-4},
  bibsource    = {dblp computer science bibliography, https://dblp.org}
}

@inproceedings{DBLP:conf/edbt/AfratiLM04,
  author       = {Foto N. Afrati and
                  Chen Li and
                  Prasenjit Mitra},
  editor       = {Elisa Bertino and
                  Stavros Christodoulakis and
                  Dimitris Plexousakis and
                  Vassilis Christophides and
                  Manolis Koubarakis and
                  Klemens B{\"{o}}hm and
                  Elena Ferrari},
  title        = {On Containment of Conjunctive Queries with Arithmetic Comparisons},
  booktitle    = {Advances in Database Technology - {EDBT} 2004, 9th International Conference
                  on Extending Database Technology, Heraklion, Crete, Greece, March
                  14-18, 2004, Proceedings},
  series       = {Lecture Notes in Computer Science},
  pages        = {459--476},
  publisher    = {Springer},
  year         = {2004},
  url          = {https://doi.org/10.1007/978-3-540-24741-8\_27},
  doi          = {10.1007/978-3-540-24741-8\_27},
  bibsource    = {dblp computer science bibliography, https://dblp.org}
}

@article{DBLP:journals/tcs/AfratiLM06,
  author       = {Foto N. Afrati and
                  Chen Li and
                  Prasenjit Mitra},
  title        = {Rewriting queries using views in the presence of arithmetic comparisons},
  journal      = {Theor. Comput. Sci.},
  volume       = {368},
  number       = {1-2},
  pages        = {88--123},
  year         = {2006},
  url          = {https://doi.org/10.1016/j.tcs.2006.08.020},
  doi          = {10.1016/J.TCS.2006.08.020},
  bibsource    = {dblp computer science bibliography, https://dblp.org}
}

@article{DBLP:journals/mst/KoutrisMRS17,
  author       = {Paraschos Koutris and
                  Tova Milo and
                  Sudeepa Roy and
                  Dan Suciu},
  title        = {Answering Conjunctive Queries with Inequalities},
  journal      = {Theory Comput. Syst.},
  volume       = {61},
  number       = {1},
  pages        = {2--30},
  year         = {2017},
  url          = {https://doi.org/10.1007/s00224-016-9684-2},
  doi          = {10.1007/S00224-016-9684-2},
  bibsource    = {dblp computer science bibliography, https://dblp.org}
}

@inproceedings{DBLP:conf/icdt/KhamisNOS19,
  author       = {Mahmoud Abo Khamis and
                  Hung Q. Ngo and
                  Dan Olteanu and
                  Dan Suciu},
  editor       = {Pablo Barcel{\'{o}} and
                  Marco Calautti},
  title        = {Boolean Tensor Decomposition for Conjunctive Queries with Negation},
  booktitle    = {22nd International Conference on Database Theory, {ICDT} 2019, Lisbon,
                  Portugal, March 26-28, 2019},
  series       = {LIPIcs},
  pages        = {21:1--21:19},
  publisher    = {Schloss Dagstuhl - Leibniz-Zentrum f{\"{u}}r Informatik},
  year         = {2019},
  url          = {https://doi.org/10.4230/LIPIcs.ICDT.2019.21},
  doi          = {10.4230/LIPICS.ICDT.2019.21},
  bibsource    = {dblp computer science bibliography, https://dblp.org}
}

@inproceedings{DBLP:conf/icalp/AbboudL13,
  author       = {Amir Abboud and
                  Kevin Lewi},
  editor       = {Fedor V. Fomin and
                  Rusins Freivalds and
                  Marta Z. Kwiatkowska and
                  David Peleg},
  title        = {Exact Weight Subgraphs and the k-Sum Conjecture},
  booktitle    = {Automata, Languages, and Programming - 40th International Colloquium,
                  {ICALP} 2013, Riga, Latvia, July 8-12, 2013, Proceedings, Part {I}},
  series       = {Lecture Notes in Computer Science},
  pages        = {1--12},
  publisher    = {Springer},
  year         = {2013},
  url          = {https://doi.org/10.1007/978-3-642-39206-1\_1},
  doi          = {10.1007/978-3-642-39206-1\_1},
  bibsource    = {dblp computer science bibliography, https://dblp.org}
}

@inproceedings{DBLP:conf/icalp/LincolnWWW16,
  author       = {Andrea Lincoln and
                  Virginia {Vassilevska Williams} and
                  Joshua R. Wang and
                  R. Ryan Williams},
  editor       = {Ioannis Chatzigiannakis and
                  Michael Mitzenmacher and
                  Yuval Rabani and
                  Davide Sangiorgi},
  title        = {Deterministic Time-Space Trade-Offs for k-SUM},
  booktitle    = {43rd International Colloquium on Automata, Languages, and Programming,
                  {ICALP} 2016, Rome, Italy, July 11-15, 2016},
  series       = {LIPIcs},
  pages        = {58:1--58:14},
  publisher    = {Schloss Dagstuhl - Leibniz-Zentrum f{\"{u}}r Informatik},
  year         = {2016},
  url          = {https://doi.org/10.4230/LIPIcs.ICALP.2016.58},
  doi          = {10.4230/LIPICS.ICALP.2016.58},
  bibsource    = {dblp computer science bibliography, https://dblp.org}
}

@inproceedings{DBLP:conf/pldi/Ragan-KelleyBAPDA13,
  author       = {Jonathan Ragan{-}Kelley and
                  Connelly Barnes and
                  Andrew Adams and
                  Sylvain Paris and
                  Fr{\'{e}}do Durand and
                  Saman P. Amarasinghe},
  editor       = {Hans{-}Juergen Boehm and
                  Cormac Flanagan},
  title        = {Halide: a language and compiler for optimizing parallelism, locality,
                  and recomputation in image processing pipelines},
  booktitle    = {{ACM} {SIGPLAN} Conference on Programming Language Design and Implementation,
                  {PLDI} '13, Seattle, WA, USA, June 16-19, 2013},
  pages        = {519--530},
  publisher    = {{ACM}},
  year         = {2013},
  url          = {https://doi.org/10.1145/2491956.2462176},
  doi          = {10.1145/2491956.2462176},
  bibsource    = {dblp computer science bibliography, https://dblp.org}
}

@inproceedings{DBLP:conf/nips/KrizhevskySH12,
  author       = {Alex Krizhevsky and
                  Ilya Sutskever and
                  Geoffrey E. Hinton},
  editor       = {Peter L. Bartlett and
                  Fernando C. N. Pereira and
                  Christopher J. C. Burges and
                  L{\'{e}}on Bottou and
                  Kilian Q. Weinberger},
  title        = {ImageNet Classification with Deep Convolutional Neural Networks},
  booktitle    = {Advances in Neural Information Processing Systems 25: 26th Annual
                  Conference on Neural Information Processing Systems 2012. Proceedings
                  of a meeting held December 3-6, 2012, Lake Tahoe, Nevada, United States},
  pages        = {1106--1114},
  year         = {2012},
  url          = {https://proceedings.neurips.cc/paper/2012/hash/c399862d3b9d6b76c8436e924a68c45b-Abstract.html},
  bibsource    = {dblp computer science bibliography, https://dblp.org}
}

@inproceedings{DBLP:conf/sc/DattaMVWCOPSY08,
  author       = {Kaushik Datta and
                  Mark Murphy and
                  Vasily Volkov and
                  Samuel Williams and
                  Jonathan Carter and
                  Leonid Oliker and
                  David A. Patterson and
                  John Shalf and
                  Katherine A. Yelick},
  title        = {Stencil computation optimization and auto-tuning on state-of-the-art
                  multicore architectures},
  booktitle    = {Proceedings of the {ACM/IEEE} Conference on High Performance Computing,
                  {SC} 2008, November 15-21, 2008, Austin, Texas, {USA}},
  pages        = {4},
  publisher    = {{IEEE/ACM}},
  year         = {2008},
  url          = {https://doi.org/10.1109/SC.2008.5222004},
  doi          = {10.1109/SC.2008.5222004},
  bibsource    = {dblp computer science bibliography, https://dblp.org}
}

@inproceedings{DBLP:conf/ics/HolewinskiPS12,
  author       = {Justin Holewinski and
                  Louis{-}No{\"{e}}l Pouchet and
                  P. Sadayappan},
  editor       = {Utpal Banerjee and
                  Kyle A. Gallivan and
                  Gianfranco Bilardi and
                  Manolis Katevenis},
  title        = {High-performance code generation for stencil computations on {GPU}
                  architectures},
  booktitle    = {International Conference on Supercomputing, ICS'12, Venice, Italy,
                  June 25-29, 2012},
  pages        = {311--320},
  publisher    = {{ACM}},
  year         = {2012},
  url          = {https://doi.org/10.1145/2304576.2304619},
  doi          = {10.1145/2304576.2304619},
  bibsource    = {dblp computer science bibliography, https://dblp.org}
}

@inproceedings{getreuer2009contour,
  title={Contour stencils for edge-adaptive image interpolation},
  author={Getreuer, Pascal},
  booktitle={Visual Communications and Image Processing 2009},
  volume={7257},
  pages={394--406},
  year={2009},
  organization={SPIE}
}

@article{DBLP:journals/toms/LuporiniLLKWHYK20,
  author       = {Fabio Luporini and
                  Mathias Louboutin and
                  Michael Lange and
                  Navjot Kukreja and
                  Philipp A. Witte and
                  Jan H{\"{u}}ckelheim and
                  Charles Yount and
                  Paul H. J. Kelly and
                  Felix J. Herrmann and
                  Gerard J. Gorman},
  title        = {Architecture and Performance of Devito, a System for Automated Stencil
                  Computation},
  journal      = {{ACM} Trans. Math. Softw.},
  volume       = {46},
  number       = {1},
  pages        = {6:1--6:28},
  year         = {2020},
  url          = {https://doi.org/10.1145/3374916},
  doi          = {10.1145/3374916},
  bibsource    = {dblp computer science bibliography, https://dblp.org}
}
